\newcommand\norm[1]{\left\lVert#1\right\rVert}
\providecommand{\customgenericname}{}
\newcommand{\newcustomtheorem}[2]{%
  \newenvironment{#1}[1]
  {%
   \renewcommand\customgenericname{#2}%
   \renewcommand\theinnercustomgeneric{##1}%
   \innercustomgeneric
  }
  {\endinnercustomgeneric}
}
\newtheorem{lemma}{Lemma}[section]
\newtheorem{proposition}{Proposition}[section]
\newtheorem{remark}{Remark}[section]
\newenvironment{proof}[1][Proof]{\noindent \textbf{#1.} }{\  \rule{0.5em}{0.5em}}
\newtheorem{assumption}{Assumption}[section]
\begin{document}

	\def\spacingset#1{\renewcommand{\baselinestretch}%
		{#1}\small\normalsize} \spacingset{1}

	%%%%%%%%%%%%%%%%%%%%%%%%%%%%%%%%%%%%%%%%%%%%%%%%%%%%%%%%%%%%%%%%%%%%%%%%%%%%%%
	
\title{ 
\huge Matching Estimators with Few Treated and Many Control Observations\footnote{   The author gratefully acknowledges the comments and suggestions of Luis Alvarez, Ricardo Paes de Barros, Lucas Finamor, Sergio Firpo, Michael Jansson, Ricardo Masini, Cristine Pinto, Vitor Possebom, Pedro Sant'Anna, Azeem Shaikh, and participants of the 2017 California Econometrics Conference and of the Rio-Sao Paulo Econometrics Conference. Lucas Barros, Deivis Angeli, and Raoni de Oliveira provided outstanding research assistance. }}

\author{
Bruno Ferman\footnote{email: bruno.ferman@fgv.br; address: Sao Paulo School of Economics, FGV, Rua Itapeva no. 474, Sao Paulo - Brazil, 01332-000; telephone number: +55 11 3799-3350}  \\
Sao Paulo School of Economics - FGV \\
\\
\footnotesize
First Draft: May, 2017 \\
\footnotesize
This Draft: March, 2021
}

\date{}
\maketitle

	\newsavebox{\tablebox} \newlength{\tableboxwidth}
	
	%%%%%%%%%%%%%%%%%%%%%%%%%%%%%%%%%%
	%%       Abstract                %
	%%%%%%%%%%%%%%%%%%%%%%%%%%%%%%%%%%

	\begin{center}

\href{https://sites.google.com/site/brunoferman/research}{Please click here for the most recent version}

\

\

\textbf{Abstract}

\end{center}

   We analyze the properties of matching estimators when there are few  treated, but many control  observations. We show that, under standard assumptions, the nearest neighbor matching estimator for the average treatment effect on the treated is asymptotically unbiased in this framework. However, when the number of treated observations is fixed, the estimator is not consistent, and it is generally not asymptotically normal. Since standard inference methods are inadequate, we propose alternative inference methods, based on the theory of randomization tests under approximate symmetry, that are  asymptotically valid in this framework. We show that these tests are valid under relatively  strong assumptions when the number of treated observations is fixed, and under weaker assumptions when the number of treated observations increases, but at a lower rate relative to the number of control observations.

\

	\noindent%
	{\it Keywords:}  matching estimators, treatment effects, hypothesis testing, randomization inference, synthetic control estimator
	
	\
	
	\noindent%	
	{\it JEL Codes:} C12; C13; C21
	\vfill

	\newpage
	\spacingset{1.45} % DON'T change the spacing!
	
	%%%%%%%%%%%%%%%%%%%%%%%%%%%%%%%%%
	%Introduction %                                     
	%%%%%%%%%%%%%%%%%%%%%%%%%%%%%%%%

\onehalfspacing

  \section{Introduction}

Matching estimators have been widely used for the estimation of treatment effects under a conditional independence assumption (CIA).\footnote{See, for example, \cite{Imbens2004}, \cite{Imbens_Wooldrige}, and \cite{Imbens_examples} for reviews.} In many cases, matching estimators have been applied in settings where (1) the interest is on the average treatment effect for the treated (ATT), and (2) there is a large reservoir of potential controls (see \cite{Imbens_Wooldrige}). \cite{AI_2006} (henceforth, AI) study the asymptotic properties of nearest-neighbor (NN) matching estimators when the number of control observations ($N_0$) grows at a faster rate than the number of treated observations ($N_1$).
However, their asymptotic theory still depends on both the number of treated and control observations going to infinity. Therefore, reliance on such asymptotic approximation should be considered with caution when the number of treated observations is small, even if the total number of observations is large.

In this paper, we analyze the properties of  NN matching estimators when  $N_1$  is fixed, while $N_0$ goes to infinity. We first show that the NN matching estimator is asymptotically unbiased for the ATT, under standard assumptions used in the literature on estimation of treatment effects under selection on observables.\footnote{This is true whether  asymptotic unbiasedness is defined  based on the limit of the expected value of the estimator, or based on the expected value of the asymptotic distribution.} This  is consistent with the conclusions from  AI, who show that the conditional bias of the NN matching estimator can be ignored, provided that $N_0$ increases fast enough, relative to $N_1$. In their setting, the NN matching estimator is consistent and asymptotically normal. In our setting, however, the variance of the estimator does not converge to zero, and  the estimator will not generally be asymptotically normal.\footnote{Our setting is  different from the case of limited overlap considered by \cite{KT}. The problem we analyze can arise even when the overlap condition considered by AI in their Assumption 2$'$(ii) is satisfied.  The difference relative to the case considered by AI  is that $N_1$ remains fixed, so it is not possible to apply a law of large numbers and  a central limit theorem on the average of the errors of the treated observations.       } Our theory complements the theory developed by AI, providing a better approximation to settings in which  there is a  larger number of control  relative to treated observations, but $N_1$ is not large enough, so that we cannot rely on asymptotic results in which $N_1$ goes to infinity.\footnote{The finite sample properties of matching and other related estimators have been evaluated  in  simulations by, for example,  \cite{Frolich}, \cite{Busso},  \cite{Huber}, and \cite{Bodory}. In contrast to their approach, we provide theoretical and simulation results holding the number of treated observations fixed, but relying on the number of control observations going to infinity.}

The fact that the NN matching estimator is not asymptotically normal, in our setting, poses important challenges when it comes to inference. Inference based on the asymptotic distribution of the matching estimator derived by AI should not provide a good approximation when $N_1$ is very small, even if there are many control observations. The bootstrap procedure proposed by \cite{Otsu} also relies on the number of both treated and control observations going to infinity. \cite{Leung} consider a  finite population setting with limited overlap, where the probability of treatment may converge to zero for some strata. While they provide conditions in which standard inference methods remain asymptotically valid in this case, our setting with $N_1$ fixed would not satisfy their conditions. \cite{Rothe} provides robust confidence intervals for average treatment effects under limited overlap. For the case with continuous covariates, he combines his method with subclassification on the propensity score. However, with few treated and many control observations, it would not be possible to reliably estimate a propensity score. Moreover, while \cite{armstrong2017finitesample} propose confidence intervals that are asymptotically valid even when $\sqrt{n}-$inference is not possible, the conditions they consider for this result are not satisfied in our setting.\footnote{\cite{armstrong2017finitesample}  present  finite-sample results considering a setting in which errors are normal with known variance. Then they relax these conditions and consider a feasible version of their confidence intervals that is asymptotically valid. However, with $N_1$ fixed, we cannot have condition (21) in their paper being satisfied. Therefore, the results from  their Theorem 4.2 cannot be directly applied to our setting. }
  Finally, for finite samples, \cite{rosenbaum1984} and \cite{rosenbaum2002} consider permutation tests for observational studies under strong ignorability. However, these tests rely on restrictive assumptions.\footnote{\cite{rosenbaum1984} assumes that the propensity score follows a logit model, while \cite{rosenbaum2002} assumes that observations are matched in pairs such that the probability of treatment assignment is the same conditional on the pair. }  

Given the limitations of existing inference methods for the setting we analyze, we  consider alternative inference methods based on the theory of randomization tests under an approximate symmetry assumption, developed by \cite{Canay}. We focus on a test based on sign changes.  We show that, under symmetry assumptions on the errors and on the heterogeneous treatment effects, this test provides asymptotically valid hypothesis testing for the ATT when $N_0 \rightarrow \infty$,  even when $N_1$ is fixed.  When $N_1$ increases, but at a lower rate than $N_0$, we show that  this test is asymptotically valid even when we relax such symmetry conditions. Therefore, this test works with very few treated observations under relatively strong assumptions, and with a larger number of treated observations under weaker assumptions. We consider in Appendix \ref{Section_permutation} an alternative test based on permutations, which also has the property of being valid under stronger assumptions when $N_1$ is fixed, and under weaker assumptions when $N_1$ increases.

The remainder of this paper proceeds as follows. We present our theoretical setup in Section \ref{setting}.  In Section \ref{Section_unbiased}, we derive the asymptotic distribution of the NN matching estimator, and derive conditions under which it is asymptotically unbiased in a setting with fixed $N_1$.  In Section \ref{inference}, we consider an alternative inference method that is asymptotically valid when $N_0 \rightarrow \infty$, while $N_1$ remains fixed. We also consider the properties of this test when $N_1$  increases.  In Section \ref{MC}, we present  Monte Carlo (MC) simulations. In Section \ref{comparison}, we contrast the different inference procedures in light of the theoretical results presented in Section \ref{inference} and the simulations presented in Section \ref{MC}, providing guidance on which method should be chosen depending on the setting. We present in  Section \ref{empirical_application} an empirical illustration based on the ``Jovem de Futuro'' program, which provides an example in which matching estimators could be used in settings with few treated and many control observations. Concluding remarks, including a discussion on the implications of our results for other types of matching estimators and for Synthetic Control applications, are presented in Section \ref{conclusion}.

\section{Setting and Notation} \label{setting}

We are interested in estimating the effect of a binary treatment ($W$) on some outcome ($Y$). Following \cite{Rubin1973}, we define $Y(0)$ as the potential outcome under no exposure to treatment, and $Y(1)$ as the potential outcome under exposure to treatment. Therefore, the observed outcome is given by   $Y = W Y(1) + (1-W)Y(0)$. In addition to $Y$ and $W$, we also consider a continuous random vector of $k$ real-valued pretreatment variables, which we denote by $X$.\footnote{We discuss in Appendix \ref{Appendix_discrete} cases in which components of $X$ are discrete, and cases in which components of $X$ have a mixed distribution.}

We  observe $N_1$ treated observations obtained by random sampling from the distribution of $(Y,X)|W=1$ and $N_0$ untreated observations obtained by random sampling from the distribution of  $(Y,X)|W=0$. Let $\mathcal{I}_w$ denote the set of indexes for observations with $W_i=w$. 

\begin{assumption}[Sample] \label{sample}

$\{Y_i , X_i , W_i \}_{i \in \mathcal{I}_1 \cup \mathcal{I}_0}$ is a pooled sample of $N_1$ treated ($i \in \mathcal{I}_1$) and $N_0$ untreated  ($i \in \mathcal{I}_0$) observations obtained by random sampling from their respective population counterparts. Furthermore, observations in the treated and control samples are independent.

\end{assumption}

We  consider the case in which  $N_1$ is fixed, while $N_0$ goes to infinity. One possibility is that there is a large set of units that could potentially be treated, but only a finite number of them actually receive treatment. For example, in the empirical application, to be presented in Section \ref{empirical_application}, there is a large number of schools that could potentially receive the treatment, but only a small number of them actually received it. Alternatively, we can imagine that there is a large number of treated units, but we only have data from  a small sample of them. Assumption \ref{sample} is similar to Assumption 3$'$ from AI and from the first condition stated in Theorem 1 from \cite{AI_martingale}, in that the proportions of treated and control observations in the sample may not reflect their proportions in the population.

The goal is   estimating the  ATT, which we denote by 
\begin{eqnarray}
\tau \equiv  \mathbb{E} \left[ Y(1) - Y(0) | W=1 \right]. 
\end{eqnarray}

  We focus on an estimand related to the treatment effect on the treated because, given our setting with  $N_1$ finite and $N_0$ large, we would only have a small number of treated observations to serve as potential neighbors to estimate the counterfactual of the control observations, in case we wanted to estimate the average treatment effect (ATE). Likewise, AI consider the estimation of the ATT when they consider an asymptotic framework in which $N_0$ grows at a faster rate than $N_1$. In Appendix \ref{Appendix_CATT} we discuss the case in which the estimand of interest is the ATT conditional on the realization of the covariates for the treated observations, $\{X_i \}_{i \in \mathcal{I}_1}$.

Assumption \ref{sample} does not impose any restriction on how the distribution of $(Y(1),Y(0),X)$ conditional on $W = w$ depends on $w$.  The following assumption  restricts the way in which these distributions  may differ, which is a standard conditional independence assumption (CIA).

\begin{assumption}[Conditional Independence Assumption] \label{CIA}

$Y(0) \perp\!\!\!\perp W | X$.

\end{assumption}

While Assumption \ref{CIA} restricts that the conditional distribution of $Y(0)$ given $X$ is the same for both treatment and control observations, the density of $X$ conditional on $W=w \in \{0,1\}$   can potentially depend on $w$. This  is what potentially generates bias in a simple comparison of means between treated and control groups, without taking into account that these groups might have different distributions of covariates $X$. We do not need to impose conditional independence of $Y(1)$ because the focus is on the ATT, and not on the average treatment effects.

The next assumption states conditions on the distribution of the covariates. Let $\mathbb{X}_w$ be the support of $X$ conditional on $W=w$, and $f_w : \mathbb{X}_w \rightarrow \mathbb{R}_+$  be the conditional density of $X$ given $W=w$, for $w \in \{0,1\}$.

\begin{assumption}[Distribution of covariates] \label{overlap}

(i) $X \in \mathbb{R}^{k}$ is an absolutely continuous random vector, (ii)  $\mathbb{X}_1 \subseteq \mathbb{X}_0$, where  $\mathbb{X}_1$ and $\mathbb{X}_0$ are compact, (iii) $f_1$ and $f_0$ are differentiable for all points in the interior of their support,  bounded from above in $\mathbb{X}_1$, and $f_0$ is bounded from below in  $\mathbb{X}_1$, and (iv)  for all points in $x \in \mathbb{X}_1$ at least a fraction $\phi$ of any sphere around $x$ belongs to $\mathbb{X}_1$.

\end{assumption}

 This assumption guarantees that, for each $i$ in the treated group, we can find an observation $j$ in the control group with covariates $X_j$ arbitrarily close to $X_i$ when $N_0 \rightarrow \infty$. As we show in Appendix  \ref{proof_sign_changes_asympt}, Assumption \ref{overlap} implies that there is an $\eta > 0$ such that, for  all $x \in \mathbb{X}_1$, $Pr(W = 1 | X=x) < 1-\eta$.\footnote{This propensity score is defined over the distribution of $(Y,X,W)$.   }

The main identification problem arises from the fact that we observe either $Y_i(1)$ or $Y_i(0)$ for each observation $i$. If we had two observations, $i \in \mathcal{I}_1$ and $j \in \mathcal{I}_0$,  with  $X_i=X_j=x$, then, under Assumptions \ref{sample} and \ref{CIA},  $\mathbb{E}[Y_i | X_i=x] - \mathbb{E}[Y_j | X_j=x] = \mathbb{E}[Y(1) | W=1,X=x] - \mathbb{E}[Y(0) | W=0,X=x]=\mathbb{E}[Y(1) - Y(0)| X=x,W=1]$.  The main challenge is that, with a continuous random variable $X$, the probability of finding treated and control observations with exactly the same $X$ is zero. The idea of the NN matching estimator is to input the missing potential outcome of a treated observation $i \in \mathcal{I}_1$ with observations from the control group $j \in \mathcal{I}_0$ that are as close as possible in terms of covariates $X_i$. More specifically, for a distance metric $d(a,b)$ in $\mathbb{R}^k$, let $\mathcal{J}_M(i)$ be the set of $M$ nearest neighbors in the control group of observation $i \in \mathcal{I}_1$. Then the NN matching estimator is given by
\begin{eqnarray}
\hat \tau = \frac{1}{N_1} \sum_{i \in \mathcal{I}_1} \left[ Y_i - \frac{1}{M} \sum_{j \in \mathcal{J}_M(i)} Y_j \right],
\end{eqnarray}
where we consider the matching estimator with replacement. We consider the case in which $d(a,b) = [(a-b)'V(a-b)]^{1/2}$ for some positive definite matrix $V$. In Remark \ref{Mahalanobis} in the appendix we show that our results are also valid if we consider the Mahalanobis distance.

\section{Asymptotic Unbiasedness and Asymptotic Distribution } \label{Section_unbiased}

For $w \in \{ 0,1\}$, we define  $\mu(x,w) = \mathbb{E}[Y| X=x,W=w]$ and $\epsilon = Y - \mu(X,W)$. Since we are focusing on the average treatment effect on the treated, we also define  $\mu_w(x) = \mathbb{E}[Y(w)| X=x,W=1]$.\footnote{AI define $\mu_w(x) = \mathbb{E}[Y(w)| X=x]$. We use a slightly different definition because we focus on the ATT. }   Under Assumption \ref{CIA}, we have that $\mu(x,0) = \mu_0(x)$. 
Using this notation,  the ATT is given by
\begin{eqnarray}
 \tau = \mathbb{E} \left[  \mu_1(X)  - \mu_0 ( X)  | W =1 \right],
\end{eqnarray}
and the NN matching estimator is given by
\begin{eqnarray}
\hat \tau = \frac{1}{N_1}  \sum_{i \in \mathcal{I}_1} \left[ \left( \mu_1(X_i)  - \frac{1}{M} \sum_{j \in \mathcal{J}_M(i)} \mu_0 ( X_j) \right) + \left( \epsilon_i - \frac{1}{M}  \sum_{j \in \mathcal{J}_M(i)} \epsilon_j \right) \right].
\end{eqnarray}

We first show that $\hat \tau$ is an asymptotically unbiased estimator for the ATT when $N_1$ is fixed and $N_0 \rightarrow \infty$, and we derive its asymptotic distribution in this setting. We consider the following assumptions on how the distribution of $Y(0)|X = x$ changes with $x$. 

\begin{assumption}[Distribution of  $Y(0)|X = x$] \label{assumption_mu}

(a)  $\mu_0(x)$ is continuous, and (b) for any $h(y)$ continuous and bounded,  $\tilde h(x) = \mathbb{E}[h(Y(0))|X=x]$ is continuous and bounded.

\end{assumption}

Assumption  \ref{assumption_mu}(a) states that the conditional expectation of $Y(0)$ with respect to $X = x$ is continuous in $x$, which is standard in the matching literature.  The intuition behind   Assumption \ref{assumption_mu}(b) is that the conditional distribution of $Y(0)$ given $X=x$  changes ``smoothly'' with $x$. This guarantees that $Y(0)|(X= x_n)$ converges in distribution to $Y(0)|(X= x)$ if $x_n \rightarrow x$, as we show in Appendix Lemma \ref{Lemma_convergenceY}.\footnote{We use this condition to apply the Portmanteau Lemma in the proof of Appendix Lemma \ref{Lemma_convergenceY}. Other equivalent conditions could be used. }  In Appendix \ref{condition_normal}, we show that this condition is satisfied if, for example, $Y(0)|(X=x) \sim N(\theta(x),\sigma(x))$, where $\theta(x)$ and $\sigma(x)$ are continuous functions of $x$.  

For each $x \in \mathbb{X}_1$, let $\xi_x \sim (Y(1) - \mu_1(x)) | (X=x,W=1)$ and  $\eta_x  \sim (Y(0) - \mu_0(x)) | (X=x,W=0)$. Moreover, let $G(\kappa; x)$ be the CDF of $(\mu_1(x) - \mu_0(x) - \tau) +\xi_x - \frac{1}{M} \sum_{m=1}^M \eta_x^m$, where $\{\eta_x^m \}_{m=1}^M$ are iid copies of $\eta_x$, and  $(\xi_x,\eta^1_x,\cdots,\eta^M_x)$ is mutually  independent.

\begin{proposition} \label{unbiased}

(1) Under Assumptions \ref{sample}, \ref{CIA},  \ref{overlap}, and \ref{assumption_mu}(a), $\mathbb{E}[\hat \tau ] \rightarrow \tau$ when $N_0 \rightarrow \infty$ and $N_1$ is fixed.

(2) Under Assumptions \ref{sample}, \ref{CIA},  \ref{overlap}, and \ref{assumption_mu}(b), 
\begin{eqnarray} \nonumber
\hat \tau \buildrel d \over \rightarrow  \tau + \frac{1}{N_1}  \sum_{i \in \mathcal{I}_1} \kappa_i  \mbox{ when $N_0 \rightarrow \infty$ and $N_1$ is fixed}, 
\end{eqnarray}
where the CDF of $\kappa_i$ is given by $\widetilde G(\kappa) = \int_{x \in \mathbb{X}_1} G(\kappa; x)f_1(x)dx$, and $\mathbb{E}[\kappa_i]=0$.   Moreover, $\{\kappa_i\}_{i \in \mathcal{I}_1}$ is mutually independent.

\end{proposition}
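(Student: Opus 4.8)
The plan is to prove (1) by a mean‑convergence argument and (2) by computing the limit of the characteristic function of $\hat\tau-\tau$ after conditioning on all covariates. For part (1), split $\hat\tau$ into a ``bias'' part $\frac1{N_1}\sum_{i\in\mathcal I_1}\big(\mu_1(X_i)-\frac1M\sum_{j\in\mathcal J_M(i)}\mu_0(X_j)\big)$ and an ``error'' part $\frac1{N_1}\sum_{i\in\mathcal I_1}\big(\epsilon_i-\frac1M\sum_{j\in\mathcal J_M(i)}\epsilon_j\big)$. Since the matched sets $\mathcal J_M(i)$ depend on the covariates only while $\mathbb E[\epsilon\mid X,W]=0$, conditioning on the covariates shows the error part has mean zero. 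For the bias part I would use the fact, guaranteed by Assumption \ref{overlap} (with the formal statement in the appendix), that $\max_{i\in\mathcal I_1}\max_{j\in\mathcal J_M(i)}\norm{X_j-X_i}\to0$ as $N_0\to\infty$, together with continuity and boundedness of $\mu_0$ on the compact set $\mathbb X_0$ (Assumptions \ref{overlap} and \ref{assumption_mu}(a)); dominated convergence then gives $\mathbb E\big[\frac1M\sum_{j\in\mathcal J_M(i)}\mu_0(X_j)\big]\to\mathbb E[\mu_0(X)\mid W=1]$, while $\mathbb E[\mu_1(X_i)]=\mathbb E[\mu_1(X)\mid W=1]$, so $\mathbb E[\hat\tau]\to\tau$.

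For part (2), write $Y_i=\mu_1(X_i)+\xi_{X_i}$ for $i\in\mathcal I_1$ and, using Assumption \ref{CIA}, $Y_j\overset{d}{=}\mu_0(X_j)+\eta_{X_j}$ for $j\in\mathcal I_0$, so that $\hat\tau-\tau=\frac1{N_1}\sum_{i\in\mathcal I_1}\big(Y_i-\frac1M\sum_{j\in\mathcal J_M(i)}Y_j\big)-\tau$. Let $A_{N_0}$ be the event that the matched sets $\{\mathcal J_M(i)\}_{i\in\mathcal I_1}$ are pairwise disjoint; since $X$ is continuously distributed the $X_i$, $i\in\mathcal I_1$, are a.s.\ distinct and, as the maximal $M$‑th nearest‑neighbor distance vanishes, $\Pr(A_{N_0})\to1$. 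Conditioning on the $\sigma$‑field $\mathcal C$ generated by all covariates, on $A_{N_0}$ the variables $\{Y_i\}_{i\in\mathcal I_1}$ and $\{Y_j:j\in\bigcup_i\mathcal J_M(i)\}$ are mutually independent by Assumption \ref{sample}, with conditional laws $Y(1)\mid(X=X_i,W=1)$ and $Y(0)\mid(X=X_j)$; writing $\psi^1_x,\psi^0_x$ for their characteristic functions,
\[
\mathbb E\!\left[e^{\mathrm{i}t(\hat\tau-\tau)}\mid\mathcal C\right]\mathbf{1}_{A_{N_0}}=e^{-\mathrm{i}t\tau}\prod_{i\in\mathcal I_1}\Big(\psi^1_{X_i}\big(\tfrac t{N_1}\big)\prod_{j\in\mathcal J_M(i)}\psi^0_{X_j}\big(-\tfrac t{MN_1}\big)\Big)\mathbf{1}_{A_{N_0}}.
\]

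Then I would let $N_0\to\infty$: on $A_{N_0}$ each $X_j$ with $j\in\mathcal J_M(i)$ converges to $X_i$, and Appendix Lemma \ref{Lemma_convergenceY} (via Assumption \ref{assumption_mu}(b)) gives $Y(0)\mid(X=x_n)\xrightarrow{d}Y(0)\mid(X=X_i)$ whenever $x_n\to X_i$, so $\psi^0_{X_j}(s)\to\psi^0_{X_i}(s)$ for every fixed $s$. Hence the conditional characteristic function converges pointwise to $e^{-\mathrm{i}t\tau}\prod_{i\in\mathcal I_1}\psi^1_{X_i}(\tfrac t{N_1})\big(\psi^0_{X_i}(-\tfrac t{MN_1})\big)^M$; since characteristic functions are bounded by $1$ and $\Pr(A_{N_0})\to1$, bounded convergence passes the limit through the outer expectation, and since the $X_i$, $i\in\mathcal I_1$, are i.i.d.\ $\sim f_1$ the limit factors as $\big(\int_{\mathbb X_1}e^{-\mathrm{i}t\tau/N_1}\psi^1_x(\tfrac t{N_1})(\psi^0_x(-\tfrac t{MN_1}))^M f_1(x)\,dx\big)^{N_1}$. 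Decomposing $\psi^1_x,\psi^0_x$ through $\mu_1(x),\mu_0(x)$ shows the integrand is the characteristic function, at $t/N_1$, of the law $G(\,\cdot\,;x)$, so the integral is the characteristic function of $\widetilde G$ at $t/N_1$, and the whole expression is the characteristic function of $\frac1{N_1}\sum_{i\in\mathcal I_1}\kappa_i$ with $\{\kappa_i\}$ i.i.d.\ $\sim\widetilde G$. Lévy's continuity theorem then yields the convergence in distribution, the product form gives mutual independence of the $\kappa_i$, and $\mathbb E[\kappa_i]=0$ follows from $\mathbb E[\xi_x]=\mathbb E[\eta_x]=0$ and $\mathbb E[\mu_1(X)-\mu_0(X)\mid W=1]=\tau$.

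The step I expect to be the main obstacle is the joint handling of the matched controls: forcing $\max_{i\in\mathcal I_1}\max_{j\in\mathcal J_M(i)}\norm{X_j-X_i}\to0$ (this is where Assumption \ref{overlap} and its supporting appendix lemma are used) and then turning this into convergence in distribution of $Y(0)\mid(X=X_j)$ to $Y(0)\mid(X=X_i)$ (the role of Assumption \ref{assumption_mu}(b) and Appendix Lemma \ref{Lemma_convergenceY}), while controlling the fact that matched sets may a priori overlap (handled by restricting to $A_{N_0}$). Everything after that is routine manipulation of characteristic functions under the i.i.d.\ sampling structure.
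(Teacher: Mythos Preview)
Your Part~(1) is essentially the paper's argument: split $\hat\tau$ into the ``bias'' and ``error'' pieces, kill the error by conditioning on covariates, and handle the bias via $X^i_{(m)}\overset{p}{\to}X_i$ (Lemma~\ref{Lemma_convergenceX}) plus continuity/boundedness of $\mu_0$ on the compact $\mathbb X_0$ and dominated convergence.

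Your Part~(2) is correct but takes a genuinely different route. The paper works termwise with CDFs: for each $i\in\mathcal I_1$ it shows $Pr\big(Y_i-\tfrac1M\sum_m Y^i_{(m)}-\tau\le c\mid X_i=\bar x\big)\to G(c;\bar x)$ via Lemma~\ref{Lemma_convergenceY}, then integrates over $\bar x$ (with a careful argument about continuity points of $\widetilde G$) to get the marginal limit, and only afterwards invokes Lemma~\ref{Lemma_no_shared} to argue mutual independence of the $\kappa_i$. You instead compute the \emph{joint} characteristic function of $\hat\tau-\tau$: condition on all covariates, restrict to the disjoint-matches event $A_{N_0}$ (your version of Lemma~\ref{Lemma_no_shared}), factor, pass to the limit using continuity of $x\mapsto\psi^0_x(s)$ (which is exactly Assumption~\ref{assumption_mu}(b) with $h(y)=e^{\mathrm i s y}$), and finish with L\'evy's theorem. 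This buys you two things: (i) the product structure delivers independence of the $\kappa_i$ automatically, rather than as a separate step; and (ii) you avoid the paper's continuity-point bookkeeping for $\widetilde G$, since characteristic functions converge pointwise under weak convergence without exceptional sets. The paper's CDF route is more elementary and makes the per-unit limit $G(\cdot;x)$ explicit along the way, which it later reuses (e.g., in Proposition~\ref{test}); your route is slicker for the joint statement but would require unpacking if one later needed the marginal conditional limits.
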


Let $X^i_{(m)}$ be the covariate value of the $m$-closest match to observation $i \in \mathcal{I}_1$. The main intuition for the results in Proposition \ref{unbiased} is that, for a fixed $X_i=\bar x$, $X^i_{(m)} \buildrel p \over \rightarrow \bar x$  when $N_0 \rightarrow \infty$, because, holding $M$ fixed, we will always be able to find $M$ observations in the control group that are arbitrarily close to $\bar x$. Independence of  $\kappa_i$ follows from the fact that the probability of two treated observations sharing the same nearest neighbor converges to zero. See details in  Appendix \ref{proof_unbiased}.

Proposition \ref{unbiased} shows that the expected value of the NN matching estimator converges to $\tau$.   We also derive in Proposition \ref{unbiased}  the asymptotic distribution of $\hat \tau$, which has expected value equal to $\tau$. Therefore, the NN matching estimator is asymptotically unbiased whether we define asymptotic unbiasedness as  $\mathbb{E}[\hat \tau] \rightarrow \tau$, or as  $\hat \tau  \buildrel d \over \rightarrow \tau + \widetilde \kappa$, with $\mathbb{E}[\widetilde \kappa]=0$.

\begin{remark}
\normalfont
 With $N_1$ fixed, the estimator is not consistent. This happens because, with $N_1$ fixed, we cannot apply a law of large numbers to the  average of the error of the treated observations. For the same reason, the matching estimator will not generally be asymptotically normal.  These conclusions are similar to the ones derived by \cite{CT} for  differences-in-differences estimators with few treated groups.

\end{remark}

\begin{remark} \label{remark_bias}
\normalfont
Consider a bias-corrected estimator suggested by \cite{AI_2011},
\begin{eqnarray} \label{biasadj}
\hat \tau_{biasadj} =  \frac{1}{N_1} \sum_{i \in \mathcal{I}_1} \left[ Y_i - \frac{1}{M} \sum_{m=1}^M \left( Y_j + \hat \mu_0(X_i) - \hat \mu_0(X^i_{(m)})  \right) \right],
\end{eqnarray}
where $\hat \mu_0(x)$ is an estimator for $\mu_0(x) $, and $X^i_{(m)}$ be the covariate value of the $m$-closest match to observation $i$. If the conditions on $\hat \mu_0(x)$ considered by \cite{AI_2011} are satisfied, then we can also guarantee that  $\hat \tau_{biasadj} $ has the same asymptotic distribution as $\hat \tau$. The intuition is that $\hat \mu_0(X_i) - \hat \mu_0(X^i_{(m)})$ converges in probability to zero when $N_0 \rightarrow \infty$, because $X^i_{(m)} \buildrel p \over \rightarrow X_i$. 

\end{remark}

\begin{remark} 
\normalfont

We consider an asymptotic framework in which $M$ is held fixed, while $N_0 \rightarrow \infty$, which is similar to what AI call fixed-$M$ asymptotics in their setting. As argued by AI, the motivation for such fixed-$M$ asymptotics is to provide an approximation to the sampling distribution of matching estimators with a small number of matches. Matching estimators using few matches have been widely used in applied work (see AI). Moreover, \cite{Imbens_Rubin} argue  against using matching estimators with many matches, as this would tend to increase the bias of the resulting estimator, while the marginal gains in precision of increasing the number of matches are limited. In Section \ref{conclusion} we consider the implication of our findings for other types of matching estimators.

\end{remark}

\section{ Inference  } \label{inference}

The fact that the NN matching estimator is not generally asymptotically normal when $N_1$ is fixed and $N_0 \rightarrow \infty$ poses an important challenge when it comes to inference. In particular, inference based on the asymptotically normal distribution  derived by AI, or on the bootstrap procedure suggested by \cite{Otsu}, should not provide a good approximation in our setting, as the asymptotic theory behind these methods relies on both $N_1$ and $N_0$ going to infinity. We therefore consider  alternative inference methods  based on the theory of randomization tests under an approximate symmetry assumption, developed by \cite{Canay}. We focus on a test based on sign changes, while in in Appendix  \ref{Section_permutation} we consider an alternative test based on permutations.  We consider the problem of testing the null hypothesis $H_0: \tau = c$.

Without loss of generality, let $i=1,...,N_1$ be the treated observations, and consider a function of the data given by
\begin{eqnarray}
S_{N_0} = \left( \hat \tau_1^{N_0},...,\hat \tau_{N_1}^{N_0}   \right)'
\end{eqnarray}
where $\hat \tau_i^{N_0} = \left(Y_i - \frac{1}{M}  \sum_{j \in \mathcal{J}_M(i)} Y_j \right)- c$. Each $\hat \tau_i^{N_0}$ depends on the $M$ nearest neighbors of observation $i$, so its distribution  depends on $N_0$.

Following \cite{Canay}, we consider a test statistic given by
\begin{eqnarray}
T(S_{N_0}) = \frac{| \hat \tau^{N_0} | }{\sqrt{\frac{1}{N_1-1}\sum_{i=1}^{N_1}(\hat \tau_i^{N_0} - \hat \tau)^2}},
\end{eqnarray}
where  $\hat \tau^{N_0} = \frac{1}{N_1} \sum_{i \in  \mathcal{I}_1} \hat \tau_i^{N_0}$. Note that $\hat \tau^{N_0} = \hat \tau$ if $c=0$.

We consider  the group of transformations given by $\textbf{G} = \{ -1,1 \}^{N_1}$, where $gS_{N_0} = \left( g_1\hat \tau_1^{N_0},...,g_{N_1}\hat \tau_{N_1}^{N_0}   \right)'$. Let $K = |\textbf{G}|$ and denote by
\begin{eqnarray}
T^{(1)}(S_{N_0}) \leq T^{(2)}(S_{N_0}) \leq ... \leq T^{(K)}(S_{N_0})
\end{eqnarray}
the ordered values of $\{ T(gS_{N_0}) : g \in \textbf{G} \}$. Let $k=\lceil K(1-\alpha) \rceil$, where $\alpha$ is the significance level of the test. Then the  test is given by
\begin{eqnarray}  \label{decision}
\phi(S_{N_0}) = \begin{cases} 1 & \mbox{ if } T(S_{N_0}) > T^{(k)}(S_{N_0})    \\ 0 & \mbox{ if } T(S_{N_0}) \leq T^{(k)}(S_{N_0}).   \end{cases}
\end{eqnarray}

In words, we calculate the test statistic $T(gS_{N_0} )$ for all possible $gS_{N_0} =  \left( g_1\hat \tau_1^{N_0},...,g_{N_1}\hat \tau_{N_1}^{N_0}   \right)'$, and then we compare the actual test statistic  $T(S_{N_0} )$ with the distribution $\{  T(gS_{N_0} ) : g \in \textbf{G}  \}$. We first show validity of such test when $N_0 \rightarrow \infty$ and $N_1$ is fixed under symmetry conditions on the distribution of potential outcomes and on the distribution of heterogeneous treatment effects.

\begin{assumption}[Symmetry] \label{Assumption_symmetry}
\label{Assumption_symmetry} (i) $Y(w) | (X=x,W=w)$ is symmetric around its mean for $w \in \{0,1\}$ and for all $x \in \mathbb{X}_1$, and (ii) the distribution of $\mu_1(X) - \mu_0(X)$ conditional on $W=1$ is symmetric around $ \tau$.

\end{assumption}

While this is a strong assumption, the condition that  potential outcomes, conditional on $X$, are symmetric can be justified in settings in which observation $i$ is the average of a large number of individuals, by appealing to some central limit theorem.\footnote{Notice that  this does not preclude dependence between individuals in observation $i$, insofar as it is still amenable to a central limit theorem.} This could be the case, for example, in our empirical application in which each observation represents average test scores of a large number of students per school, even if we do not observe student-level data. While we cannot test the plausibility of this assumption for $Y(1)$ (given that we have fixed $N_1$) and for the distribution of heterogeneous treatment effects, we can provide evidence on whether the distribution of $Y(0) | (X=x,W=0)$ is symmetric by fitting a model for $\mu_0(x)$ and checking whether the residuals are symmetric for the controls.

We show that the sign-changes test is asymptotically valid when $N_0 \rightarrow \infty$  under such symmetry assumptions, even when $N_1$ is fixed.  Our MC simulations presented in Section \ref{MC_simulations} suggest that relaxing Assumption \ref{Assumption_symmetry} does not generate large size distortions for this test, except in settings in which $N_1$ is very small, and the asymmetry in the potential outcomes or heterogeneous effects is very strong.

\begin{proposition} \label{test}

Suppose Assumptions \ref{sample}, \ref{CIA},  \ref{overlap},  \ref{assumption_mu}(b), and  \ref{Assumption_symmetry} hold. Assume also that the distribution of $Y$ is continuous. If we consider the problem of testing $H_0:  \tau =c$, then, for any $\alpha \in (0,1)$, $\mbox{limsup}_{N_0 \rightarrow \infty} \mathbb{E} \left[  \phi( S_{N_0}) \right]  \leq \alpha$ when $N_1$ is fixed. 

\end{proposition}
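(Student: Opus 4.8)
The plan is to recognize the test in (\ref{decision}) as the randomization test under approximate symmetry of \cite{Canay}, with group $\mathbf{G}=\{-1,1\}^{N_1}$ acting by sign changes, and then to check the two ingredients that make such a test asymptotically valid: that the vector $S_{N_0}$ converges in distribution, under $H_0$, to a limit whose law is invariant under $\mathbf{G}$; and that the test statistic satisfies the mild regularity conditions of \cite{Canay} at that limit. The first ingredient is essentially a citation: under $H_0:\tau=c$, write $\hat\tau_i^{N_0}=\left(\mu_1(X_i)-\frac{1}{M}\sum_{j\in\mathcal{J}_M(i)}\mu_0(X_j)\right)+\left(\epsilon_i-\frac{1}{M}\sum_{j\in\mathcal{J}_M(i)}\epsilon_j\right)-\tau$ and invoke Proposition \ref{unbiased}(2), whose proof establishes the joint convergence of the vector $(\hat\tau_1^{N_0},\dots,\hat\tau_{N_1}^{N_0})'$, to get $S_{N_0}\buildrel d \over \rightarrow S:=(\kappa_1,\dots,\kappa_{N_1})'$, where the $\kappa_i$ are mutually independent with common c.d.f.\ $\widetilde G$ and mean zero, and $\kappa_i$ is distributed as $(\mu_1(X)-\mu_0(X)-\tau)+\xi_X-\frac{1}{M}\sum_{m=1}^M\eta_X^m$ with $X\sim f_1$ and $(\xi_x,\eta_x^1,\dots,\eta_x^M)$ mutually independent given $X=x$. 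Assumptions \ref{sample}, \ref{CIA}, \ref{overlap}, and \ref{assumption_mu}(b) are precisely the hypotheses of Proposition \ref{unbiased}(2).

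The crux is showing that the law of $S$ is invariant under $\mathbf{G}$, i.e.\ $gS\buildrel d \over = S$ for every $g\in\mathbf{G}$. Since the $\kappa_i$ are mutually independent and identically distributed, this reduces to showing that $\kappa_i$ is symmetric around zero. I would condition on $X_i$: by Assumption \ref{Assumption_symmetry}(i), $\xi_{X_i}$ and each $\eta_{X_i}^m$ are symmetric around zero (their conditional means being $\mu_1(X_i)$ and $\mu_0(X_i)$), and, being conditionally independent, their combination $\xi_{X_i}-\frac{1}{M}\sum_{m=1}^M\eta_{X_i}^m$ is symmetric around zero given $X_i$ — equivalently, its conditional characteristic function is real-valued; by Assumption \ref{Assumption_symmetry}(ii), $\mu_1(X_i)-\mu_0(X_i)-\tau$ is symmetric around zero. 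One then combines the two, integrating the real-valued conditional characteristic function of the error term against the symmetric law of the treatment-effect term, to conclude that $\kappa_i$ has a real-valued characteristic function and is hence symmetric. I expect this last combination to be the main obstacle: the conditional law of the error term genuinely depends on $X_i$, so making the two symmetry statements interact requires care, and is where the precise form of Assumption \ref{Assumption_symmetry} does its work (the argument being immediate in the special case where the conditional error distributions do not vary with $X$).

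Given the invariance of the limit, the remaining regularity checks are routine and rely on the assumed continuity of the distribution of $Y$: it renders each $\kappa_i$, hence $S$, atomless, and (for $N_1\ge 2$) keeps the coordinates of $S$ non-degenerate, so that almost surely the denominator of $T$ at any $gS$ is strictly positive, $T$ is continuous at $S$, and $\Pr\bigl(T(gS)=T(g'S)\bigr)=0$ for $g\neq g'$. Hence the rank of $T(S)$ among $\{T(gS):g\in\mathbf{G}\}$ is almost surely unambiguous and locally constant, so the test function $\phi$ is $P_S$-almost surely continuous and $\mathbb{E}[\phi(S_{N_0})]\to\mathbb{E}[\phi(S)]$ by the extended continuous mapping theorem. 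Finally, the standard randomization argument closes it: averaging over $g$ and using $gS\buildrel d \over = S$ gives $K\,\mathbb{E}[\phi(S)]=\mathbb{E}\bigl[\sum_{g\in\mathbf{G}}\phi(gS)\bigr]$, and because the multiset $\{T(g'gS):g'\in\mathbf{G}\}$ is the same for every $g$ one has $T^{(k)}(gS)=T^{(k)}(S)$, so $\sum_{g\in\mathbf{G}}\phi(gS)=\#\{g:T(gS)>T^{(k)}(S)\}\le K-k\le K\alpha$, whence $\mathbb{E}[\phi(S)]\le\alpha$ (the non-randomized cutoff only makes $\phi$ conservative). Combining, $\limsup_{N_0\to\infty}\mathbb{E}[\phi(S_{N_0})]=\mathbb{E}[\phi(S)]\le\alpha$, as claimed.
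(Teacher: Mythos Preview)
Your approach matches the paper's: convergence $S_{N_0}\buildrel d \over \rightarrow S=(\kappa_1,\dots,\kappa_{N_1})'$ via Proposition~\ref{unbiased}(2), $\textbf{G}$-invariance of the law of $S$ under $H_0$, and then the \cite{Canay} machinery (the paper cites their Theorem~3.1 directly; you unpack the continuous-mapping-plus-randomization argument, which amounts to the same thing).

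Two remarks. First, your claim that $Pr(T(gS)=T(g'S))=0$ for all $g\neq g'$ is false when $g'=-g$ componentwise: since $T(-s)=T(s)$ identically, such pairs always tie. The paper handles this correctly by verifying that for each pair $g\neq g'$ either $T(gS)=T(g'S)$ for \emph{all} realizations (the case $g'=-g$) or the equality has probability zero (all other cases, using continuity of $S$); this dichotomy is exactly what the \cite{Canay} conditions require, and it still delivers a.s.\ continuity of $\phi$ at $S$. Your stated reason needs this refinement, though the conclusion survives. Second, on the symmetry of $\kappa_i$: you are right to flag this as the crux and to note that it is immediate only when the conditional error laws do not vary with $x$. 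The paper's proof simply asserts that Assumption~\ref{Assumption_symmetry} makes $\kappa_i$ symmetric, without spelling out how the conditional-on-$x$ symmetry of $\xi_x-\frac{1}{M}\sum_m\eta_x^m$ interacts with the marginal symmetry of $\mu_1(X)-\mu_0(X)-\tau$ when the former genuinely depends on $x$; so on this point your proposal is at least as explicit as the paper's own argument.
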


The main idea of the proof is to show that  the limiting distribution of $S_{N_0}$, under the null, is invariant  to the transformations in ${\textbf{G}}$. This is true because, asymptotically, $\hat \tau_i^{N_0}$ and $\hat \tau_j^{N_0}$ are independent for $i \neq j$, and, under the null, $\hat \tau_i^{N_0}$ converges in distribution to $\kappa_i$, which is  symmetric around zero given Assumption \ref{Assumption_symmetry}. Details  in  Appendix \ref{proof_sign_changes_asympt}.

While Proposition \ref{test} provides a test that is valid when $N_1$ is fixed, validity even when $N_1$ is fixed comes at a cost of relying on stronger assumptions than usually considered in the matching literature. We show that we can relax Assumption \ref{Assumption_symmetry} if  $N_1$ increases, but at a slower rate relative to $N_0$.  We consider the following assumptions, which are similar to the ones considered by AI for the setting in which $N_0$ grows at a faster rate than $N_1$.

\begin{assumption}[Sampling rates]
\label{Assumption_rates}
For some $r > \mbox{max}\{k/2,2\}, N_{1}^{r} / N_{0} \rightarrow \theta$ with $0<\theta<\infty$.

\end{assumption}

\begin{assumption}[Distribution of potential outcomes]
\label{Assumption_appendix2}

 For $w=0,1,$ (i) $\mu(x, w)$ and $\sigma^{2}(x, w)$ are Lipschitz in $\mathbb{X}_0$, (ii) for some $\gamma>0$, $\mathbb{E}\left[ |\epsilon|^{4+\gamma} | W=w,X = x \right]$  exists and is bounded uniformly in $x,$ and (iii) $\sigma^{2}(x, w)$ is bounded away from zero.

\end{assumption}

Under these conditions, Corollary 1(ii) from AI implies that the NN matching estimator is consistent and asymptotically normal. We show that the sign-changes test is asymptotically valid in this setting in which $N_1$ increases (but at a lower rate than $N_0$) even when we relax Assumption \ref{Assumption_symmetry}.

\begin{proposition} \label{sign_changes_asympt}

Suppose Assumptions \ref{sample}, \ref{CIA}, \ref{overlap},  \ref{Assumption_rates}, and \ref{Assumption_appendix2} hold.  If we consider the problem of testing $H_0: \tau =c$, then the sign-changes test is asymptotically valid when  $N_1,N_0 \rightarrow \infty$.

\end{proposition}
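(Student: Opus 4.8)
\emph{Proof proposal.} The plan is to run the same argument as for Proposition~\ref{test}, but, since the dimension of $S_{N_0}$ now grows with $N_1$, I would replace ``the limit of $S_{N_0}$ is $\textbf{G}$-invariant'' by a central limit theorem: after the common rescaling by $\sqrt{N_1}$, I would show that the studentized statistic $\sqrt{N_1}\,T(S_{N_0})$ converges in distribution to $|Z|$, $Z\sim N(0,1)$, and that the $(1-\alpha)$ quantile of its randomization distribution $\{\sqrt{N_1}\,T(gS_{N_0}):g\in\textbf{G}\}$ converges in probability to the $(1-\alpha)$ quantile of $|Z|$. Validity then follows by Slutsky-type arguments and continuity of the law of $|Z|$. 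I work under $H_0$, so $c=\tau$. The first ingredient is the decomposition, for $i\in\mathcal{I}_1$,
\begin{equation} \nonumber
\hat\tau_i^{N_0} = \big(\mu_1(X_i)-\mu_0(X_i)-\tau\big) + \Big(\mu_0(X_i)-\tfrac1M\sum_{m=1}^{M}\mu_0(X^i_{(m)})\Big) + \Big(\epsilon_i-\tfrac1M\sum_{j\in\mathcal{J}_M(i)}\epsilon_j\Big) =: D_i+B_i+E_i ,
\end{equation}
so that $\sqrt{N_1}(\hat\tau-\tau)=N_1^{-1/2}\sum_{i\in\mathcal{I}_1}(D_i+B_i+E_i)$. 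Under Assumptions~\ref{Assumption_rates} and \ref{Assumption_appendix2}, AI's bounds on the matching discrepancies (which use $r>k/2$) give $N_1^{-1/2}\sum_i B_i=o_p(1)$, and a CLT for the asymptotically independent, mean-zero terms $D_i+E_i$ yields $\sqrt{N_1}(\hat\tau-\tau)\buildrel d\over\rightarrow N(0,\sigma_\ast^2)$ with $\sigma_\ast^2=\mathrm{Var}(D_i)+\mathrm{Var}(E_i)>0$ --- this is Corollary~1(ii) of AI specialized to the ATT.

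The crucial probabilistic input is that, in this regime, \emph{no two treated units share a common match} with probability tending to one. Because the $M$ neighbors of unit $i$ lie, asymptotically, within distance of order $N_0^{-1/k}$ of $X_i$ and the $X_i$ are i.i.d.\ with bounded density on a compact set (Assumption~\ref{overlap}), the expected number of pairs $\{i,i'\}$ with $\mathcal{J}_M(i)\cap\mathcal{J}_M(i')\neq\emptyset$ is of order $N_1^2/N_0=O(N_1^{2-r})\to0$ since $r>2$; making this precise, I would import AI's moment bounds on the match-count variables $K_M(j)$. Two consequences follow. First, the $\hat\tau_i^{N_0}$ are asymptotically mutually independent, so the sample-based denominator in $T$ --- essentially the sample standard deviation of the $\hat\tau_i^{N_0}$ --- converges in probability to $\sqrt{\sigma_\ast^2}$, the \emph{same} constant as in the CLT above (the cross-unit covariances that would otherwise distort it vanish); hence $\sqrt{N_1}\,T(S_{N_0})\buildrel d\over\rightarrow|Z|$. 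Second, writing $R_i=\hat\tau_i^{N_0}$, one has $N_1^{-1}\sum_i R_i^2\buildrel p\over\rightarrow\sigma_\ast^2$ and $\max_{i\le N_1}R_i^2=o_p\!\big(\sum_i R_i^2\big)$, using the uniform $(4+\gamma)$-moment bound in Assumption~\ref{Assumption_appendix2}.

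For the randomization distribution, condition on the data and let $g=(g_1,\dots,g_{N_1})$ be uniform on $\{-1,1\}^{N_1}$. Then $N_1^{-1/2}\sum_i g_i R_i$ is a sum of conditionally independent mean-zero terms with conditional variance $N_1^{-1}\sum_i R_i^2\to\sigma_\ast^2$ and satisfying the Lindeberg condition by the previous step; a conditional CLT gives $N_1^{-1/2}\sum_i g_i R_i\buildrel d\over\rightarrow N(0,\sigma_\ast^2)$ in probability, while the corresponding denominator again converges in probability to $\sqrt{\sigma_\ast^2}$ (the recentering term $\tfrac1{N_1}\sum_i g_i R_i$ being $O_p(N_1^{-1/2})$ conditionally). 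Hence $\sqrt{N_1}\,T(gS_{N_0})\buildrel d\over\rightarrow|Z|$ in probability, so the rescaled quantile $\sqrt{N_1}\,T^{(k)}(S_{N_0})$ converges in probability to the $(1-\alpha)$ quantile $q_{1-\alpha}$ of $|Z|$. Combining with $\sqrt{N_1}\,T(S_{N_0})\buildrel d\over\rightarrow|Z|$ and the continuity of the law of $|Z|$ at $q_{1-\alpha}$ gives $\mathbb{E}[\phi(S_{N_0})]=\Pr(\sqrt{N_1}\,T(S_{N_0})>\sqrt{N_1}\,T^{(k)}(S_{N_0}))\to\Pr(|Z|>q_{1-\alpha})=\alpha$.

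I expect the main obstacle to be the second step: turning ``no shared matches with probability $\to1$'' into (a) a genuine CLT for $N_1^{-1/2}\sum_i(D_i+E_i)$ in the presence of the residual weak dependence that matching introduces, (b) consistency of the studentizing denominator for \emph{exactly} the variance appearing in that CLT (so that the test is asymptotically exact, not merely conservative), and (c) the conditional Lindeberg condition for the sign-flip CLT. All three hinge on carefully importing AI's controls on the matching discrepancies and on the match counts $K_M(j)$, and on checking that the rate restriction $r>\max\{k/2,2\}$ is precisely what renders the bias ($r>k/2$) and the cross-unit dependence ($r>2$) asymptotically negligible.
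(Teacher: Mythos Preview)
Your proposal is correct and follows essentially the same route as the paper: decompose $\hat\tau_i^{N_0}$ into the heterogeneity, bias, and noise pieces, invoke AI's Corollary~1(ii) together with the no-shared-neighbor event (the paper's Lemmas~A.3 and A.4) to obtain $\sqrt{N_1}\,T(S_{N_0})\buildrel d\over\rightarrow |Z|$, and then establish a conditional CLT for the sign-flip distribution so that the randomization quantile converges to the same normal quantile. The only cosmetic differences are that the paper works with a self-normalized CLT (not pinning down a limiting $\sigma_\ast^2$) and upgrades the no-shared-neighbor statement to almost-sure convergence (Lemma~A.4) in order to verify the Lyapunov condition almost surely, whereas you phrase things in probability via a Lindeberg condition; both routes close the argument.
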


Therefore, the sign-changes test is asymptotically valid under weaker conditions when $N_1$ also increases. The condition that $N_0$ grows at a faster rate than $N_1$ (Assumption \ref{Assumption_rates}) is important for two reasons. First, it guarantees that we can apply Corollary 1(ii) from AI, which implies that the bias of the NN matching estimator is asymptotically negligible. Second, it also guarantees that the probability that we have shared nearest neighbors converges to zero when $N_1,N_0 \rightarrow \infty$. See details of the proof in  Appendix \ref{proof_sign_changes_asympt}.

\begin{remark}
\normalfont 

We can construct confidence intervals considering the set of values $c$ such that the null would not be rejected. Propositions \ref{test} and \ref{sign_changes_asympt} provide the assumptions we need for validity of such confidence intervals whether we consider a setting with $N_1$ fixed or $N_1 \rightarrow \infty$.

\end{remark}

\begin{remark}
\normalfont \label{finite_sample_adjustment2}
In  Propositions \ref{test} and  \ref{sign_changes_asympt}, this test is asymptotically valid because the probability that different treated observations share the same nearest neighbor goes to zero, when $N_0 \rightarrow \infty$. If there are shared nearest neighbors in finite samples, this may lead to over-rejection if we do not take that into account.  Therefore, we  suggest a finite sample adjustment, in which  we restrict to sign changes such that $g_i=g_j$ if $i$ and $j$ share the same nearest neighbor. The probability that this modification is relevant converges to zero when $N_0 \rightarrow \infty$.\footnote{Another alternative would be to consider a matching estimator without replacement. However, this would generate lower quality matches, which implies  more bias (AI). Moreover, matching without replacement has the disadvantage that the estimator is not invariant to different sorting of the data.   } 

\end{remark}

\begin{remark} \label{ties}
\normalfont
\cite{Canay} consider a randomized version of the test to deal with cases such that  $ T( S_{N_1}) = T^{( k)}( S_{N_1})$, while we consider a test that rejects  if $ T ( S_{N_1}) >  T^{( k)}( S_{N_1})$. Such randomization guarantees an  asymptotic size of $\alpha$ even when $N_1$ is fixed. 

\end{remark}

\begin{remark}
\normalfont
This test is also asymptotically valid for bias-corrected matching estimators, as defined in equation (\ref{biasadj}). We define  $\tilde \tau_{i}^{N_0} = Y_i - \frac{1}{M}  \sum_{j \in \mathcal{J}_M(i)} \left( Y_j - \hat \mu_0(X_j) + \hat \mu_0(X_i) \right)$ in this case. We also consider the implications of Proposition \ref{test} for other types of matching estimators in Section \ref{conclusion}.

\end{remark}

\section{Monte Carlo Simulations} \label{MC}

We present two sets of MC simulations. First, we present an empirical MC simulation, which  provides a setting in which there is selection on observables with a structure based on a real application.  Then we consider another set of  MC simulations where the focus is to evaluate the relevance of Assumption \ref{Assumption_symmetry} for the sign-changes test when $N_1$ is small.

\subsection{Empirical Monte Carlo simulations} 
\label{Empirical_MC}

We construct an empirical MC simulation in which treatment assignment and potential outcomes are based on the ``Jovem de Futuro'' program, which we present in more details as an empirical illustration in Section \ref{empirical_application}.\footnote{More details on the ``Jovem de Futuro'' program and on the construction of this empirical MC study are also presented in Appendix \ref{JF}.} 
The main results of the empirical MC simulation are summarized in Table \ref{EMC_text}. Panel A shows that, when we consider NN matching estimators with few nearest neighbors,  the bias of the matching estimator is close to zero, regardless of the number of treated observations. This is true in our simulations even when the number of control observations is not  large.  Increasing the  number of nearest neighbors used in the estimation implies that we need an increasing number of controls to keep  our approximations reliable. We  show in Appendix Table \ref{Table_more_covariates} that increasing the dimensionality of the matching variables also implies that a larger number of controls is needed to keep  our approximations reliable. 

Panels B and C present rejection rates, respectively, for the asymptotic test based on AI and for the sign-changes test.\footnote{In Appendix Table \ref{Table_wild} we consider the  test based on permutations presented in Appendix \ref{Section_permutation} and the wild bootstrap test proposed by \cite{Otsu} as alternative inference methods.} The asymptotic test  generally presents over-rejection when $N_1$ is small, which is consistent with the fact that the theory behind this test relies on $N_1 \rightarrow \infty$. In contrast, the sign-changes test  controls well for size even when $N_1$ is very small. An important caveat, however, is that the sign-changes test may be conservative in settings in which the number of sign-changes transformations is very small, which is a common feature in approximate randomization tests \citep{ART}.  The number of sign-changes transformations will be small when $N_1$ is very small,  or when $N_0$ is small relative to $N_1 \times M$.\footnote{The number of sign-changes  transformations can be small when $N_0$ is small relative to $N_1 \times M$ due to the finite-sample adjustment discussed in Remark \ref{finite_sample_adjustment2}.  } The sign-changes test presents non-trivial power, except for the cases in which it is very conservative (Appendix Table \ref{EMC_power}).

When $N_0$ is large, the sign-changes test presents non-trivial power in these simulations when $N_1=5$ because we consider a 10\%-level test. However, if we considered a 5\%-level test, then it would be very conservative and have a very low power in this case (Appendix Table \ref{Table_MC5}). This happens because we would have very few sign-changes transformations to reject the null at a 5\% significance level. An alternative in  case we want to consider a 5\%-level test with a very small number of treated observations is the approximate randomization test based on permutations, presented in Appendix \ref{Section_permutation}. Similarly to the sign-changes test, the test based on permutations (with the right choice of test statistic) is also valid under stronger assumptions when $N_1$ is fixed, and under weaker assumptions when $N_1$ increases (but at a lower rate than $N_0$). However, it relies on arguably stronger conditions than the sign-changes test when $N_1$ is fixed. We discuss that in more detail  in Section \ref{comparison}.

\subsection{Monte Carlo simulations relaxing symmetry conditions } \label{MC_simulations}

We consider now another set of MC simulations in which we vary de degree of symmetry of the potential outcomes and of the distribution of heterogeneous treatment effects. For all simulations, we set $(X | W=w) \sim N(0,1)$  for $w \in \{ 0,1\}$ and $Y(0) | (X= x,W=0) \sim N(0,1)$ for all $x \in \mathbb{R}$. Therefore, $\mu_0(x)=0$ for all  $x \in \mathbb{R}$.  Then we vary $\mu_1(x)$ and the distribution of $(Y(1) - \mu_1(x))|(X=x,W=1)$. For all settings, we consider $N_1 \in \{5,10,25,50 \}$ and $N_0 = 1000$.

We start considering a setting with $\mu_1(x) = x$ and  $(Y(1) - \mu_1(x))|(X=x,W=1)\sim N(0,1)$, which implies that the symmetry conditions from Assumption \ref{Assumption_symmetry} hold. In this case, we have that $\tau = \mathbb{E}[\mu_1(X) - \mu_0(X) | W=1] = 0$, so the null hypothesis is true. We present rejection rates for 10\% tests in Panel A of Table \ref{Table_MC2}. Consistent with the MC simulations from Section \ref{Empirical_MC}, the asymptotic test based on AI over rejects when $N_1$ is small. In contrast, the sign-changes test does not over-reject irrespectively of $N_1$. This is expected, because  Assumption \ref{Assumption_symmetry} is valid in this case, and we consider a setting in which the estimator is unbiased.

 In Figure \ref{Figure_MC}, we contrast the power of these two tests when $M=4$, for different values of $N_1$. We modify the DGP so that $\mu_1(x) = \tau + x$, implying that $\mathbb{E}[\mu_1(X) - \mu_0(X)|W=1] = \tau$.  We present size-adjusted power for the AI test  using critical values that set rejection rates equal to 10\% in the MC simulations when the null is true. Since the sign-changes test never over-rejects in this setting, rejection rates are not adjusted for this test.  When $N_1=5$, the sign-changes test presents non-trivial power, but its  power is lower than the size-adjusted power of the asymptotic test. We recall, however, that the  asymptotic test presents large over-rejection in this scenario, so it is not a feasible alternative. When $N_1 = 10$, the loss in power of the sign-changes test is very small, while the asymptotic test still presents relevant over-rejection. 

When we further increase $N_1$, the size distortion of the asymptotic test diminishes. In this case, the sign-changes test continues to control for size, but presents a small loss in power relative to the asymptotic test when $N_1$ increases.  This happens because, in this case, $N_0$ becomes smaller relative to $N_1 \times M$. Consider the case in which $N_1=50$, $N_0=1000$, and $\tau = 0.5$. In this case, the size-adjusted power of the asymptotic test is $0.756$, while the power of the sign-changes test is $0.690$. If we increase $N_0$ to $2000$, then the gap in power   between these tests goes down from 5.6pp to 3.7pp. In contrast, if we reduce $N_0$ to 500, then the gap in power increases to 11pp. Therefore, the sign-changes test does not present relevant losses in terms of power if $N_0$ is large, but may present some loss in power if $N_0$ is not much larger than $N_1 \times M$.

In panels B to E in Table   \ref{Table_MC2}, we consider variations in the DGP in which the symmetry conditions from Assumption \ref{Assumption_symmetry} does not hold. We first set $\mu_1(x) = Q^{-1}(\Phi(x);p)-p)/\sqrt{2p}$, where  $Q(x;p)$ is the CDF of a chi-squared distribution with $p \in \mathbb{N}$ degrees of freedom, and $\Phi(x)$ is the CDF of a standard normal. In this case, we have that $(\mu_1(X) - \mu_0(X)) | W=1$ has mean zero and variance one, but its distribution is asymmetric. The asymmetry is decreasing with $p$. We also consider settings in which   $(Y(1) - \mu_1(x))|(X=x,W=1) \sim (\chi^2_p - p)/\sqrt{2p}$ instead of standard normal, which adds more asymmetry in the distribution of $\kappa_i$. Again, this distribution has mean zero and variance one, but it has an asymmetry that is decreasing in $p$.  In Appendix Table \ref{Table_MC1}, we also consider  cases in which $\mu_1(x)=0$ and we only vary the distribution of  $(Y(1) - \mu_1(x))|(X=x,W=1)$. In these simulations, the sign-changes test continues to control for size, except when $N_1$ is small and the distribution of $\kappa_i$ is very asymmetric. Importantly, even in the scenarios in which the sign-changes test presents some over-rejection, its over-rejection is milder relative to the over-rejection of the asymptotic test. When $N_1$ increases, then both tests control for size, which is consistent with the fact that they are asymptotically valid when $N_1 \rightarrow \infty$, even when $\kappa_i$ is asymmetric.  

Overall, based on these simulations, the sign-changes test presents important gains in terms of test size when $N_1$ is small, even when the distribution of $\kappa_i$ is asymmetric. Except when this distribution is extremely asymmetric, the sign-changes test does not present much over-rejection. Moreover, even when it presents some over-rejection in these simulations, the over-rejection is milder relative to the over-rejection of the asymptotic test. Finally, in settings in which the asymptotic test controls well  for size, the cost in terms of power for the sign-changes test  is low, as long as $N_0$ is sufficiently large relative to $N_1 \times M$. Therefore, this test provides an interesting alternative for settings in which $N_1$ is small, and also when $N_1$ is not very small, but  $N_0 >> N_1$.

\section{Comparing Alternative Inference Methods} \label{comparison}

The different test procedures we consider potentially present important trade-offs in terms of size distortion and power, depending on the number of treated and control observations. Moreover, the sign-changes test  relies on different sets of assumptions depending on whether the empirical application is better approximated by a theory in which $N_1$ is fixed or in which $N_1$ diverges (but at a slower rate relative to $N_0$). In light of the theoretical properties derived in Section \ref{inference}, and of the evidence from the MC simulations presented in Section \ref{MC}, we provide guidance on how to consider the suitability of different inference methods in empirical applications. 

If $N_1$ is large, then the asymptotic approximations considered by  AI should be reliable. In this case, if we are in a setting in which $N_0$ is much larger than $N_1$, then the sign-changes test would be comparable to the asymptotic test. More specifically, both tests would be valid under the same assumptions regarding the distributions of potential outcomes, and they would have similar power. However, if $N_0$ is not very large relative to $M \times N_1$, then the sign-changes test  may have lower power, and the asymptotic test should be preferable.  

If $N_1$ is not very large, then the test based on AI presents relevant  size distortions, and the sign-changes test  becomes an interesting alternative. In this case, one should be aware that this test is valid under stronger assumptions if $N_1$ is very small, so these assumptions should be discussed by applied researchers.  Since this test is asymptotically valid even when we relax such symmetry conditions when $N_1$ increases, we expect that distortions in case such assumptions are not valid to be relatively minor, except in cases in which $N_1$ is very small and errors or treatment effects are very asymmetric. This intuition is corroborated by the simulations presented in Section \ref{MC_simulations}. In those simulations,  the sign-changes test only presents relevant size distortions when $N_1$ is very small and the degree of asymmetry in the distribution of $\kappa_i$ is large. Moreover, in such settings, the asymptotic test based on AI presents more severe size distortions than the sign-changes test. 

Overall, if $N_1$ is not very large, then the sign-changes test presents relevant gains relative to the asymptotic test in terms of controlling for test size. Moreover, if  $N_0$ is large relative to $N_1 \times M$, then the sign-changes test has a power comparable to the (size-adjusted) power of the  asymptotic test. {The only exception in which the sign-changes test would have a lower power than the asymptotic test even when  $N_0$ is large is when $N_1$ is very small (for example, when  $N_1=5$). However, those are exactly the cases in which we should expect the size distortions of the asymptotic test to be more severe. } 

Finally, it is worth noting that the sign-changes test only presents non-trivial power in settings in which  $N_1$ is very small (say, $N_1 = 5$), when we consider 10\%-level tests. A feasible alternative when $N_1$ is even smaller than 5, or when we want to consider a 5\%-level test, is the test based on permutations described in Appendix \ref{Section_permutation}. However, one should be aware that, with $N_1$ fixed, such test would rely on homoskedasticity and treatment effects homogeneity assumptions, which are arguably stronger than Assumption \ref{Assumption_symmetry}. Similarly to the sign-changes test, the test based on permutations, with the right choice of test statistic, is also valid under  weaker assumptions when $N_1$ increases (but at a lower rate than $N_0$).

\section{Empirical Illustration} \label{empirical_application}

As an empirical illustration of the NN matching estimator in a setting with small $N_1$ relative to $N_0$, we analyze the ``Jovem de Futuro'' program.  This is a program that has been running in Brazil since 2008, aimed at improving the quality of education in public schools by improving management practices and allocating grants to  treated schools. In 2010, this program was implemented in a randomized control trial with 15 treated  schools in Rio de Janeiro and 39 treated schools in Sao Paulo, with the same number of control schools in each state. In Appendix \ref{JF} we present more details on this empirical application. We rely on this randomized control trial to validate the use of NN matching estimators in a setting in which there are few treated and many control units.\footnote{Influential  papers that evaluate the use of non-experimental methods in empirical applications where a randomized control trial is available include \cite{10.2307/2677743}, \cite{Heckman13416}, \cite{10.2307/2999630}, \cite{10.2307/2971733}, \cite{ASMITH2005305}, \cite{Lalonde}, \cite{DW1999}, and \cite{DW2002}.} We take advantage of the fact that there were about 1,000 other public schools in Rio de Janeiro and more than 3,000 other public schools in Sao Paulo that did not participate in the experiment. More specifically, we consider a NN matching estimator using the experimental \textit{control} schools as treated observations, and schools that did not participate in the experiment as control observations. These experimental control schools were selected following the same process used for the selection of treated schools. However, since these schools did not actually receive the treatment in the analyzed period, we should not expect to find significant effects in this case if the matching estimator is valid  \citep{10.2307/2677743}.  Therefore, this provides an interesting setting to evaluate the validity of matching estimators with few treated and many control observations.

Table \ref{empirical_aplication_control_text} shows estimated effects from 2010 to 2012. We use test scores from 2007 to 2009 as matching variables.   In addition to the point estimates,  p-values are calculated using the asymptotic distribution derived by AI, and from the sign-changes test.  Interestingly, estimates for Rio de Janeiro (columns 1 to 4)  generally have lower p-values using the test based on AI, relative to the alternative inference procedure. In particular, a test based on the asymptotic distribution  would reject the null at 10\% in two cases, while the sign-changes  test would fail to reject the null. This is consistent with our  simulations from Section \ref{MC}, where we  show that  the asymptotic test based on AI may lead to over-rejection when $N_1$ is small. The difference in p-values across different methods is less pronounced when we consider estimates for Sao Paulo, which is consistent with having a larger number of ``treated'' schools in Sao Paulo.

\section{Conclusion} \label{conclusion}

We consider the asymptotic properties of matching estimators when the number of control observations is large, but the number of treated observations is fixed. In this setting, the NN matching estimator is asymptotically unbiased for the ATT under standard assumptions used in the literature on estimation of treatment effects under selection on unobservables. Moreover, we provide tests, based on the theory of randomization under approximate symmetry, that are asymptotically valid when the number of treated observations is fixed and the number of control observations goes to infinity. While we need to rely on relatively strong assumptions so that these tests are valid even when $N_1$ is fixed, we show that these tests are also valid under weaker assumptions when $N_1$ increases, but at a slower rate relative to $N_0$.  We analyze in details the advantages and disadvantages of these inference methods, and provide guidance on which methods should be used in specific applications.

We conjecture that the asymptotic unbiasedness and the asymptotic validity of the randomization inference test based on sign changes when $N_1$ is fixed remain valid if we consider other types of matching estimators. Intuitively, the main requirement should be that, when constructing the counter-factual for a treated unit $j \in \mathcal{I}_1$, the estimator would rely on a weighted average of the control observations such that, as $N_0 \rightarrow \infty$, an increasing proportion of the weights would be allocated to control units $i$ with $X_i$ close to $X_j$. This would be true if we use, for example, a kernel method for the weights under suitable conditions on the smoothing parameter \citep{10.2307/2971733}. In this case, the bias of the treatment effect estimator for each treated observation $j \in \mathcal{I}_1$ would go to zero as $N_0 \rightarrow \infty$. Moreover, the correlation between the treatment effect estimator for different treated observations would also go to zero in this case, providing asymptotic validity for the inference method based on sign changes. However, an advantage of considering the NN matching estimator in this setting is that it would be more straightforward to implement the adjustment proposed in Remark \ref{finite_sample_adjustment2} to avoid over-rejection in finite samples for the inference method based on sign changes.  Moreover, it would also be possible to consider the randomization test based on permutations, presented in Appendix \ref{Section_permutation}, when we rely on NN matching estimators.

Our results are also relevant for synthetic control (SC) applications. Following \cite{Doudchenko}, the SC and the matching estimators are nested in a framework in which the estimated counterfactual outcome for the treated observation is a linear combination of the outcomes for the controls. In their framework,  consider an estimator in which the weights given to control observations with large discrepancies in pre-treatment outcomes relative to the treated units go to zero. In this case, following the same arguments as above, the estimator would be asymptotically unbiased  if treatment assignment is ``as good as random,'' conditional on this set of pre-treatment outcomes.\footnote{See \cite{Abadie2010}, \cite{FB}, \cite{FP_SC},  and \cite{ferman} for a discussion on the validity of the synthetic control estimator under a different set of assumptions.} This is exactly the case for the penalized SC estimator for disaggregated data proposed by \cite{lhour}. Under these conditions, the randomization inference test we propose based on sign changes  remains asymptotically valid when the number of control units goes to infinity. This provides an interesting alternative for inference, when there are multiple treated units and a large number of control units, that does not rely on exchangeability nor homoskedasticity assumptions.\footnote{See \cite{Firpo2018SyntheticCM}, \cite{FP_tests}  and \cite{Hahn} for a discussion on the  placebo test proposed by \cite{Abadie2010}. \cite{Victor} propose a permutation test based on the timing of the intervention. This test, however, would require a very large number of periods.  Instead, our test may be an alternative when the number of periods is not large, but the number of control units is large. }    The only caveat is that  a very large number of control observations is needed when the number of pre-treatment periods is large,  so that  approximations remain reliable.

%\pagebreak

%----------------------------------------------------------------------------------------
%	BIBLIOGRAPHY
%----------------------------------------------------------------------------------------

\singlespace

\renewcommand{\refname}{References} 
%For modifying the bibliography heading

\bibliographystyle{apalike}
\bibliography{bib/bib.bib}
%The file containing the bibliography

\pagebreak

\begin{table}[H]
 \begin{center}
\caption{{\bf Empirical MC simulation}} \label{EMC_text}
\begin{tabular}{lcccccccc}
\hline
\hline

     & \multicolumn{2}{c}{$M=1$}   & & \multicolumn{2}{c}{$M=4$} & & \multicolumn{2}{c}{$M=10$ }     \\ \cline{2-3} \cline{5-6}  \cline{8-9} 
     & $N_0=50$ & $N_0=500$ & & $N_0=50$ & $N_0=500$   &  & $N_0=50$ & $N_0=500$ \\

     & (1) &  (2)  & & (3) &  (4) & & (5) & (6)  \\ 
     
     \hline

\multicolumn{9}{c}{\textit{Panel A: $| \mbox{average   bias} \times 100 | $}} \\

$N_1=5$ & 1.143 & 0.338 &  & 1.618 & 0.673 &  & 2.156 & 0.936 \\
$N_1=10$ & 1.112 & 0.465 &  & 1.585 & 0.711 &  & 2.085 & 0.706 \\
$N_1=25$ & 0.883 & 0.369 &  & 1.547 & 0.576 &  & 2.148 & 0.833 \\
$N_1=50$ & 1.030 & 0.466 &  & 1.608 & 0.635 &  & 2.137 & 0.771 \\

\\
\multicolumn{9}{c}{\textit{Panel B: rejection rates based on AI   }} \\

$N_1=5$ & 0.204 & 0.210 &  & 0.206 & 0.209 &  & 0.203 & 0.206 \\
$N_1=10$ & 0.151 & 0.160 &  & 0.146 & 0.148 &  & 0.156 & 0.151 \\
$N_1=25$ & 0.123 & 0.121 &  & 0.120 & 0.124 &  & 0.135 & 0.127 \\
$N_1=50$ & 0.120 & 0.107 &  & 0.125 & 0.117 &  & 0.144 & 0.117 \\
\\

%\\
\multicolumn{9}{c}{\textit{Panel C: test based on RI, sign changes }} \\
$N_1=5$ & 0.048 & 0.067 &  & 0.004 & 0.053 &  & 0.000 & 0.030 \\
$N_1=10$ & 0.098 & 0.105 &  & 0.005 & 0.095 &  & 0.000 & 0.064 \\
$N_1=25$ & 0.104 & 0.099 &  & 0.000 & 0.099 &  & 0.000 & 0.065 \\
$N_1=50$ & 0.106 & 0.096 &  & 0.000 & 0.104 &  & 0.000 & 0.021 \\

\hline
\end{tabular}
\end{center}
\footnotesize Note: This table presents simulation results from the empirical MC study described in details in Supplemental Appendix D.  Panel A reports the average bias  (multiplied by 100). The bias if we considered a naive comparison between treated and control schools would be, in expectation, $-0.32$. Panels B  and C present rejection rates for 10\%-level tests. Panel  B is based on the asymptotic distribution derived by AI, while Panel C presents rejection rates for the randomization inference test based on sign changes.  For each combination $(N_1,N_0)$, we run 10,000 simulations.  

\end{table}

%\begin{landscape}

%\begin{landscape}

\begin{table}[H]
 \begin{center}
\caption{{\bf MC simulation: relaxing symmetry conditions}} \label{Table_MC2}
\begin{tabular}{lcccccccc}
\hline
\hline

     & \multicolumn{2}{c}{$M=1$}   & & \multicolumn{2}{c}{$M=4$} & & \multicolumn{2}{c}{$M=10$ }     \\ \cline{2-3} \cline{5-6}  \cline{8-9} 
     & AI & Sign-changes &   & AI & Sign-changes  &   & AI & Sign-changes \\

     & (1) &  (2)  & & (3) &  (4) & & (5) & (6)  \\ 
     \hline
     
\multicolumn{9}{c}{Panel A: $\mu_1(x) = x$ and $\epsilon|(X=x,W=1) \sim N(0,1)$ ($\kappa_i$ symmetric)} \\
$N_1=5$ & 0.213 & 0.067 &  & 0.216 & 0.065 &  & 0.214 & 0.057 \\
$N_1=10$ & 0.151 & 0.098 &  & 0.152 & 0.092 &  & 0.155 & 0.083 \\
$N_1=25$ & 0.122 & 0.100 &  & 0.119 & 0.088 &  & 0.121 & 0.068 \\
$N_1=50$ & 0.104 & 0.091 &  & 0.110 & 0.073 &  & 0.109 & 0.035 \\
\\
\multicolumn{9}{c}{Panel B: $\mu_1(x) = \frac{1}{4}(Q^{-1}(\Phi(x);8)-8)$   and  $\epsilon|(X=x,W=1) \sim N(0,1)$} \\
$N_1=5$ & 0.213 & 0.067 &  & 0.219 & 0.067 &  & 0.219 & 0.055 \\
$N_1=10$ & 0.152 & 0.096 &  & 0.153 & 0.094 &  & 0.155 & 0.084 \\
$N_1=25$ & 0.124 & 0.103 &  & 0.123 & 0.090 &  & 0.124 & 0.068 \\
$N_1=50$ & 0.105 & 0.092 &  & 0.111 & 0.076 &  & 0.109 & 0.042 \\
\\
\multicolumn{9}{c}{Panel C: $\mu_1(x) = \frac{1}{\sqrt{2}}(Q^{-1}(\Phi(x);1)-1)$   and  $\epsilon|(X=x,W=1) \sim N(0,1)$} \\
$N_1=5$ & 0.212 & 0.070 &  & 0.217 & 0.066 &  & 0.216 & 0.059 \\
$N_1=10$ & 0.155 & 0.107 &  & 0.155 & 0.097 &  & 0.157 & 0.095 \\
$N_1=25$ & 0.126 & 0.105 &  & 0.126 & 0.097 &  & 0.131 & 0.080 \\
$N_1=50$ & 0.108 & 0.093 &  & 0.114 & 0.082 &  & 0.113 & 0.054 \\
\\
\multicolumn{9}{c}{Panel D: $\mu_1(x) = \frac{1}{\sqrt{2}}(Q^{-1}(\Phi(x);1)-1)$   and  $\epsilon|(X=x,W=1) \sim \frac{1}{\sqrt{2}}(\chi^2_1 - 1)$} \\
$N_1=5$ & 0.229 & 0.078 &  & 0.253 & 0.087 &  & 0.262 & 0.082 \\
$N_1=10$ & 0.172 & 0.113 &  & 0.179 & 0.124 &  & 0.191 & 0.120 \\
$N_1=25$ & 0.130 & 0.107 &  & 0.134 & 0.108 &  & 0.138 & 0.094 \\
$N_1=50$ & 0.112 & 0.099 &  & 0.120 & 0.082 &  & 0.119 & 0.054 \\
\\
\multicolumn{9}{c}{Panel E: $\mu_1(x) = \frac{1}{\sqrt{2}}(Q^{-1}(\Phi(x);1)-1)$   and  $\epsilon|(X=x,W=1) \sim \frac{2}{\sqrt{2}}(\chi^2_1 - 1)$} \\
$N_1=5$ & 0.248 & 0.092 &  & 0.267 & 0.107 &  & 0.272 & 0.100 \\
$N_1=10$ & 0.185 & 0.128 &  & 0.194 & 0.137 &  & 0.199 & 0.136 \\
$N_1=25$ & 0.140 & 0.118 &  & 0.152 & 0.123 &  & 0.150 & 0.115 \\
$N_1=50$ & 0.120 & 0.110 &  & 0.123 & 0.100 &  & 0.125 & 0.079 \\

\hline
\end{tabular}
\end{center}
\footnotesize Note: This table presents rejection rates for 10\%-level tests for the MC simulations discussed in Section \ref{MC_simulations}. We present rejection rates based on the asymptotic test derived by AI and based on the sign-changes test. In all simulations, $X |(W=w) \sim N(0,1)$ for $w \in \{0,1\}$, $Y(0)|(X=x,W=0) \sim N(0,1)$ for all $x \in \mathbb{R}$, and $N_0=1000$. Each panel presents results for different functions $\mu_1(x)$ and different distributions of $(Y(1)-\mu_1(x)|(X=x,W=1)$. The implied distribution of $\kappa_i$ is symmetric in Panel A, and becomes more asymmetric when we go to Panel E. For each cell, we run 5000 simulations.   

\end{table}

%\end{landscape}

\begin{figure}[H] 

\begin{center}
%\centering 
\caption{{\bf MC Simulations with heterogeneous treatment effects - Power}}  \label{Figure_MC}

\begin{tabular}{cc}

Figure \ref{Figure_MC}.A: $N_1 = 5$ & Figure \ref{Figure_MC}.B:  $N_1 = 10 $ \\

\includegraphics[scale=0.5]{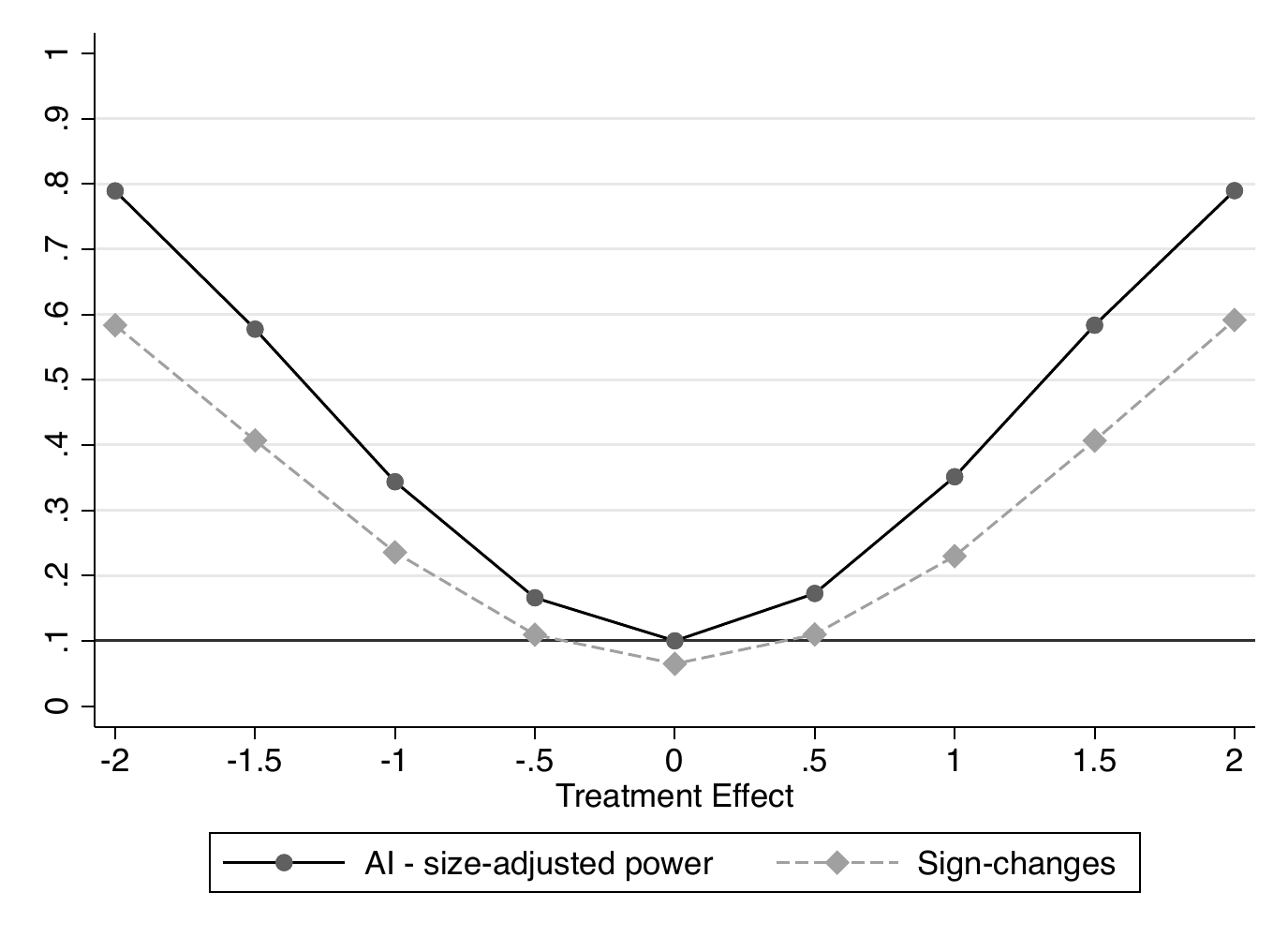} & \includegraphics[scale=0.5]{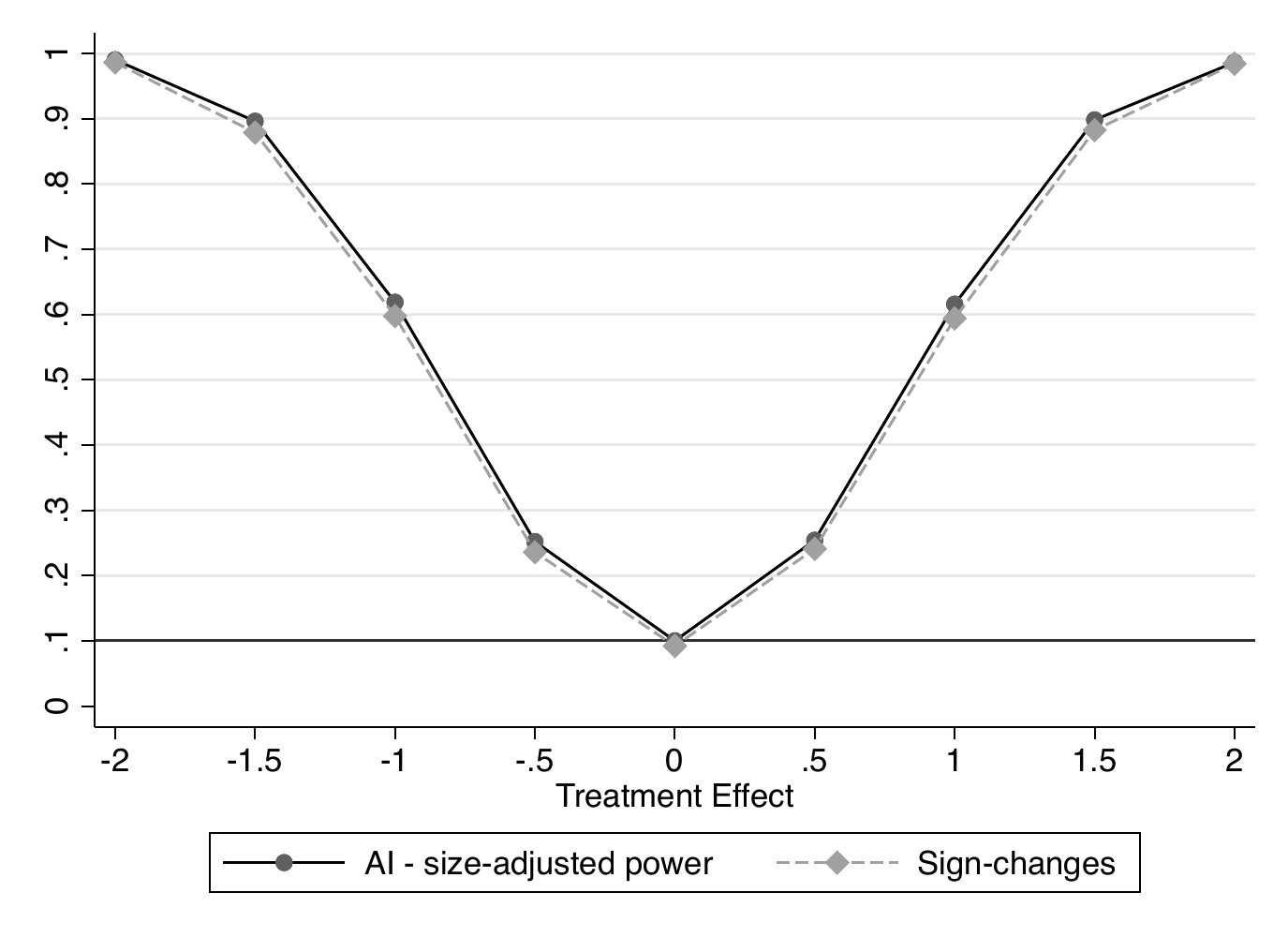} \\

Figure \ref{Figure_MC}.C: $N_1 = 25$ & Figure \ref{Figure_MC}.D:  $N_1 = 50 $ \\

\includegraphics[scale=0.5]{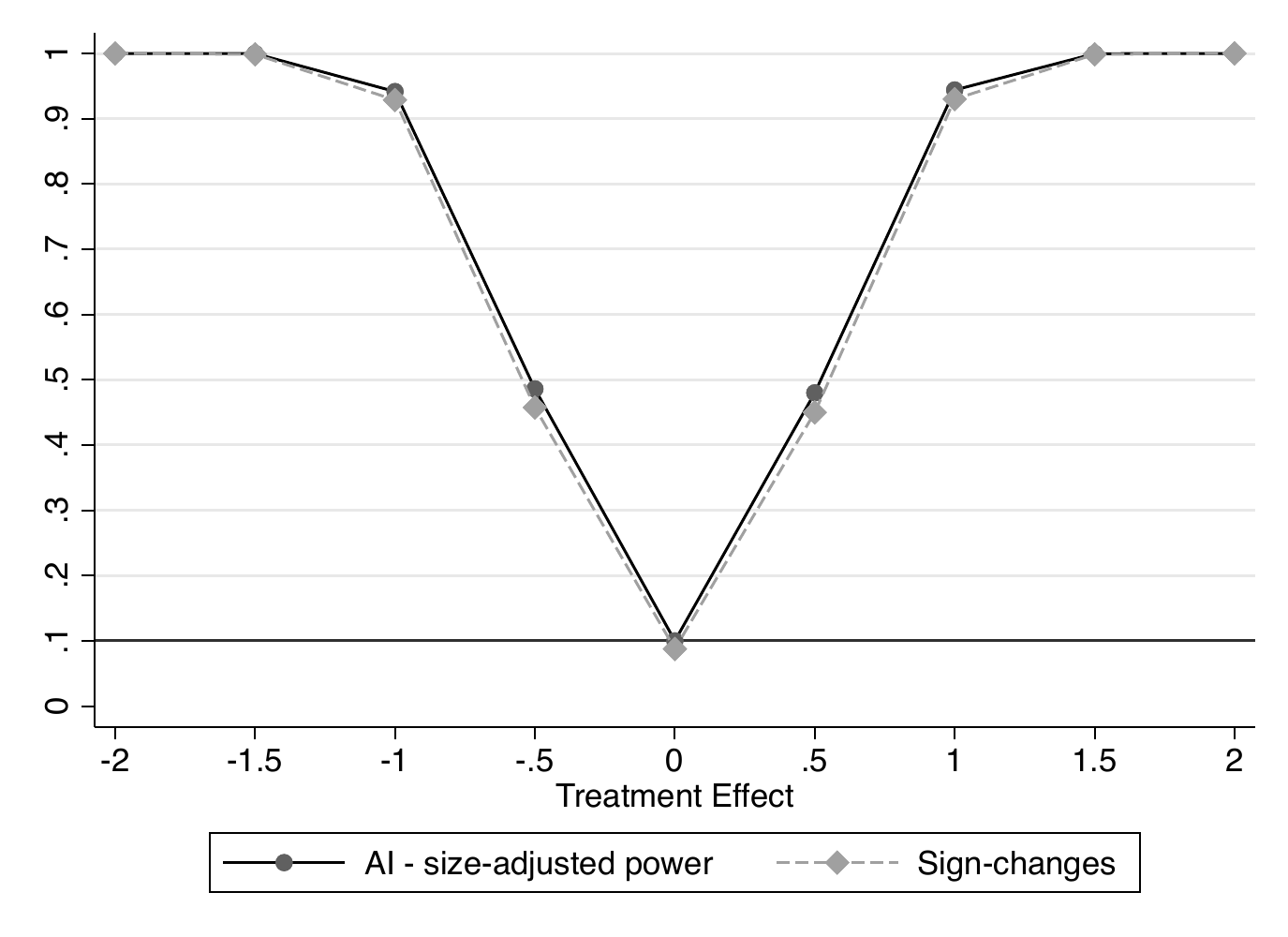} & \includegraphics[scale=0.5]{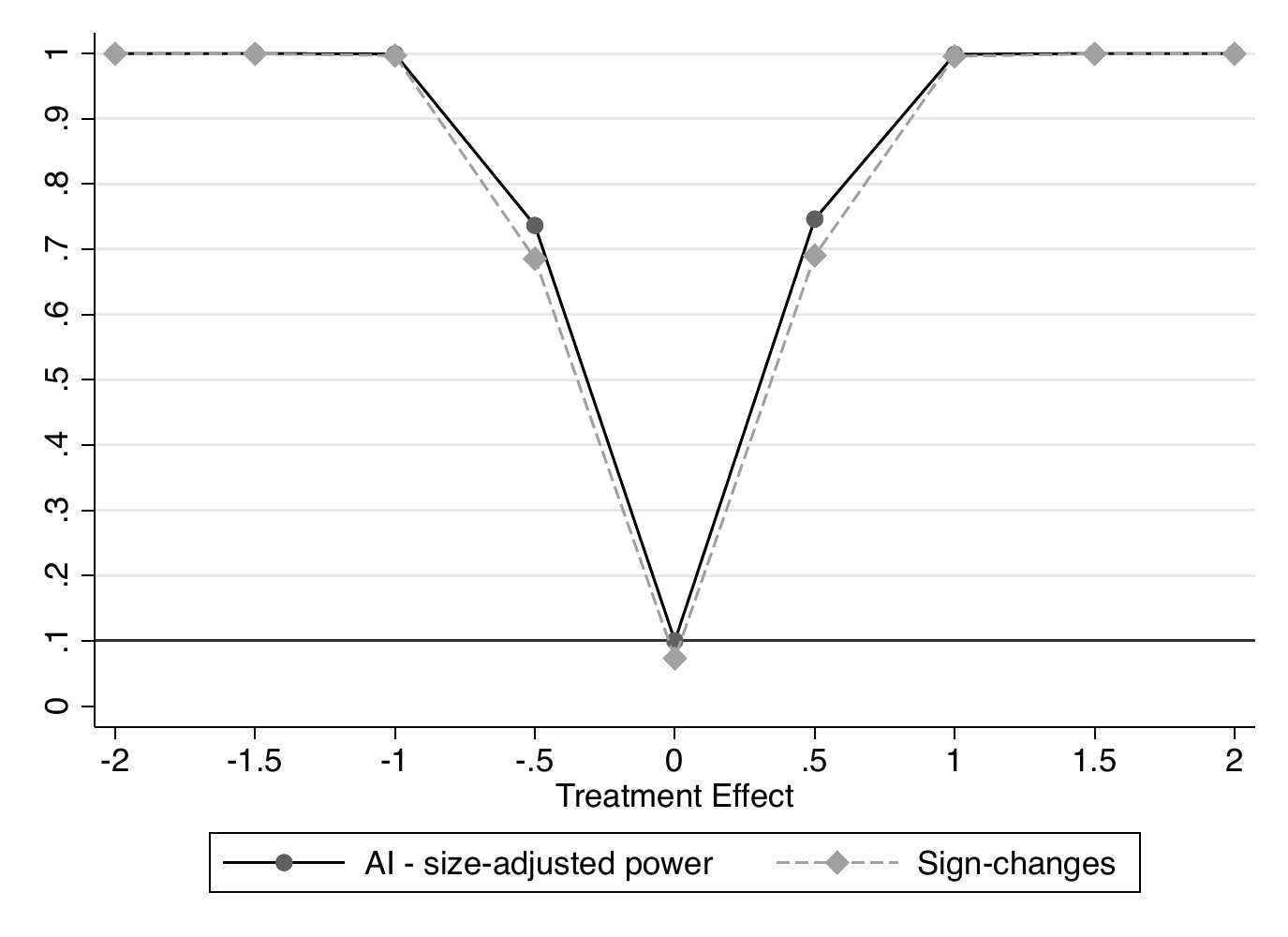} \\

\end{tabular}

\end{center}

\small{Notes: These figures present size-adjusted power for the asymptotic test based on AI at the 10\%-level, and the power for the sign-changes test.  We consider the setting presented in Panel A of Table \ref{Table_MC2},  with $M=4$.  In these simulations,  $X |(W=w) \sim N(0,1)$ for $w \in \{0,1\}$, $Y(0)|(X=x,W=0) \sim N(0,1)$ for all $x \in \mathbb{R}$, $\mu_1(x) = \tau + x$,  $(Y(1)-\mu_1(x))|(X=x,W=1) \sim N(0,1)$ for all $x \in \mathbb{R}$, and $N_0 = 1000$.  For each combination of $N_1$ and $\tau$, we run 5000 simulations. }

\end{figure}

\begin{table}[H]
  \begin{center}
\caption{{\bf Empirical illustration}} \label{empirical_aplication_control_text}
\begin{tabular}{lccccccc}
\hline
\hline
& \multicolumn{3}{c}{Rio de Janeiro} & &  \multicolumn{3}{c}{Sao Paulo} \\ \cline{2-4} \cline{6-8} 
& $M=1$  & $M=4$ & $M=10$ & & $M=1$ & $M=4$ & $M=10$ \\
     & (1) &  (2)   & (3)   & & (4) & (5) & (6)   \\ \hline

\underline{Treatment effects in 2010} \\
 
Point Estimate & 0.087 & -0.003 & 0.046 &  & 0.000 & 0.018 & 0.004 \\
 \\
p-values: \\
AI & 0.091 & 0.941 & 0.086 &  & 0.995 & 0.601 & 0.924 \\
RI-sign changes  & 0.123 & 0.938 & 0.179 &  & 0.996 & 0.609 & 0.917 \\
 \\
 \underline{Treatment effects in 2011} \\
 
 Point Estimate & 0.043 & -0.032 & 0.000 &  & -0.019 & -0.027 & -0.013 \\
 \\
p-values: \\
AI & 0.566 & 0.396 & 0.997 &  & 0.746 & 0.475 & 0.692 \\
RI-sign changes  & 0.662 & 0.438 & 0.997 &  & 0.734 & 0.496 & 0.693 \\
 \\
 
 \underline{Treatment effects in 2012} \\

Point Estimate & 0.070 & -0.019 & 0.006 &  & -0.072 & -0.034 & -0.019 \\
 \\
p-values: \\
AI  & 0.263 & 0.522 & 0.885 &  & 0.169 & 0.383 & 0.616 \\
RI-sign changes  & 0.306 & 0.576 & 0.896 &  & 0.185 & 0.382 & 0.495 \\
 \\

 \hline

\end{tabular}
\end{center}

\footnotesize Note: This table presents non-experimental results using a matching estimator with experimental control schools as treated observations and non-experimental schools as control observations. Columns 1 to 3 present results for Rio de Janeiro using 1, 4, or 10 nearest neighbors in the estimation, while columns 4 to 6 present results for Sao Paulo. We present the estimated effects separately for 2010, 2011, and 2012. For each estimate, we present p-values calculated based on the asymptotic distribution derived by AI, and based on the sign-changes test described in Section \ref{inference}. 

\end{table}

\pagebreak

\appendix

\doublespace

\begin{center}

\huge  Appendix (for online publication)

\end{center}

\section{Proof of Main Results}

\subsection{Lemmas }  \label{}

We start presenting some  lemmas that will be useful in the proofs of the main propositions of the paper. Lemma \ref{Lemma_convergenceX} shows that the covariates of the nearest neighbors to a point $\bar x \in \mathbb{X}_1$ converge in probability to $\bar x$. Lemma \ref{Lemma_convergenceY} considers the distribution of the outcomes of  nearest neighbors to $\bar x$. Finally, Lemmas \ref{Lemma_no_shared} and \ref{Lemma_no_shared2} show that  the probability that two treated observations share the same nearest neighbor converges  to zero. This is valid either when $N_1$ is fixed, or when $N_1 \rightarrow \infty$, but at a sufficiently lower rate than $N_0$.  

\begin{lemma} \label{Lemma_convergenceX}

Suppose Assumptions \ref{sample} and \ref{overlap} hold. Fix $\bar x \in \mathbb{X}_1$, and, for any $M \in \mathbb{N}$,  let $X_{(M)}$ be the covariate of the $M$-closest match of $\bar x$ among the control observations. Then $X_{(M)}  \buildrel p \over \rightarrow \bar x$ when $N_0 \rightarrow \infty$.

\end{lemma}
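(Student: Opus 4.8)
The plan is to fix $\bar x \in \mathbb{X}_1$ and show that for any $\delta > 0$, the probability that the $M$-closest control observation to $\bar x$ lies outside the ball $B_\delta(\bar x)$ tends to zero as $N_0 \to \infty$. The key quantity is $p_\delta := Pr(X \in B_\delta(\bar x) \mid W=0)$, the mass that the control covariate distribution places on a small ball around $\bar x$. First I would argue that $p_\delta > 0$ for every $\delta > 0$: by Assumption \ref{overlap}(iv), at least a fraction $\phi$ of the sphere $B_\delta(\bar x)$ belongs to $\mathbb{X}_1 \subseteq \mathbb{X}_0$ (using Assumption \ref{overlap}(ii)), so $B_\delta(\bar x) \cap \mathbb{X}_0$ has positive Lebesgue measure; since $f_0$ is bounded below on $\mathbb{X}_1$ by Assumption \ref{overlap}(iii), integrating $f_0$ over this set gives a strictly positive lower bound on $p_\delta$. (One should be slightly careful near the boundary of $\mathbb{X}_1$, but the fraction-$\phi$ condition is exactly what handles that case.)

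Next, conditional on $\bar x$, the control covariates $\{X_j\}_{j \in \mathcal{I}_0}$ are i.i.d.\ draws from $f_0$ by Assumption \ref{sample}, so the number of control observations falling in $B_\delta(\bar x)$ is $\mathrm{Binomial}(N_0, p_\delta)$. The event $\{X_{(M)} \notin B_\delta(\bar x)\}$ is contained in the event that fewer than $M$ of the $N_0$ control draws land in $B_\delta(\bar x)$, whose probability is $\sum_{\ell=0}^{M-1} \binom{N_0}{\ell} p_\delta^\ell (1-p_\delta)^{N_0 - \ell}$. With $M$ fixed and $p_\delta$ a fixed positive constant, each term behaves like a polynomial in $N_0$ times $(1-p_\delta)^{N_0}$, hence the whole sum tends to $0$ as $N_0 \to \infty$. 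Therefore $Pr(d(X_{(M)}, \bar x) > \delta) \to 0$ for every $\delta > 0$, which is exactly $X_{(M)} \buildrel p \over \rightarrow \bar x$. A minor technical point: the distance metric is $d(a,b) = [(a-b)'V(a-b)]^{1/2}$ with $V$ positive definite, so $d$-balls and Euclidean balls are comparable (a $d$-ball of radius $\delta$ contains a Euclidean ball of radius $\delta/\sqrt{\lambda_{\max}(V)}$), and the argument goes through with $B_\delta$ interpreted in either metric after adjusting constants.

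The main obstacle I anticipate is the positivity claim $p_\delta > 0$ at boundary points of $\mathbb{X}_1$ — without the fraction-$\phi$ regularity condition in Assumption \ref{overlap}(iv), a point $\bar x$ could sit on a cusp of $\mathbb{X}_1$ where every neighborhood intersects $\mathbb{X}_0$ in a set of vanishingly small relative measure, and then no uniform lower bound on $p_\delta$ would be available. Everything else (the binomial tail bound, the comparability of metrics) is routine. I would state the positivity bound as a short preliminary claim, prove it using Assumptions \ref{overlap}(ii)–(iv), and then close with the two-line binomial argument.
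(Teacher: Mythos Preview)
Your proposal is correct and follows essentially the same approach as the paper: both reduce the claim to showing that the probability of fewer than $M$ control draws landing in a $\delta$-ball around $\bar x$ is a binomial tail $\sum_{\ell=0}^{M-1}\binom{N_0}{\ell}p_\delta^\ell(1-p_\delta)^{N_0-\ell}$ that vanishes once $p_\delta>0$. If anything, you are more careful than the paper in justifying $p_\delta>0$ via Assumptions \ref{overlap}(ii)--(iv) and in noting the equivalence of $d$-balls and Euclidean balls; the paper simply asserts $Pr(d(X_j,\bar x)\le\epsilon)>0$ ``under Assumption \ref{overlap}'' and proceeds.
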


\begin{proof}
For a given $\epsilon>0$, 
\begin{eqnarray} \nonumber
\mbox{Pr} \left( d(X_{(M)},\bar x) >\epsilon \right) &=& \sum_{m=0}^{M-1} \mbox{Pr} \left( d(X_j,\bar x) \leq \epsilon \mbox{ for exactly $m$  observations  } j \in \mathcal{I}_0  \right) \\
&=&  \sum_{m=0}^{M-1}  \left( \begin{array}{c} N_0 \\ m \end{array} \right) [\mbox{Pr} (d(X_j,\bar x) \leq \epsilon)]^m [\mbox{Pr} (d(X_j,\bar x)>\epsilon)]^{N_0-m}.
\end{eqnarray} 

Since $\bar x \in  \mathbb{X}_1 \subset \mathbb{X}_0$, under Assumption \ref{overlap}, we have that $Pr(d(X_j,\bar x) \leq \epsilon)>0$, which implies that $\mbox{Pr} (d(X_j,\bar x)>\epsilon)<1$. Therefore,  $\mbox{Pr} \left( d(X_{(M)},\bar x)>\epsilon \right) \rightarrow 0$ for any $M \in \mathbb{N}$. 
\end{proof}

\begin{remark} \label{Mahalanobis}
\normalfont
Lemma \ref{Lemma_convergenceX} remains valid if we consider $\tilde d(a,b)$ as the Mahalanobis distance, provided $V = var \left( X | W =0 \right)$ is positive definite. In this case, as $N_0 \rightarrow \infty$ and $N_1$ is fixed, we have that the sample variance/covariance matrix of $X_i$, $\hat V$, converges in probability to $V$.\footnote{With $N_1$ fixed, the covariates of the treated observations will be asymptotically negligible. }  Let $d(X_{(M)},\bar x) = (X_{(M)}-\bar x)'V^{-1}(X_{(M)}-\bar x)$. Then, for any $\epsilon>0$, we can find a constant  $c>0$ such that 
\begin{eqnarray*} 
&&\mbox{Pr} \left(\tilde  d(X_{(M)},\bar x) >\epsilon \right) = \mbox{Pr} \left( d(X_{(M)},\bar x) >\epsilon -  (X_{(M)}-\bar x)'(\hat V^{-1}-V^{-1})(X_{(M)}-\bar x) \right)  \\
&&\leq  \mbox{Pr} \left( d(X_{(M)},\bar x) >\epsilon -  c \norm{\hat V^{-1}-V^{-1}}_2 \right) \\
&&= \mbox{Pr} \left( d(X_{(M)},\bar x) >\epsilon -  c \norm{\hat V^{-1}-V^{-1}}_2 \mid c \norm{\hat V^{-1}-V^{-1}}_2 \geq \frac{\epsilon}{2} \right) \mbox{Pr}\left(c \norm{\hat V^{-1}-V^{-1}}_2 \geq \frac{\epsilon}{2} \right)
\\
&& +\mbox{Pr} \left( d(X_{(M)},\bar x) >\epsilon -  c \norm{\hat V^{-1}-V^{-1}}_2 \mid c \norm{\hat V^{-1}-V^{-1}}_2<\frac{\epsilon}{2} \right) \mbox{Pr}\left(c \norm{\hat V^{-1}-V^{-1}}_2<\frac{\epsilon}{2} \right) \\
&&\leq \mbox{Pr} \left(c \norm{\hat V^{-1}-V^{-1}}_2 \geq \frac{\epsilon}{2} \right) + \mbox{Pr} \left( d(X_{(M)},\bar x) >\frac{\epsilon}{2} \right),
\end{eqnarray*}
where $\mbox{Pr} \left(c \norm{\hat V^{-1}-V^{-1}}_2 \geq \frac{\epsilon}{2} \right) \rightarrow 0$ from consistency of $\hat V$, while $ \mbox{Pr} \left( d(X_{(M)},\bar x) >\frac{\epsilon}{2} \right) \rightarrow 0$ from Lemma \ref{Lemma_convergenceX}. The first inequality follows from $\mathbb{X}_0$ compact.

\end{remark}

\begin{lemma}
\label{Lemma_convergenceY}
Suppose Assumptions \ref{sample}, \ref{CIA},  \ref{overlap}, and \ref{assumption_mu}(b) hold. Fix $\bar x \in \mathbb{X}_1$, and, for any $M \in \mathbb{N}$,  let $Y_{(M)}$ be the outcome of the $M$-closest match of $\bar x$ among the control observations $i \in \mathcal{I}_0$. Then $Y_{(M)}  \buildrel d \over \rightarrow Y(0) | (X=\bar x)$ when $N_0 \rightarrow \infty$. 

\end{lemma}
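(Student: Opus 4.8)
The plan is to combine Lemma \ref{Lemma_convergenceX}, which already gives $X_{(M)} \buildrel p \over \rightarrow \bar x$, with Assumption \ref{assumption_mu}(b), which encodes the ``smoothness in distribution'' of $Y(0)|(X=x)$ in $x$, and then invoke the Portmanteau Lemma to upgrade convergence in probability of the conditioning covariate to convergence in distribution of the conditioned outcome. First I would set up the right object: by Assumption \ref{CIA}, conditioning on $X$ and on $W=0$ gives the same law for $Y(0)$ as conditioning on $X$ alone, so $Y_{(M)}$ is distributed as $Y(0)|(X = X_{(M)})$, where $X_{(M)}$ is a random point that concentrates at $\bar x$. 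To show $Y_{(M)} \buildrel d \over \rightarrow Y(0)|(X=\bar x)$, by the Portmanteau Lemma it suffices to show that $\mathbb{E}[h(Y_{(M)})] \to \mathbb{E}[h(Y(0))|X=\bar x]$ for every bounded continuous $h$.

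The key computation is then to condition on $X_{(M)}$ and write $\mathbb{E}[h(Y_{(M)})] = \mathbb{E}\big[\,\tilde h(X_{(M)})\,\big]$, where $\tilde h(x) = \mathbb{E}[h(Y(0))|X=x]$; this uses the tower property together with the fact that, given the realized match point, the outcome of that control observation is drawn from $Y(0)|(X=x)$ (independence across observations from Assumption \ref{sample}, plus Assumption \ref{CIA}). Assumption \ref{assumption_mu}(b) now gives that $\tilde h$ is continuous and bounded. Since $X_{(M)} \buildrel p \over \rightarrow \bar x$, the continuous mapping theorem yields $\tilde h(X_{(M)}) \buildrel p \over \rightarrow \tilde h(\bar x)$, and boundedness of $\tilde h$ lets me pass to expectations by the bounded convergence theorem (or dominated convergence), so $\mathbb{E}[\tilde h(X_{(M)})] \to \tilde h(\bar x) = \mathbb{E}[h(Y(0))|X=\bar x]$. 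Running this for all bounded continuous $h$ and applying Portmanteau in the reverse direction gives the claimed convergence in distribution.

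I expect the main obstacle to be the careful justification of the identity $\mathbb{E}[h(Y_{(M)}) \mid X_{(M)} = x] = \tilde h(x)$, i.e.\ that conditioning on the \emph{event of being the $M$-th nearest neighbor} does not distort the conditional law of that control's outcome beyond what conditioning on its covariate value already does. This is where one must use that the matching rule depends on the control sample only through the covariates $\{X_j\}_{j \in \mathcal{I}_0}$ (and on $\bar x$), so that, conditional on the vector of control covariates, the outcomes $\{Y_j\}$ are independent draws from the respective $Y(0)|(X=X_j)$ laws; hence selecting the index achieving the $M$-th smallest distance is a function of covariates alone and leaves the conditional outcome law intact. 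Once that measurability/conditional-independence point is nailed down, the remainder is a routine Portmanteau-plus-bounded-convergence argument. A minor secondary point is handling the case $\bar x$ on the boundary of $\mathbb{X}_1$, but Assumption \ref{overlap}(iv) and continuity of $\tilde h$ on the compact support $\mathbb{X}_0 \supseteq \mathbb{X}_1$ take care of it.
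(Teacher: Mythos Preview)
Your proposal is correct and follows essentially the same route as the paper: condition on $X_{(M)}$, use the tower property to write $\mathbb{E}[h(Y_{(M)})]=\mathbb{E}[\tilde h(X_{(M)})]$, then combine Lemma \ref{Lemma_convergenceX} with the continuity and boundedness of $\tilde h$ from Assumption \ref{assumption_mu}(b) to pass to the limit, and conclude via the Portmanteau Lemma. If anything, you are more explicit than the paper about why $\mathbb{E}[h(Y_{(M)})\mid X_{(M)}=x]=\tilde h(x)$ holds (the matching rule depends only on covariates), a point the paper simply asserts.
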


\begin{proof}
 Let  $h(y)$ be a  continuous and bounded function, and let $\tilde h(x) = \mathbb{E}[h(Y(0))|X=x]$.\footnote{Note that, under Assumption \ref{CIA}, we have that  $ \mathbb{E}[h(Y(0))|X=x] =  \mathbb{E}[h(Y(0))|X=x,W=0]$.} From Assumption \ref{assumption_mu}(b), we have that  $\tilde h(x)$ is continuous and bounded as a function of $x$. Now for  a given $\bar x \in \mathbb{X}_1$, let $X_{(M)}$ be the covariate of the $M-$nearest neighbor to $\bar x$  among the control observations among the control observations $i \in \mathcal{I}_0$. Then we have that
\begin{eqnarray} 
\mathbb{E}[h(Y_{(M)})]&=&\mathbb{E} \left\{\mathbb{E}[h(Y_{(M)}) | X_{(M)}] \right\} = \mathbb{E} \left\{\tilde h(X_{(M)}) \right\} \rightarrow \tilde h( \bar x) = \mathbb{E}  [ h(Y(0)) | X= \bar x],
\end{eqnarray}
where convergence follows from Lemma \ref{Lemma_convergenceX} and from the fact that $\tilde h(x)$ is continuous and bounded from Assumption  \ref{assumption_mu}(b). By the Portmanteau Lemma, we have that $Y_{(M)} \buildrel d \over \rightarrow Y(0)|\{X = \bar x\}$ for any $M \in \mathbb{N}$. 
\end{proof}

\begin{lemma} \label{Lemma_no_shared}

Let $\Omega$ be the event that there is no shared nearest neighbor. Suppose Assumptions \ref{sample} and \ref{overlap} hold. Assume either that (i) $N_1$ is fixed or that (ii) there are $0 < \theta<\infty$ and $r>1$ such that  $N_1^r/N_0 \rightarrow \theta$. Then $Pr(\Omega) \rightarrow 1$ when $N_0 \rightarrow \infty$. 

\end{lemma}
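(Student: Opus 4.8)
\textbf{Proof plan for Lemma \ref{Lemma_no_shared}.}

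The plan is to bound the probability of the complementary event---that \emph{some} pair of treated observations shares a nearest neighbor---and show it vanishes. First I would fix two treated observations $i,j \in \mathcal{I}_1$ with covariate values $X_i = x$ and $X_j = x'$, both in $\mathbb{X}_1$, and condition on the realized covariates of the treated units. The key geometric observation is that a control observation can be the nearest neighbor of both $i$ and $j$ only if it lies in a region that is ``close to both'' $x$ and $x'$ in the metric $d$; but since $\mathbb{X}_1$ is compact and bounded away from having zero-density overlap (Assumption \ref{overlap}(iii),(iv)), for $x \neq x'$ the probability mass assigned to any small ball shrinks, and with many controls each treated unit's nearest neighbor is, with high probability, within a tiny radius of its own $X_i$. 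More precisely, I would argue that for any $\delta>0$, with probability tending to one each treated unit $i$ has its $M$ nearest neighbors within distance $\delta$ of $X_i$ (this is exactly Lemma \ref{Lemma_convergenceX}, applied finitely or countably many times), and if the pairwise distances $d(X_i,X_j)$ are all bounded below by some $\rho>0$, then taking $\delta < \rho/2$ forces the neighbor sets $\mathcal{J}_M(i)$ and $\mathcal{J}_M(j)$ to be disjoint.

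The complication is that the $X_i$ for treated units are random and can be arbitrarily close to each other, so there is no deterministic lower bound $\rho$. I would handle this by a direct union-bound computation on the probability that a given control observation falls into the ``overlap'' region of the balls around $X_i$ and $X_j$. Conditional on $X_i, X_j$, the probability that a single control lies closer to both $X_i$ and $X_j$ than some other control does is bounded by a quantity that is small when $N_0$ is large; integrating out over the joint (bounded-density) distribution of $(X_i,X_j)$ and summing over the at most $\binom{N_1}{2}$ pairs of treated units gives $Pr(\Omega^c) \leq C \binom{N_1}{2}\, a(N_0)$ for some sequence $a(N_0)\to 0$. In case (i), $\binom{N_1}{2}$ is a fixed constant, so the bound goes to zero immediately. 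In case (ii), I need the rate: $\binom{N_1}{2} = O(N_1^2)$, and since $N_1^r/N_0 \to \theta$ with $r>1$, one has $N_1 = O(N_0^{1/r})$; I would then check that $a(N_0)$ decays fast enough (it will involve a factor like $N_0^{-1}$ times a volume term, as in AI's shared-neighbor arguments) so that $N_1^2\, a(N_0) \to 0$ even accounting for the exponent $1/r$.

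I expect the main obstacle to be making the geometric/combinatorial estimate on $a(N_0)$ precise while tracking its dependence on $N_0$ sharply enough for case (ii)---in particular, ensuring the exponent on $N_0$ in the decay beats $2/r$. This is essentially the same calculation AI carry out for matching estimators when $N_0$ grows faster than $N_1$, so I would follow their approach: bound the expected number of times a given control observation is used as a match (which is $O(1)$ under the density conditions in Assumption \ref{overlap}), and then bound the probability of a collision by a second-moment / birthday-type argument, using Assumption \ref{overlap}(iii) to control densities from above and below on $\mathbb{X}_1$. The details of the finitely-many-versus-growing-many case are then just the arithmetic of plugging $N_1 = O(N_0^{1/r})$ into the collision bound.
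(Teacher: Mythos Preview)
Your opening paragraph actually sketches the paper's argument and then you abandon it. You observe that if every treated unit's $M$-th neighbor lies within radius $\delta$ of its own covariate, and all pairwise treated distances exceed $2\delta$, then sharing is impossible. The paper's proof is exactly this, with a \emph{deterministic, $N_0$-dependent} threshold $\delta=\epsilon_{N_0}:=N_0^{-\tilde r/(rk)}$ for some $\tilde r\in(1,r)$ (in case~(i), any exponent in $(0,1)$ will do). It then decomposes
\[
Pr(\Omega^c)\;\le\;Pr\Bigl(\min_{i\neq j\in\mathcal I_1}\|X_i-X_j\|\le 2\epsilon_{N_0}\Bigr)
\;+\;Pr\Bigl(\max_{i\in\mathcal I_1}\|X_i-X_{(M)}^i\|>\epsilon_{N_0}\Bigr),
\]
and bounds each piece separately using the upper and lower density bounds in Assumption~\ref{overlap}. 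The second term decays essentially exponentially because controls fill $\mathbb X_1$ at the rate $N_0^{-1/k}\ll\epsilon_{N_0}$; the first is a spacing statement about $N_1$ i.i.d.\ draws from $f_1$ and uses only that $f_1$ is bounded above.

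Your second paragraph instead proposes the direct pairwise union bound $Pr(\Omega^c)\le\binom{N_1}{2}a(N_0)$ and conjectures $a(N_0)\sim N_0^{-1}$. That heuristic is correct---the unconditional probability that two independent $f_1$-draws land within each other's nearest-neighbor ball of radius $\sim N_0^{-1/k}$ is indeed $O(N_0^{-1})$---but then $\binom{N_1}{2}a(N_0)\sim N_1^2/N_0\sim N_0^{2/r-1}$, which vanishes only for $r>2$, not for the full range $r>1$ claimed in the lemma. The birthday/second-moment route does not fix this: the \emph{expected} number of colliding pairs is genuinely of order $N_1^2/N_0$, so Markov over pairs cannot do better. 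The threshold decomposition is precisely the device the paper uses to try to avoid paying the full $N_1^2$ factor, by separating ``treated points are too close to each other'' from ``controls are not yet dense enough,'' and tuning $\epsilon_{N_0}$ so the latter is negligible.

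For case~(i) your plan is fine (a fixed number of pairs times $a(N_0)\to 0$). For case~(ii) you should return to the idea in your first paragraph---the threshold decomposition with an explicit rate---rather than the pairwise union bound, and then track carefully how the first term depends on $N_1$; this is where the sharpness needed for $r>1$ versus $r>2$ lives.
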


\begin{proof}
If there are two distinct treated units, $i,j \in \mathcal{I}_1$ that share the same nearest neighbor, then it must be that $\norm{X_i - X_j} \leq \norm{X_i - X_{(M)}^i } + \norm{X_j - X_{(M)}^j }$, where $X_{(M)}^i $ is the covariate of the $M-$th nearest neighbor of observation $i$ among the control observations. Therefore, the probability of having shared nearest neighbors is bounded by
\begin{eqnarray}
Pr(\Omega^c) \leq Pr\left( \underset{i,j \in \mathcal{I}_1}{\mbox{min}} \norm{X_i - X_j}  < 2 \underset{i \in \mathcal{I}_1}{\mbox{max}} \norm{X_i - X^i_{(M)}}   \right),
\end{eqnarray}
where, with some abuse of notation, $ \underset{i,j \in \mathcal{I}_1}{\mbox{min}}$ should always be understood as $i \neq j$. Let $\tilde r \in (1,r)$. If $N_1$ is fixed, then just substitute  $\tilde r/r$ for a constant in $(0,1)$. Then 
\begin{eqnarray} \label{prob_omega}
Pr(\Omega^c) &\leq&  Pr\left(  \underset{i,j \in \mathcal{I}_1}{\mbox{min}} \norm{X_i - X_j} \leq 2 N_0^{-\tilde r/rk}   \right) \\ \nonumber
&&  + Pr\left(  \underset{i \in \mathcal{I}_1}{\mbox{max}} \norm{X_i - X^i_{(M)}} > N_0^{-\tilde r/rk} \middle| \underset{i,j \in \mathcal{I}_1}{\mbox{min}} \norm{X_i - X_j} >2 N_0^{-\tilde r/rk}   \right) .
\end{eqnarray}

We show that the two terms on the right hand side of equation (\ref{prob_omega}) converge to zero. Let $\Gamma_s$ be the event in which $X_s$ is one of the observations that minimizes $ \norm{X_i - X_j} $.    For the first term, note that  
\begin{eqnarray} \label{eq1234} \nonumber
Pr\left(  \underset{i,j \in \mathcal{I}_1}{\mbox{min}} \norm{X_i - X_j} \leq 2 N_0^{-\tilde r/rk}   \right) &\leq& \sum_{s \in \mathcal{I}_1} Pr\left(  \underset{i,j \in \mathcal{I}_1}{\mbox{min}} \norm{X_i - X_j} \leq 2 N_0^{-\tilde r/rk}   \middle| \Gamma_s \right)Pr\left(\Gamma_s\right) \\
&\leq& \frac{2}{N_1} \sum_{s \in \mathcal{I}_1}  \underset{a \in \mathbb{X}_1}{\mbox{max}}  Pr\left( \norm{Z_{(1)}-a} \leq 2 N_0^{-\tilde r/rk}   \right)  \\
&=& 2\underset{a \in \mathbb{X}_1}{\mbox{max}}  Pr\left(  \norm{Z_{(1)}-a} \leq  2 N_0^{-\tilde r/rk}   \right),
\end{eqnarray}
where $Z_{(1)}$ is the nearest neighbor to $a$ when we sample $N_1$ observations of $X_i$ with density $f_1(x)$. We used that the probability that $i$ is one of the pair with minimum distance is $2/N_1$. Moreover, $Pr\left(  \underset{i,j \in \mathcal{I}_1}{\mbox{min}} \norm{X_i - X_j} \leq 2 N_0^{-\tilde r/rk}   \middle| \Gamma_s \right)$ is the integral of $Pr\left( \norm{Z_{(1)}-a} \leq 2 N_0^{-\tilde r/rk}   \right)$ where we integrate over $a$ using the distribution of $X_w$ conditional on $\Gamma_s$. Therefore, this probability is bounded by the maximum over $a \in \mathbb{X}_1$.

We show that this bound converges to zero.  We use some results and definitions used in the proof of  Lemma 1 from AI. Let $\mathbb{S}_k = \{\omega \in \mathbb{R}^k : \norm{\omega}=1\}$ be the unit $k$ sphere and let $\lambda_{\mathbb{S}_k}$ be its surface measure. Fix an $a \in \mathbb{X}_1$, and let $f_{Z_{(1)}}(z;a)$ be the density of $Z_{(1)}$. A derivation of $f_{Z_{(1)}}(z;a)$ can be found in the proof of Lemma 1 from AI. With a slight abuse of notation, we define throughout $\tilde c$ as a constant that may vary across equations, but that does not depend on $N_1$, $N_0$, and $a$. Therefore, for any $a \in \mathbb{X}_1$,
\begin{eqnarray}
&& Pr\left(  \norm{Z_{(1)} -a} \leq 2  N_0^{-\tilde r/rk}  \right) = \int_0^{2N_0^{-\tilde r/rk}} q^{k-1} \left( \int_{\mathbb{S}_k}f_{Z_{(1)}}\left( a + r \omega;a \right) \lambda_{\mathbb{S}_k}(d \omega) \right)dq \\ \nonumber
&&\leq  \int_0^{2N_0^{-\tilde r/rk}} q^{k-1} \left( \int_{\mathbb{S}_k} N_1 f_1\left( a + r \omega \right) \left( 1 - Pr(\norm{X-a} \leq  \norm{r \omega} | W=1) \right)^{N_1 - 1} \lambda_{\mathbb{S}_k}(d \omega) \right)dq \\ \nonumber
&&\leq \tilde c N_1  \int_0^{2N_0^{-\tilde r/rk}} q^{k-1} dq = \tilde c N_1 N_0^{-\tilde r/r},
\end{eqnarray}
where $\tilde c$ does not depend on $a$, $N_0$ and $N_1$. If $N_1$ is fixed, then $N_1 N_0^{-\tilde r/r} \rightarrow 0$ uniformly in $a$. If $N_1$ diverges, then  $N_1 N_0^{-\tilde r/r} =  N_0^{(1-\tilde r)/r}(N_1^r/N_0)^{1/r}$, which also converges uniformly to zero because  $\tilde r>1$. Therefore, the first term  on the right hand side of equation (\ref{prob_omega}) converges to zero.

We now consider the second term   on the right hand side of equation (\ref{prob_omega}). Fix a sequence $\{X_1,...,X_{N_1}\} = \{a_1,...,a_{N_1}\}$. Then
\begin{eqnarray} 
Pr\left(  \underset{i \in \mathcal{I}_1}{\mbox{max}} \norm{a_i - X^i_{(M)}} >  N_0^{-\tilde r/rk}   \right)  &\leq& \sum_{i \in \mathcal{I}_1} Pr\left(   \norm{a_i - X^i_{(M)}} >  N_0^{-\tilde r/rk}  \right) \\
& \leq& N_1  \underset{a \in \mathbb{X}_1}{\mbox{max}}  Pr\left(   \norm{a - X_{(M)}} >  N_0^{-\tilde r/rk}   \right),
\end{eqnarray}
where $X_{(M)}$ is the $M-$th closest match to $a$ among the control observations.

Let $\bar D$ be the diameter of $\mathbb{X}_0$, and fix an  $a \in \mathbb{X}_1$. Since $\mathbb{X}_0$ is compact, $\bar D<\infty$. Let $f_{X_{(M)}}(x;a)$ be the density of  $X_{(M)}$. Then,
\begin{eqnarray*} 
&& Pr\left(   \norm{a - X_{(M)}} >  N_0^{-\tilde r/rk}   \right) = \int_{N_0^{-\tilde r/rk}}^{\bar D} q^{k-1} \left(  \int_{\mathbb{S}_k}f_{X_{(M)}}(a + rw;a) \lambda_{\mathbb{S}_k} (d \omega)  \right)dq \\
&=& \nonumber \int_{N_0^{-\tilde r/rk}}^{\bar D} q^{k-1} \left(   \begin{array}{l}  \int_{\mathbb{S}_k}N_0 f_0(a+rw) \left( \begin{array}{l}  N_0-1 \\ M-1 \end{array}  \right) \left( 1 - Pr(\norm{X-a} \leq \norm{rw} \mid W=0)   \right)^{N_0-M} \times  \\   \times Pr(\norm{X-a} \leq \norm{rw} | W=0)^{M-1}  \lambda_{\mathbb{S}_k} (d \omega)    \end{array} \right)dq \\
&\leq& \tilde c N_0 \left( \begin{array}{l}  N_0-1 \\ M-1 \end{array}  \right)\left( 1 - Pr(\norm{X-a} \leq N_0^{-\tilde r/rk} | W=0) \right)^{N_0-M} \int_{N_0^{-\tilde r/rk}}^{\bar D} q^{k-1} dq \\
&=& \tilde c N_0  (N_0-1)...(N_0-M+1) \left( 1 - Pr(\norm{X-a} \leq N_0^{-\tilde r/rk}|W=0)\right)^{N_0}(1+o(1)) (\bar D^k - N_0^{-\tilde r /r}) \\
&=&  (1+o(1)) h(N_0) \left( 1 - Pr(\norm{X-a} \leq N_0^{-\tilde r/rk} | W=0) \right)^{N_0},
\end{eqnarray*}
for some polynomial $h(N_0)$.

Now note that, since $f_0(x)$ is bounded from below in $\mathbb{X}_1$, and for all points $x \in \mathbb{X}_1$ at least a fraction $\phi$ of any sphere around $x$ belongs to  $\mathbb{X}_1$, we have that 
\begin{eqnarray*}
Pr(\norm{X-a} \leq N_0^{-\tilde r/rk} | W=0) &=& \int_0^{N_0^{-\tilde r/rk}} q^{k-1} \left(  \int_{\mathbb{S}_k}f_{0}(a + rw) \lambda_{\mathbb{S}_k} (d \omega)  \right)dq \geq   c  N_0^{-\tilde r/r},
\end{eqnarray*}
for some positive constant $c$.

Therefore,
\begin{eqnarray*} 
N_1  \underset{a \in \mathbb{X}_1}{\mbox{max}}  Pr\left(   \norm{a - X_{(M)}} >  N_0^{-\tilde r/rk}   \right) &\leq&  (1+o(1)) N_1 h(N_0) \left( 1 - cN_0^{-\tilde r/r} \right)^{N_0}\\
 &\leq& N_1  (1+o(1)) h(N_0) exp \left(  - cN_0^{1 - \tilde r/r} \right) \rightarrow 0,
\end{eqnarray*}
since $\tilde r < r$. Combining these results, we have that $Pr(\Omega) \rightarrow 1$.
\end{proof}

\begin{lemma} \label{Lemma_no_shared2}

Let $S_{N_1} =1$ if there is a shared nearest neighbor, and zero otherwise. Suppose Assumptions \ref{sample} and \ref{overlap} hold. Assume that there are $0 < \theta<\infty$ and $r>2$ such that  $N_1^r/N_0 \rightarrow \theta$. Then $S_{N_1} \buildrel a.s. \over \rightarrow 0$.

\end{lemma}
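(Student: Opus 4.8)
The plan is to upgrade the in-probability conclusion of Lemma \ref{Lemma_no_shared} to almost sure convergence via the first Borel--Cantelli lemma. To give the statement $S_{N_1}\to 0$ a.s.\ a precise meaning, I would first fix a single probability space carrying an infinite i.i.d.\ sequence of control covariates drawn from $f_0$ (and, when $N_1$ grows, an infinite i.i.d.\ sequence of treated covariates drawn from $f_1$), with the size-$N_0$ control sample taken to be the first $N_0$ terms; then each $S_{N_1}$ is a well-defined $\{0,1\}$-valued random variable on this common space. It then suffices to show $\sum_{N_1\geq 1}\Pr(S_{N_1}=1)<\infty$, since this forces $\Pr(S_{N_1}=1 \text{ infinitely often})=0$ and hence $S_{N_1}=0$ eventually with probability one.

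For the summability I would reuse the bound already established inside the proof of Lemma \ref{Lemma_no_shared}: for any $\tilde r\in(1,r)$ one has $\Pr(S_{N_1}=1)=\Pr(\Omega^c)\leq A_{N_1}+B_{N_1}$, where $A_{N_1}\leq \tilde c\,N_1 N_0^{-\tilde r/r}$ and $B_{N_1}\leq (1+o(1))\,N_1\,h(N_0)\,\exp(-c\,N_0^{1-\tilde r/r})$, with $h$ a polynomial and $\tilde c,c$ constants that do not depend on $N_1$ or $N_0$. The single improvement over Lemma \ref{Lemma_no_shared} is that the hypothesis here sharpens $r>1$ to $r>2$, which lets me pick $\tilde r\in(2,r)$. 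Using $N_1^r/N_0\to\theta\in(0,\infty)$, for $N_1$ large we have $N_0\geq (2\theta)^{-1}N_1^r$, so $A_{N_1}\leq \tilde c'\,N_1\cdot N_1^{-\tilde r}=\tilde c'\,N_1^{1-\tilde r}$, which is summable because $\tilde r>2$ gives $1-\tilde r<-1$. For $B_{N_1}$, note $1-\tilde r/r>0$ and $N_0\to\infty$, so $\exp(-c\,N_0^{1-\tilde r/r})$ decays faster than any negative power of $N_0$, hence (again using $N_0\asymp N_1^r$) faster than any negative power of $N_1$; the polynomial prefactor $N_1 h(N_0)$ does not overturn this, so $B_{N_1}$ is summable as well. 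Combining, $\sum_{N_1}\Pr(S_{N_1}=1)<\infty$, and the first Borel--Cantelli lemma gives $S_{N_1}\to 0$ almost surely.

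I expect the only real work to be bookkeeping: checking that the constants appearing in the proof of Lemma \ref{Lemma_no_shared} are genuinely independent of $N_1$ and $N_0$ (they are, as recorded there), and confirming that the choice $\tilde r\in(2,r)$ — available precisely because $r>2$ — is exactly what renders the dominant term $A_{N_1}=O(N_1^{1-\tilde r})$ summable. The potential conceptual pitfall is thinking one needs a monotone coupling of the shared-neighbor events across $N_1$; in fact no coupling or monotonicity is required, since Borel--Cantelli consumes only the marginal probabilities, so the whole argument reduces to the summable tail bound above.
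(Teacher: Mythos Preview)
Your proposal is correct and follows essentially the same route as the paper: reuse the two-term bound from the proof of Lemma~\ref{Lemma_no_shared}, exploit $r>2$ to choose $\tilde r\in(2,r)$ so that $N_1 N_0^{-\tilde r/r}\asymp N_1^{1-\tilde r}$ is summable, note the exponential term is trivially summable, and apply Borel--Cantelli. The paper's own proof is the terse version of exactly this argument.
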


\begin{proof}
For any $u \in (0,1)$, note that $Pr(|S_{N_1} - 0|>u ) = Pr(\Omega^c)$. Now set $\tilde r \in (2,r)$. From the proof of Lemma \ref{Lemma_no_shared}, we have that 
\begin{eqnarray} 
Pr(|S_{N_1} - 0|>u)  \leq  (1+o(1)) h(N_1) exp \left(  - c(1+o(1))N_1^{r - \tilde r} \right) + \tilde c (1+o(1)) N_1^{1- \tilde r}.
\end{eqnarray}

Therefore, $\sum_{N_1=1}^\infty Pr(|S_{N_1} - 0|>u) <\infty$, implying that $S_{N_1} \buildrel a.s. \over \rightarrow 0$.
\end{proof}

\subsection{Proof of Proposition \ref{unbiased} }  \label{proof_unbiased}

\begin{proposition_b}{3.1:}
(1) Under Assumptions \ref{sample}, \ref{CIA},  \ref{overlap}, and \ref{assumption_mu}(a), $\mathbb{E}[\hat \tau ] \rightarrow \tau$ when $N_0 \rightarrow \infty$ and $N_1$ is fixed.

(2) Under Assumptions \ref{sample}, \ref{CIA},  \ref{overlap}, and \ref{assumption_mu}(b), 
\begin{eqnarray} \nonumber
\hat \tau \buildrel d \over \rightarrow  \tau + \frac{1}{N_1}  \sum_{i \in \mathcal{I}_1} \kappa_i  \mbox{ when $N_0 \rightarrow \infty$ and $N_1$ is fixed}, 
\end{eqnarray}
where the CDF of $\kappa_i$ is given by $\widetilde G(\kappa) = \int_{x \in \mathbb{X}_1} G(\kappa; x)f_1(x)dx$, and $\mathbb{E}[\kappa_i]=0$.   Moreover, $\{\kappa_i\}_{i \in \mathcal{I}_1}$ is mutually independent.

\end{proposition_b}

\begin{proof}

\textbf{Part 1: }
Without loss of generality, let $i=1,...,N_1$ be the treated observations, and let  $\mathbf{X} = [X_1 ~ X_2 ~ \cdots ~ X_{N_1}]$. Fix $ \bar{\mathbf{x}} =  [\bar x_1 ~ \bar x_2 ~ \cdots ~ \bar x_{N_1}] \in \otimes_{i=1}^{N_1} \mathbb{X}_1 $. We have that
\begin{eqnarray}
\mathbb{E}[\hat \tau | \mathbf{X} = \bar{\mathbf{x}}  ] = \frac{1}{N_1}  \sum_{i \in \mathcal{I}_1}  \left( \mu_1(\bar x_i)  - \mathbb{E} \left[ \frac{1}{M} \sum_{m=1}^M \mu_0 ( X_{(m)}^i) \mid X_i = \bar x_i  \right] \right). 
\end{eqnarray}

Since $\mu_0(x)$ is continuous and $\mathbb{X}_0$ is bounded,  and $X_{(m)}^i \buildrel p \over \rightarrow X_i = \bar x_i$ (Lemma \ref{Lemma_convergenceX}), it follows that  $\mathbb{E}[\mu_0 ( X_{(m)}^i) | X_i = \bar x_i] \rightarrow \mu_0(\bar x_i)$. Therefore, 
\begin{eqnarray} \label{eq_unbiased}
\mathbb{E}[\hat \tau | \mathbf{X} =  \bar{\mathbf{x}}] \rightarrow \frac{1}{N_1}  \sum_{i \in \mathcal{I}_1}  \left( \mu_1(\bar x_i) - \mu_0(\bar x_i) \right). 
\end{eqnarray}

Now we consider the unconditional expectation of $\hat \tau$,
\begin{eqnarray}
\mathbb{E}[\hat \tau  ] =\mathbb{E} \{ \mathbb{E}[\hat \tau | \mathbf{X}  ]   \}=  \frac{1}{N_1}  \sum_{i \in \mathcal{I}_1} \mathbb{E} \left[ \mu_1(X_i)  -  \frac{1}{M} \sum_{m=1}^M \mu_0 ( X_{(m)}^i)  \right].
\end{eqnarray}

We need that, for $i \in \mathcal{I}_1$, $\mathbb{E}[\mu_0 ( X_{(m)}^i) ] \rightarrow \mathbb{E}[\mu_0 ( X) \mid W =1 ] $.  We know that $\mathbb{E}[\mu_0 ( X_{(m)}^i) | X_i = \bar x_i ] \rightarrow \mu_0(\bar x_i)$ for all $\bar x_i \in \mathbb{X}_1$. Using the fact that $\mu_0(x)$ is continuous and $\mathbb{X}_0$ is compact, we have that $\mathbb{E}[\mu_0 ( X_{(m)}^i)  ] =\mathbb{E}\{ \mathbb{E}[\mu_0 ( X_{(m)}^i) | X_i] \} \rightarrow \mathbb{E} [\mu_0(X_i) ] = \mathbb{E} [\mu_0(X) \mid W=1]$. Therefore,
\begin{eqnarray}
\mathbb{E}[\hat \tau ] \rightarrow \mathbb{E} \left[ \mu_1(X)  - \mu_0 ( X) \mid W=1 \right],
\end{eqnarray}
 which proves part 1 of Proposition \ref{unbiased}.

\textbf{Part 2: }  For each $x \in \mathbb{X}_1$, let $\xi_x \sim (Y(1) - \mu_1(x)) | (X=x,W=1)$ and  $\eta_x  \sim (Y(0) - \mu_0(x)) | (X=x,W=0)$. Moreover, let $G(\kappa; x)$ be the CDF of $(\mu_1(x) - \mu_0(x) - \tau) + \xi_x - \frac{1}{M} \sum_{m=1}^M \eta_x^m$, where $\{\eta_x^m \}_{m=1}^M$ are iid copies of $\eta_x$, and  $(\xi_x,\eta^1_x,\cdots,\eta^M_x)$ is mutually  independent. 

For a given $i \in \mathcal{I}_1$, consider $ Y_i - \frac{1}{M} \sum_{m=1}^M Y^i_{(m)}$. If we condition on $X_i = \bar x$, then  $(Y_i | X_i = \bar x) \sim (\mu_1(\bar x) + \xi_{\bar x})$. From  Lemma \ref{Lemma_convergenceY}, we also know that, conditional on $X_i = \bar x$, $Y^i_{(m)}$ converges in distribution to $(\mu_0(\bar x) + \eta^m_{\bar x})$.  Moreover, given Assumption \ref{sample}, conditional on $X_i = \bar x$, the asymptotic distribution of $(Y_i,Y^i_{(1)},...,Y^i_{(M)})$ is mutually independent.  

Combining these results, conditional on $X_i = \bar x$,  $ Y_i - \frac{1}{M} \sum_{m=1}^M Y^i_{(m)}$ converges in distribution to $\tau + (\mu_1(\bar x) - \mu_0(\bar x) - \tau ) + \xi_{\bar x} - \frac{1}{M} \sum_{m=1}^M \eta^m_{\bar x}$, where $(\xi_{\bar x},\eta_{\bar x}^1, \cdots ,\eta_{\bar x}^M)$ is mutually independent. Therefore, for all $i \in \mathcal{I}_1$,
\begin{eqnarray}
Pr \left( Y_i - \frac{1}{M} \sum_{m=1}^M Y^i_{(m)} - \tau \leq c \mid X_i = \bar x \right) \rightarrow G(c ; \bar x),
\end{eqnarray}
for all $c \in \mathbb{R}$ in which $G(c ; \bar x)$ is continuous.

Now we show that  $Pr\left(Y_i - \frac{1}{M} \sum_{m=1}^M Y^i_{(m)} - \tau \leq c \right) \rightarrow \int_{x \in \mathbb{X}_1} G(c; x)f_1(x)dx \equiv \widetilde G(c)$ for all $c \in \mathbb{R}$ in which $\tilde G(c)$ is continuous. 

Define $\Sigma(c) = \{x \in \mathbb{X}_1 : G(c; x) \mbox{ is continuous at } c \}$.  Since $G(c; x)$ are CDF's, we have that  $G(c; x) -  \lim_{h \uparrow c}G(h;x)=0$ if $x \in \Sigma(c)$, and  $G(c; x) -  \lim_{h \uparrow c}G(h;x)>0$ if $x \notin \Sigma(c)$. Now let $c \in \mathbb{R}$ be such that $\widetilde G(c)$ is continuous. Then
\begin{eqnarray*}
0 &=& \widetilde G(c) -  \lim_{h \uparrow c} \widetilde G(h) = \int_{x \in \mathbb{X}_1} G(c; x)f_1(x)dx -   \lim_{h \uparrow c} \int_{x \in \mathbb{X}_1} G(h; x)f_1(x)dx \\
&=& \int_{x \in \mathbb{X}_1} \left[G(c; x) -  \lim_{h \uparrow c} G(h; x) \right] f_1(x)dx =  \mathbb{E}\left[G(c;X) - \lim_{h \uparrow c}G(h;X) \mid W=1 \right].
\end{eqnarray*}

Since $G(c;X) - \lim_{h \uparrow c}G(h;X) \geq 0$, it follows that    $Pr(X \in \Sigma(c) | W=1)=1$ if $\widetilde{G}(\cdot)$ is continuous at $c$.

Therefore, for any $c \in \mathbb{R}$ in which $\widetilde G(c)$ is continuous, we have
\begin{eqnarray*}
Pr \left( Y_i - \frac{1}{M} \sum_{m=1}^M Y^i_{(m)} - \tau \leq c  \right) &=& \int_{x \in \mathbb{X}_1} Pr \left( Y_i - \frac{1}{M} \sum_{m=1}^M Y^i_{(m)} - \tau \leq c \mid X_i =  x \right) f_1(x)dx \\
&\rightarrow& \int_{x \in \mathbb{X}_1} G(c;x) f_1(x)dx = \widetilde G(c),
\end{eqnarray*}
where this convergence follows from the fact that  $Pr \left( Y_i - \frac{1}{M} \sum_{m=1}^M Y^i_{(m)} - \tau \leq c \mid X_i =  x \right)  \rightarrow G(c;x)$ for almost all $x \in \mathbb{X}_1$, and $Pr()$ is bounded. 

Therefore, unconditionally, $ Y_i - \frac{1}{M} \sum_{m=1}^M Y^i_{(m)}$ converges in distribution to $\tau + \kappa_i$, where the CDF of $\kappa_i$ is given by $\widetilde G(\kappa)$. From the law of iterated expectations and from the fact that $\tau = \mathbb{E}[\mu_1(X)-\mu_0(X) | W=1]$, we have that $\mathbb{E}[\kappa_i]=0$.

Finally, note that $(\kappa_1,...,\kappa_{N_1})$ is mutually independent given Assumption \ref{sample} and from the fact that the probability that two treated observations share the same nearest neighbor converges to zero (Lemma \ref{Lemma_no_shared}).
\end{proof}

\subsection{Proof of Proposition \ref{test}} \label{proof_sign_changes}

\begin{proposition_b}{4.1:}
Suppose Assumptions \ref{sample}, \ref{CIA},  \ref{overlap},  \ref{assumption_mu}(b), and  \ref{Assumption_symmetry} hold. Assume also that the distribution of $Y$ is continuous. If we consider the problem of testing $H_0:  \tau =c$, then, for any $\alpha \in (0,1)$, $\mbox{limsup}_{N_0 \rightarrow \infty} \mathbb{E} \left[  \phi( S_{N_0}) \right]  \leq \alpha$ when $N_1$ is fixed. 

\end{proposition_b}

\begin{proof}
From Proposition \ref{unbiased}, if we consider $i \in \mathcal{I}_1$ as the only treated observation, we have that 
\begin{eqnarray}
\hat \tau_i^{N_0} =  Y_i - \frac{1}{M} \sum_{m=1}^M Y_{(m)}^i - c    \buildrel d \over \rightarrow  (\tau - c)  + \kappa_i.
\end{eqnarray}

Moreover, given Assumption \ref{Assumption_symmetry}, $\kappa_i$ is symmetric around zero.  Therefore, under the null $\tau = c$, the limiting distribution of $\hat \tau_i^{N_0}$ is symmetric around zero.  Also, since the probability of different treated observations having shared nearest neighbors converges to zero (Lemma \ref{Lemma_no_shared}), we have that, under the null,  $S_{N_0} \buildrel d \over \rightarrow (\kappa_1,...,\kappa_{N_1})$, where $\kappa_i$ is mutually independent across $i$. Therefore, the limiting distribution of $S_{N_0}$, under the null, is invariant  to the transformations in ${\textbf{G}}$.

We also have that the test statistic function $ T(S)$ is continuous. Finally, we show that, for two distinct elements $g \in  G$ and $g' \in  G$, either $T( gS ) = T( g'S)$ for all possible realizations of $S$, or $Pr(T( gS) \neq T(g' S)) = 1$.
If $g$ and $g'$ are such that $g_i = g_i'$ for all $i$, or $g_i = - g_i'$ for all $i$, then $T(g S ) = T(g' S)$ for all possible realizations of $S$. Otherwise, given that $S$ is a continuous random variable, $Pr(T(g S ) \neq T(g' S)) = 1$.  Therefore, we can apply  Theorem 3.1 from  \cite{Canay}.
\end{proof}

\subsection{Proof of Proposition \ref{sign_changes_asympt}} \label{proof_sign_changes_asympt}

Before we proceed with the proof of Proposition \ref{sign_changes_asympt}, we first show that Assumption \ref{overlap} implies that for  all $x \in \mathbb{X}_1$, $Pr(W = 1 | X=x) < 1-\eta$ for some $\eta>0$. This is important so that we can apply Corollary 1 from AI.

\begin{lemma}

Assumption \ref{overlap} implies that there is a $\eta>0$ such that, for all, $x \in \mathbb{X}_1$, $Pr(W = 1 | X=x) < 1-\eta$.

\end{lemma}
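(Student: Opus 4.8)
The plan is to obtain the bound directly from Bayes' rule, feeding in only the density bounds of Assumption \ref{overlap}(iii). Write $\pi = Pr(W=1)$ for the population treatment probability. Since both conditional laws $(Y,X)\mid W=w$, $w\in\{0,1\}$, appearing in Assumption \ref{sample} are well defined, we have $\pi\in(0,1)$. By Assumption \ref{overlap}(iii) there are constants $\underline f_0>0$ and $\overline f_1<\infty$ such that $f_0(x)\ge \underline f_0$ and $f_1(x)\le\overline f_1$ for every $x\in\mathbb{X}_1$ (here I also use $\mathbb{X}_1\subseteq\mathbb{X}_0$ from Assumption \ref{overlap}(ii), so that $f_0$ is positive on $\mathbb{X}_1$).

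First I would check that on $\mathbb{X}_1$ the marginal density of $X$ is $f_X(x)=\pi f_1(x)+(1-\pi)f_0(x)\ge(1-\pi)\underline f_0>0$, so that Bayes' rule yields a legitimate version of the propensity score there: $Pr(W=1\mid X=x)=\pi f_1(x)/(\pi f_1(x)+(1-\pi)f_0(x))$, equivalently $Pr(W=0\mid X=x)=(1-\pi)f_0(x)/(\pi f_1(x)+(1-\pi)f_0(x))$. Next I would lower-bound the latter quantity. Since $t\mapsto t/(t+s)$ is increasing in $t>0$ and decreasing in $s>0$, replacing $f_0(x)$ by the smaller value $\underline f_0$ and $f_1(x)$ by the larger value $\overline f_1$ can only decrease the fraction, so for all $x\in\mathbb{X}_1$, $Pr(W=0\mid X=x)\ge (1-\pi)\underline f_0/(\pi\overline f_1+(1-\pi)\underline f_0)=:\eta_0$. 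Because $\pi<1$, $\underline f_0>0$ and $\overline f_1<\infty$, we have $\eta_0\in(0,1)$, hence $Pr(W=1\mid X=x)=1-Pr(W=0\mid X=x)\le 1-\eta_0$ on $\mathbb{X}_1$; taking $\eta=\eta_0/2$ gives the strict inequality in the statement.

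I do not expect a real obstacle here, since the claim is an elementary consequence of the overlap hypotheses; the statement is included mainly because it is the form of overlap needed to invoke Corollary 1 from AI. The two points that deserve a sentence of care are (i) that $\pi$ is strictly interior to $[0,1]$, which is what makes $\eta_0>0$ and the Bayes expression well posed, and (ii) the monotonicity bookkeeping, so that simultaneously pushing $f_0$ down to $\underline f_0$ and $f_1$ up to $\overline f_1$ produces a valid lower bound on $Pr(W=0\mid X=x)$ rather than an upper bound. Compactness of $\mathbb{X}_1$ (Assumption \ref{overlap}(ii)) is the natural reason the ``bounded above/below on $\mathbb{X}_1$'' conditions hold, but it plays no further role in this particular argument.
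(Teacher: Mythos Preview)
Your proof is correct and rests on the same Bayes-rule idea as the paper's. The only difference in execution is that you bound the propensity score directly at the density level, whereas the paper writes $Pr(W=1\mid d(X,x)\le e)$ via Bayes' rule, bounds the ball probabilities by $C_0 e^k$ and $C_1 e^k$ using parts (iii) and (iv) of Assumption~\ref{overlap}, and then lets $e\to 0$; your route is slightly more direct and uses only part (iii).
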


\begin{proof}
Fix $x \in \mathbb{X}_1$, and let $e>0$. Then\footnote{Note that $\mbox{Pr}(W=1)$ in this case refers the the probability in the superpopulation $(Y,X,W)$.}
\begin{eqnarray} \nonumber
Pr(W = 1 | d(X,x)\leq e) &=& \frac{Pr(d(X,x)\leq e | W = 1 )Pr(W=1)}{Pr(d(X,x)\leq e | W = 0 )Pr(W=0)+Pr(d(X,x)\leq e | W = 1 )Pr(W=1)} \\
&=& \frac{1}{1 + \frac{Pr(d(X,x)\leq e | W = 0 )Pr(W=0)}{Pr(d(X,x)\leq e | W = 1 )Pr(W=1)} }.
\end{eqnarray}

Now given Assumption \ref{overlap}, there are positive constants $C_0$ and $C_1$ such that 
\begin{eqnarray}
Pr(d(X,x)\leq e | W = 0 ) \geq C_0 e^k \mbox{ and } Pr(d(X,x)\leq e | W = 1 ) \leq C_1 e^k,
\end{eqnarray}
implying that 
\begin{eqnarray}
\frac{Pr(d(X,x)\leq e | W = 0 )}{Pr(d(X,x)\leq e | W = 1 )} \geq C,
\end{eqnarray}
for some constant $C>0$, regardless of $e$. Taking the limit when $e\rightarrow 0$, we have that $Pr(W = 1 | d(X,x)\leq e) \leq 1 - \eta$ for some $\eta>0$.
\end{proof}

\begin{proposition_b}{4.2:}

Suppose Assumptions \ref{sample}, \ref{CIA}, \ref{overlap},  \ref{Assumption_rates}, and \ref{Assumption_appendix2} hold.  If we consider the problem of testing $H_0: \tau =c$, then the sign-changes test is asymptotically valid when  $N_1,N_0 \rightarrow \infty$.

\end{proposition_b}

\begin{proof}
For $i \in \mathcal{I}_1$, let $\hat \tau_i^{N_0} = \left(Y_i - \frac{1}{M}  \sum_{j \in \mathcal{J}_M(i)} Y_j \right)- c$. Consider the test statistic
\begin{eqnarray}
\sqrt{N_1} \widehat T = \frac{ \frac{1}{\sqrt{N_1}} \sum_{i \in \mathcal{I}_1}  \hat \tau_i^{N_0} }{\sqrt{\frac{1}{N_1 - 1} \sum_{i \in \mathcal{I}_1}  (\hat \tau_i^{N_0} )^2 - \frac{N_1}{N_1 - 1} \left( \frac{1}{N_1} \sum_{i \in \mathcal{I}_1}  \hat \tau_i^{N_0} \right)^2  }},
\end{eqnarray}
and its  counterpart using the sign-changes transformation 
\begin{eqnarray}
\sqrt{N_1} \widehat T^\ast = \frac{ \frac{1}{\sqrt{N_1}} \sum_{i \in \mathcal{I}_1} g_i \hat \tau_i^{N_0} }{\sqrt{\frac{1}{N_1 - 1} \sum_{i \in \mathcal{I}_1}  (\hat \tau_i^{N_0} )^2 - \frac{N_1}{N_1 - 1} \left( \frac{1}{N_1} \sum_{i \in \mathcal{I}_1} g_i \hat \tau_i^{N_0} \right)^2  }},
\end{eqnarray}
where $g_i$ equals $1$ with probability 1/2 and $ -1$ with probability 1/2, and they are independent across $i$, and independent of $\hat \tau_i^{N_0}$. 

Let $Pr^\ast$ denote the probability measure induced by the sign changes, and $\mathbb{E}^\ast$ the corresponding expectation conditional on a realization of the random variables $X_i$ and $\epsilon_i$. Define $\widehat F_{T^\ast}(t) \equiv Pr^\ast \left( \sqrt{N_1} \widehat T^\ast  \leq t \right) $. Note that, given the realizations of $X_i$ and $\epsilon_i$, we can derive $\widehat F_{T^\ast}(t) $ by considering all transformations $g \in \mathbf{G}$ with equal probabilities, or approximate that as well as we want by drawing bootstrap samples of $g \in \mathbf{G}$.  Also, define  $F_{T}(t) \equiv Pr\left( \sqrt{N_1} \widehat T \leq t \right)$ and $\Phi(t)$ as the CDF of a standard normal. We want to show that, under the null, $\mbox{sup}_{t \in \mathbb{R}} \left| \widehat F_{T^\ast}(t) -  F_{T}(t)  \right| \buildrel p \over \rightarrow 0$. Note that 
\begin{eqnarray} \label{inequality}
\mbox{sup}_{t \in \mathbb{R}} \left| \widehat F_{T^\ast}(t) -  F_{T}(t)  \right| &\leq& \mbox{sup}_{t \in \mathbb{R}} \left| \widehat F_{T^\ast}(t) - \Phi(t)  \right| + \mbox{sup}_{t \in \mathbb{R}}\left| \Phi(t) - F_{T}(t)  \right|,
\end{eqnarray}
where $\Phi(t)$ is the CDF of a standard normal random variable. 

We show first that  the second term on the right hand side of  equation (\ref{inequality}) converges to zero. Note that
\begin{eqnarray}
 \hat \tau_i^{N_0} = \mu_1(X_i) - \mu_0(X_i) - c + \epsilon_i - \frac{1}{M}  \sum_{j \in \mathcal{J}_M(i)} \epsilon_j  + \mu_0(X_i) - \frac{1}{M}  \sum_{j \in \mathcal{J}_M(i)} \mu_0(X_j).
\end{eqnarray}

Define $\bar \mu_i \equiv  \mu_1(X_i) - \mu_0(X_i) - c$ and $e_i \equiv  \epsilon_i - \frac{1}{M}  \sum_{j \in \mathcal{J}_M(i)} \epsilon_j $. Note that $\{ e_i \}_{i \in \mathcal{I}_1}$ is a triangular array where the distribution of $e_i$ depends on $N_0$ because the nearest neighbors may change when $N_0$ increases. Moreover, $e_i $ may not be independent across $i$ because two treated observations may share the same nearest neighbor. Let $\Omega$ be the event that there is no shared nearest neighbor when there are $N_0$ control observations. We show in Lemma \ref{Lemma_no_shared} that, given the rates in which $N_0$ and $N_1$ diverge, $Pr(\Omega)\rightarrow 1$. Since $\mathbb{E}\left[e_i | \Omega \right] = 0$ and $e_i$ has uniformly bounded fourth moments and second moments uniformly bounded from below,  it follows from the Lindeberg-Feller Theorem that, for all $t$, 
\begin{eqnarray} \label{conv1}
Pr \left(\frac{\sqrt{N_1} \frac{1}{N_1} \sum_{i \in \mathcal{I}_1} e_i }{\sqrt{\frac{1}{N_1}  \sum_{i \in \mathcal{I}_1} var( e_i | \Omega) }} \leq t \middle| \Omega \right)   \rightarrow \Phi(t).
\end{eqnarray}

We also have that, for $i \in \mathcal{I}_1$, $\bar \mu_i $ is iid, $\mathbb{E}[\bar \mu_i ] = 0$ under the null, and $\bar \mu_i $ has finite second moments. Therefore,  $N_1^{-1/2} \sum_{i \in \mathcal{I}_1 }\bar \mu_i \buildrel d \over \rightarrow N(0,var(\bar \mu_i))$. Even though we cannot guarantee that $\mathbb{E} [\bar \mu_i | \Omega]$ equals zero, given that $Pr(\Omega)\rightarrow 1$, it follows that, conditional on $\Omega$, we also have $N_1^{-1/2} \sum_{i \in \mathcal{I}_1} \bar \mu_i  \buildrel d \over \rightarrow N(0,var(\bar \mu_i ))$. Finally, from Corollary 1 from AI, note that $\sqrt{N_1} \frac{1}{N_1} \sum_{i \in \mathcal{I}_1} [ \mu_0(X_i) - \frac{1}{M}  \sum_{j \in \mathcal{J}_M(i)} \mu_0(X_j)] \buildrel p \over \rightarrow 0$. Again, this convergence in probability is also valid if we condition on $\Omega$, because $Pr(\Omega)\rightarrow 1$.  Combining all these results, we have that 
\begin{eqnarray} \label{conv1}
Pr \left(\frac{\sqrt{N_1} \frac{1}{N_1} \sum_{i \in \mathcal{I}_1} \hat \tau_i^{N_0} }{\sqrt{\frac{1}{N_1}  \sum_{i \in \mathcal{I}_1} var( \bar \mu_i + e_i | \Omega) }} \leq t \middle|  \Omega \right) \rightarrow \Phi(t).
\end{eqnarray}
 
Moreover, from WLLN for triangular arrays, conditional on $\Omega$, we have
\begin{eqnarray}\label{conv2}
\frac{1}{N_1} \sum_{i \in \mathcal{I}_1}  \hat \tau_i^{N_0} \buildrel p \over \rightarrow 0, \mbox{ and } \frac{ \frac{1}{N_1} \sum_{i \in \mathcal{I}_1} var( \bar \mu_i + e_i | \Omega) }{ \frac{1}{N_1-1} \sum_{i \in \mathcal{I}_1} ( \hat \tau_i^{N_0})^2 } \buildrel p \over \rightarrow 1.
\end{eqnarray}
 
Combining equations (\ref{conv1}) and (\ref{conv2}), we have that, conditional on $\Omega$, $\sqrt{N_1} \widehat T \buildrel d \over \rightarrow N(0,1)$. Since $Pr(\Omega) \rightarrow 1$, we also have that  $\sqrt{N_1} \widehat T$ converges in distribution to a standard normal, which is  a continuous random variable. Therefore, this implies $\mbox{sup}_{t \in \mathbb{R}}\left|F_{T}(t) - \Phi(t) \right| \rightarrow 0$. 

We consider now the  first term on the right hand side of  equation (\ref{inequality}). Note that
\begin{eqnarray}
\sqrt{N_1} \widehat T^\ast = \frac{ \sum_{i \in \mathcal{I}_1} g_i \hat \tau_i^{N_0} }{\left( \sum_{i \in \mathcal{I}_1} (\hat \tau_i^{N_0})^2\right)^{1/2}} \left(  \frac{\frac{1}{N_1} \sum_{i \in \mathcal{I}_1}( \hat \tau_i^{N_0})^2}{{\frac{1}{N_1 - 1} \sum_{i \in \mathcal{I}_1}  (\hat \tau_i^{N_0} )^2 - \frac{N_1}{N_1 - 1} \left( \frac{1}{N_1} \sum_{i \in \mathcal{I}_1} g_i \hat \tau_i^{N_0} \right)^2  }} \right)^{1/2},
\end{eqnarray}

We need to show that, conditional on the realizations of $\hat \tau_i^{N_0}$, ${ \sum_{i \in \mathcal{I}_1} g_i \hat \tau_i^{N_0} }/{\left( \sum_{i \in \mathcal{I}_1} (\hat \tau_i^{N_0})^2\right)^{1/2}} \buildrel d \over \rightarrow N(0,1)$, and $\frac{1}{N_1} \sum_{i \in \mathcal{I}_1} g_i \hat \tau_i^{N_0} \buildrel p \over \rightarrow 0$. Let $s_i = g_i \hat \tau_i^{N_0}/{\left( \sum_{j \in \mathcal{I}_1} (\hat \tau_j^{N_0})^2\right)^{1/2}}$. Then $\mathbb{E}^\ast[s_i] = 0$, and $\mathbb{E}^\ast[s_i^2] = (\hat \tau_i^{N_0})^2/\left( \sum_{j \in \mathcal{I}_1} (\hat \tau_j^{N_0})^2\right)$, implying that $\sum_{i \in \mathcal{I}_1} \mathbb{E}^\ast[s_i^2]  = 1$. Therefore, we only need that  $\sum_{i \in \mathcal{I}_1} \mathbb{E}^\ast[ | s_i |^{2+\delta}]  \rightarrow 0$ for some $\delta>0$ to guarantee the Lyapunov condition and apply the Lindeberg-Feller CLT.  
\begin{eqnarray}
\sum_{i \in \mathcal{I}_1} \mathbb{E}^\ast[ | s_i |^{2+\delta}] =  \frac{1}{N_1^{\delta/2}} \frac{  \frac{1}{N_1}  \sum_{i \in \mathcal{I}_1} |\hat \tau_i^{N_0}|^{2+\delta}}{\left(  \frac{1}{N_1}  \sum_{j \in \mathcal{I}_1} (\hat \tau_j^{N_0})^2 \right)^{1+\delta/2}}.
\end{eqnarray}

Now consider the distribution of $\hat \tau_i^{N_0}$. Let $S_{N_1} =1$ if there is a shared nearest neighbor, and zero otherwise. Consider a triangular array  $\hat \tau_i^{N_0}$ conditional on a sequence $\{S_{N_1}  \}_{N_1 \in \mathbb{N}}$ such that $S_{N_1} \rightarrow 0$. In this case, for all $N_0$ greater than some $\bar N$,  we have that $\hat \tau_i^{N_0}$ is independent across $i$ and, for some $\gamma>0$, $\mathbb{E}\left[ |\hat \tau_i^{N_0}|^{4+\gamma}  \right]$ is uniformly bounded.\footnote{We use here Assumption \ref{Assumption_appendix2}, and the fact that $\mu_w(x)$ is continuous and the support of $X_i$ is compact. }
 Therefore, for $\delta$ and $\Delta$ sufficiently small, we have that $\mathbb{E}[(|\hat \tau_i^{N_0}]^{2+\delta})^{2+\Delta}]$ is uniformly bounded.  Let $\mathcal{A}$ be the event $\limsup \frac{1}{N_1}  \sum_{i \in \mathcal{I}_1} |\hat \tau_i^{N_0}|^{2+\delta} < c_1$ for some constant $c_1$. Since we can apply a strong law of large numbers for $ \frac{1}{N_1}  \sum_{i \in \mathcal{I}_1} |\hat \tau_i^{N_0}|^{2+\delta} $ conditional on  $\{S_{N_1}  \}_{N_1 \in \mathbb{N}}$, we have that, for some $c_1$, $Pr(\mathcal{A} | S_{N_1} \rightarrow 0) = 1$. Now using Lemma \ref{Lemma_no_shared2}, we have that 
 $Pr(\mathcal{A} ) = Pr(\mathcal{A} | S_{N_1} \rightarrow 0) Pr(S_{N_1} \rightarrow 0) + Pr(\mathcal{A} | S_{N_1}\not \rightarrow 0) Pr(S_{N_1} \not \rightarrow 0) =1$, which implies that $\limsup \frac{1}{N_1}  \sum_{i \in \mathcal{I}_1} |\hat \tau_i^{N_0}|^{2+\delta} < c_1$ with probability one even if we do not condition on $ S_{N_1} \rightarrow 0$. Likewise, since the second moments of $\hat \tau^{N_0}_i$ are uniformly bounded from below, we also have that  $\liminf \frac{1}{N_1}  \sum_{i \in \mathcal{I}_1} |\hat \tau_i^{N_0}|^{2}>c_2$ with probability one for a constant $c_2$, implying that  $\sum_{i \in \mathcal{I}_1} \mathbb{E}^\ast[ | s_i |^{2+\delta}] \rightarrow 0$ with probability one.   Therefore, with probability one, the realizations of  $\hat \tau_i^{N_0}$ are such that, conditional on such realizations,  ${ \sum_{i \in \mathcal{I}_1} g_i \hat \tau_i^{N_0} }/{\left( \sum_{i \in \mathcal{I}_1} (\hat \tau_i^{N_0})^2\right)^{1/2}} \buildrel d \over \rightarrow N(0,1)$.

Moreover, we have that $\mathbb{E}^\ast [ g_i \hat \tau_i^{N_0}] = 0$ and  $var^\ast [ g_i \hat \tau_i^{N_0}] = (\hat \tau_i^{N_0})^2$. If $\mbox{lim}_{N_1 \rightarrow \infty}\frac{1}{N_1} \sum_{i \in \mathcal{I}_1} (\hat \tau_i^{N_0})^2<\infty$, then $\frac{1}{N_1} \sum_{i \in \mathcal{I}_1} g_i \hat \tau_i^{N_0} \buildrel p \over \rightarrow 0$ given the measure induced by the sign changes. Again, since $\frac{1}{N_1} \sum_{i \in \mathcal{I}_1} (\hat \tau_i^{N_0})^2$ converges almost surely to a positive constant, this condition is satisfied with probability one. Combining all these results,  we have that  $\sqrt{N_1} \widehat T^\ast \buildrel d \over \rightarrow N(0,1)$, implying that  $ \mbox{sup}_{t \in \mathbb{R}} \left| \widehat F_{T^\ast}(t) -  \Phi(t)  \right| \buildrel p \over \rightarrow 0$. Therefore, $\mbox{sup}_{t \in \mathbb{R}} \left| \widehat F_{T^\ast}(t) -  F_{T}(t)  \right| \buildrel p \over \rightarrow 0$.
 \end{proof}

\section{Other results}

\subsection{Randomization Inference Test Based on Permutations} \label{Section_permutation}

We  consider  an alternative randomization inference test based on permutations.\footnote{A test based on permutations has been studied in the context of an approximate symmetry assumption  by  \cite{Canay_RDD} for regression discontinuity designs.} We consider again, without loss of generality, that $i=1,...,N_1$ are the treated observations. 

 Consider the following alternative  function of the data 
\begin{eqnarray}
\tilde S_{N_0} = \left( \tilde S^0_{N_0,1}, \tilde S^1_{N_0,1},..., \tilde S^M_{N_0,1},...,\tilde S^0_{N_0,N_1}, \tilde S^1_{N_0,N_1},..., \tilde S^M_{N_0,N_1}  \right)'
\end{eqnarray}
where $\tilde S^0_{N_0,i}=Y_i - c$ and $\tilde S^m_{N_0,i} = Y^i_{(m)}$ for $m=1,...,M$. That is, $\tilde S_{N_0}$ is a vector containing the outcomes of the treated observations (minus $c$) and of their $M$-nearest neighbors. The distribution of  $\tilde S_{N_0}$ depends on $N_0$, because the quality of the matches will depend on $N_0$. In this notation, and considering $c=0$, the matching estimator is given by
 \begin{eqnarray}
 \hat \tau = \frac{1}{N_1} \sum_{i=1}^{N_1}\left(  \tilde S^0_{N_0,i} - \frac{1}{M} \sum_{j=1}^M  \tilde S^j_{N_0,i} \right).
 \end{eqnarray}

Let $\tilde G_i$ be the set of all permutations  $\pi_i = (\pi_i(0),...,\pi_i(M))$ of $\{0,1,...,M  \}$, $ \pi = \otimes_{i=1}^{N_1} \pi_i$, and $\tilde{\textbf{G}} = \otimes_{i=1}^{N_1} \tilde G_i$.  Note that $\tilde{\textbf{G}} $ is the set of all permutations that reassign the treatment status conditional on having exactly one treated observation for each group of treated observation $i$ and its $M$ nearest neighbors. For a given $\pi \in \tilde{\textbf{G}}  $, consider $ \tilde S^\pi_{N_0} = \left( \tilde S^{\pi_1(0)}_{N_0,1}, \tilde S^{\pi_1(1)}_{N_0,1},..., \tilde S^{\pi_1(M)}_{N_0,1},...,\tilde S^{\pi_{N_1}(0)}_{N_0,N_1}, \tilde S^{\pi_{N_1}(1)}_{N_0,N_1},..., \tilde S^{\pi_{N_1}(M)}_{N_0,N_1}  \right)'$.

Let $\tilde K = |\tilde{\textbf{G}}|$ and denote by
\begin{eqnarray}
\tilde T^{(1)}(\tilde S_{N_0}) \leq \tilde T^{(2)}(\tilde S_{N_0}) \leq ... \leq \tilde T^{(\tilde K)}(\tilde S_{N_0})
\end{eqnarray}
the ordered values of $\{ \tilde T( \tilde S^\pi_{N_0}) : \pi \in \tilde{\textbf{G}} \}$, where
 \begin{eqnarray} \label{test_stat}
 \tilde T( \tilde S^\pi_{N_0}) = \left| \frac{1}{N_1} \sum_{i=1}^{N_1}\left(  \tilde S^{\pi_i(0)}_{N_0,i} - \frac{1}{M} \sum_{j=1}^M  \tilde S^{\pi_i(j)}_{N_0,i} \right) \right|.
 \end{eqnarray}
 
We set $\tilde k=\lceil \tilde K(1-\alpha) \rceil$, where $\alpha$ is the significance level of the test, and define the decision rule of the test as
\begin{eqnarray}  \label{decision_permutation}
\tilde \phi(S_{N_0}) = \begin{cases} 1 & \mbox{ if } \tilde T(\tilde S_{N_1}) > \tilde T^{(\tilde k)}(\tilde S_{N_1})      \\ 0 & \mbox{ if } \tilde T(\tilde S_{N_1}) \leq \tilde T^{(\tilde k)}(\tilde S_{N_1}).   \end{cases}
\end{eqnarray}

In words, we calculate the test statistic $\tilde T(\tilde S^\pi_{N_0} )$ for all possible permutations in $\tilde{\textbf{G}}$, and then we reject the null if the actual test statistic  $\tilde T(\tilde S_{N_0} )$ is large relative to the distribution given by these permutations. If $N_1 >1$, we could also consider a standardized test statistic

 \begin{eqnarray} \label{test_std}
 \tilde T^{\mbox{\tiny std}}( \tilde S^\pi_{N_0}) = \frac{\left| \frac{1}{N_1} \sum_{i=1}^{N_1}\left(  \tilde S^{\pi_i(0)}_{N_0,i} - \frac{1}{M} \sum_{j=1}^M  \tilde S^{\pi_i(j)}_{N_0,i} \right)  \right|}{ \sqrt{ \frac{1}{N_1-1} \sum_{i=1}^{N_1}\left(  \tilde S^{\pi_i(0)}_{N_0,i} - \frac{1}{M} \sum_{j=1}^M  \tilde S^{\pi_i(j)}_{N_0,i} - \tilde \tau^\pi \right)^2}},
 \end{eqnarray}
where  $ \tilde \tau^\pi = \frac{1}{N_1} \sum_{i=1}^{N_1}\left(  \tilde S^{\pi_i(0)}_{N_0,i} - \frac{1}{M} \sum_{j=1}^M  \tilde S^{\pi_i(j)}_{N_0,i} \right)$, and consider a decision rule as in \ref{decision_permutation}.

As common in permutation tests, we need a homoskedasticity/treatment effect homogeneity assumption, which we define below.

\begin{assumption} \label{Assumption_iid}
(homoskedasticity/treatment effect homogeneity)  
$(Y(1)  - \tau ) | (X=x,W=1)$ has the same distribution as $Y(0) | (X=x,W=0)$ for all $x \in \mathbb{X}_1$.

\end{assumption}

This is a strong assumption, and excludes, for example, the possibility that treatment affects that variance of the outcome. This assumption also excludes the possibility that treatment has heterogeneous effects depending on $X_i$.

\begin{proposition} \label{test_permutation}

Suppose Assumptions \ref{sample}, \ref{CIA},  \ref{overlap},  \ref{assumption_mu}(b), and  \ref{Assumption_iid} hold. Assume also that  the distribution of $Y$ is  continuous. If we consider the problem of testing $H_0:  \tau =c$, then, for any $\alpha \in (0,1)$, $\mbox{limsup}_{N_0 \rightarrow \infty} \mathbb{E} \left[ \tilde \phi(\tilde S_{N_0}) \right] \leq \alpha$ when $N_1$ is fixed. 

\end{proposition}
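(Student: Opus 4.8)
The plan is to mirror the proof of Proposition~\ref{test}, replacing the sign-changes group by the permutation group $\tilde{\textbf{G}}$ and the ``symmetric around zero'' property of the limiting coordinates by an exchangeability property that follows from Assumption~\ref{Assumption_iid}. Concretely, I would show that, under $H_0$, the limiting distribution of $\tilde S_{N_0}$ as $N_0 \to \infty$ (with $N_1$ fixed) is invariant to every transformation in $\tilde{\textbf{G}}$, and then invoke Theorem~3.1 of \cite{Canay}.

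First I would identify the limiting distribution of $\tilde S_{N_0}$ blockwise. Fix a treated observation $i \in \mathcal{I}_1$ and condition on $X_i = x$. Since $i$ is treated, under the null $\tilde S^0_{N_0,i} = Y_i - c = Y_i - \tau$, so $\tilde S^0_{N_0,i}\mid(X_i=x)$ has exactly the distribution of $(Y(1)-\tau)\mid(X=x,W=1)$, which by Assumption~\ref{Assumption_iid} coincides with the distribution of $Y(0)\mid(X=x,W=0)$. For $m=1,\dots,M$, Lemma~\ref{Lemma_convergenceY} gives $\tilde S^m_{N_0,i}=Y^i_{(m)}\buildrel d \over \rightarrow Y(0)\mid(X=x,W=0)$, and by Assumption~\ref{sample} the $M+1$ coordinates $(\tilde S^0_{N_0,i},\dots,\tilde S^M_{N_0,i})$ are asymptotically mutually independent given $X_i=x$. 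Hence, conditional on $X_i=x$, this vector converges in distribution to $(V^0_x,\dots,V^M_x)$ with the $V^m_x$ i.i.d.\ $\sim Y(0)\mid(X=x,W=0)$ --- an exchangeable law. Integrating over $X_i\sim f_1$ exactly as in the proof of Proposition~\ref{unbiased}(2) --- using dominated convergence on the CDF and the fact that the mixing measure puts full mass on the continuity points of the limiting mixture --- yields convergence of the unconditional law of each block, and, since the probability that two treated observations share a nearest neighbor tends to zero (Lemma~\ref{Lemma_no_shared}), the blocks $i=1,\dots,N_1$ become asymptotically independent. Therefore $\tilde S_{N_0}\buildrel d \over \rightarrow \tilde S$, where $\tilde S=(\tilde S_1,\dots,\tilde S_{N_1})$ has independent blocks and each block is the above mixture of exchangeable $(M+1)$-vectors.

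Because each limiting block is exchangeable in its $M+1$ coordinates and the permutations in $\tilde{\textbf{G}}$ act separately within blocks, $\tilde S\buildrel d \over = \pi\tilde S$ for every $\pi\in\tilde{\textbf{G}}$, which is the invariance required by \cite{Canay}. It then remains to verify the regularity conditions of their Theorem~3.1. The statistic in~(\ref{test_stat}) --- and likewise the standardized version~(\ref{test_std}) --- is continuous, and one checks that $\tilde S^{\pi_i(0)}_{N_0,i}-\tfrac{1}{M}\sum_{j=1}^{M}\tilde S^{\pi_i(j)}_{N_0,i}=\tfrac{M+1}{M}\tilde S^{\pi_i(0)}_{N_0,i}-\tfrac{1}{M}\sum_{b=0}^{M}\tilde S^{b}_{N_0,i}$, so $\tilde T(\tilde S^\pi_{N_0})$ depends on $\pi$ only through $(\pi_1(0),\dots,\pi_{N_1}(0))\in\{0,\dots,M\}^{N_1}$. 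Consequently, for distinct $\pi,\pi'\in\tilde{\textbf{G}}$, either $\pi_i(0)=\pi_i'(0)$ for every $i$, in which case $\tilde T(\tilde S^\pi)=\tilde T(\tilde S^{\pi'})$ identically, or $(\pi_i(0))_i\neq(\pi_i'(0))_i$, in which case continuity of the distribution of $Y$ forces $\Pr(\tilde T(\tilde S^\pi)\neq\tilde T(\tilde S^{\pi'}))=1$; the same continuity guarantees the denominator in~(\ref{test_std}) is a.s.\ positive when $N_1>1$. Applying Theorem~3.1 of \cite{Canay} then gives $\limsup_{N_0\to\infty}\mathbb{E}[\tilde\phi(\tilde S_{N_0})]\le\alpha$.

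The step that requires the most care is the passage from conditional (on $X_i$) convergence of each block to the joint convergence of $\tilde S_{N_0}$: one must handle the mixing over $X_i$ at continuity points of the limiting mixture CDF while simultaneously absorbing the vanishing-probability event of shared nearest neighbors, just as in the proof of Proposition~\ref{unbiased}(2) combined with Lemma~\ref{Lemma_no_shared}. The genuinely new ingredient relative to Proposition~\ref{test} is only the observation that Assumption~\ref{Assumption_iid} upgrades ``symmetric around zero'' to ``exchangeable across the $M+1$ coordinates of each block,'' which is precisely what makes the limiting law invariant under the full permutation group $\tilde{\textbf{G}}$ rather than only under sign changes.
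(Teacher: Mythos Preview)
Your proof is correct and follows essentially the same route as the paper's: use Lemma~\ref{Lemma_convergenceY}, Assumption~\ref{Assumption_iid}, and Lemma~\ref{Lemma_no_shared} to show that the limiting law of $\tilde S_{N_0}$ is blockwise exchangeable and hence $\tilde{\textbf{G}}$-invariant, then verify continuity and the ties condition and apply Theorem~3.1 of \cite{Canay}. The only minor slip is in the ties check when $M=1$: if $(\pi'_i(0))_i$ is the global flip of $(\pi_i(0))_i$, the absolute value in~(\ref{test_stat}) makes $\tilde T(\tilde S^\pi)=\tilde T(\tilde S^{\pi'})$ identically (the paper treats this case separately), but this simply adds another ``always equal'' pair and does not disturb Canay's dichotomy condition.
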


\begin{proof}
Again, we apply Theorem 3.1 from  \cite{Canay}. First, note that, for $i \in \mathcal{I}_1$,  $Y_i$ is distributed as  $Y(1)|W=1$.  Therefore,  under Assumption \ref{Assumption_iid}, we have that, under the null,  $\tilde S^0_{N_0,i} = (Y_i - c)$ is distributed as $Y(0)|W=1$. 

Now consider  $\tilde S^m_{N_0,i} = Y^i_{(M)}$. If we condition on $X_i = \bar x$, we know from Lemma \ref{Lemma_convergenceY} that $ Y^i_{(M)}$ converges in distribution to $Y(0)|X = \bar x$. Now, to derive the unconditional asymptotic distribution, let $h(y)$ be a continuous and bounded function. Then 
\begin{eqnarray} 
\mathbb{E}[h(Y^i_{(M)}) ]&=&\mathbb{E} \left\{\mathbb{E}[h(Y^i_{(M)}) | X_i] \right\}  \rightarrow \mathbb{E} \left\{\mathbb{E}[h(Y_i(0)) | X_i]  \right\} \\
&=& \mathbb{E}[h(Y_i(0) ) ] = \mathbb{E}[h(Y(0) ) \mid W=1 ].
\end{eqnarray}

Therefore,   $\tilde S^m_{N_0,i} = Y^i_{(M)}$ also converges in distribution to $Y(0) | W=1$. Since, conditional on $X_i= \bar x$, the limiting distribution of $\tilde S^m_{N_0,i}$ is independent across $m$ for all $m=0,1,...,M$, we have that the limiting distribution of $(\tilde S^0_{N_0,i},...,\tilde S^M_{N_0,i})$ is invariant to permutations. 
Moreover, from Lemma \ref{Lemma_no_shared}, the vectors $(\tilde S^0_{N_0,i},...,\tilde S^M_{N_0,i})$ are mutually independent across $i$. Combining these results, the asymptotic distribution of $\tilde S_{N_0} $ is invariant to  transformations in $\tilde{\textbf{G}}$.  

We also have that the test statistic function $\tilde T(\tilde S)$ is continuous. Finally, we show that, for two distinct elements $\pi \in \tilde G$ and $\pi' \in \tilde G$, either $\tilde T(\tilde S^\pi ) = \tilde T(\tilde S^{\pi'})$ for all possible realizations of $\tilde S$, or $Pr(\tilde T(\tilde S^\pi ) \neq \tilde T(\tilde S^{\pi'})) = 1$. Suppose $M>1$. Then, if $\pi$ and $\pi'$ are such that $\pi_i(0) = \pi'_i(0)$ for all $i \in \mathcal{I}_1$, then we will have  $\tilde T(\tilde S^\pi ) = \tilde T(\tilde S^{\pi'})$ for all possible realizations of $\tilde S$. If $\pi$ and $\pi'$ are such that $\pi_i(0) \neq \pi'_i(0)$ for at least one $i \in \mathcal{I}_1$, then the probability that  $\tilde T(\tilde S^\pi ) = \tilde T(\tilde S^{\pi'})$  would be equal to zero, because $\tilde S$ is a continuous random variable. For the case $M=1$, we would have  $\tilde T(\tilde S^\pi ) = \tilde T(\tilde S^{\pi'})$ for all possible realizations of $\tilde S$ if $\pi$ and $\pi'$ are such that $\pi_i(0) = \pi'_i(0)$ for all $i$, or $\pi_i(0) = \pi'_i(1)$ for all $i$. Otherwise, $Pr(\tilde T(\tilde S^\pi ) \neq \tilde T(\tilde S^{\pi'})) = 1$.  
\end{proof}

Again, the main intuition of the proof is that the limiting distribution of $\tilde S_{N_0}$, under the null, is invariant to the transformations in $\tilde{\textbf{G}}$.  Assumption \ref{Assumption_iid} is crucial for this result, when we consider a setting with fixed $N_1$. To understand that, suppose, for example, that the null is true, but  $\mathbb{V}[Y(1)|X=x,W=1] > \mathbb{V}[Y(0)|X=x,W=0]$. If $M>1$, then a permutation that uses control observations in place of  treated ones would have a less volatile distribution relative to the distribution of the matching estimator. This would lead to a rejection rate higher than $\alpha$.\footnote{The intuition is similar to the one presented by \cite{FP_2018} for the DID estimator. The matching estimator will compare the averages of $N_1$ treated observations with the average of $M \times N_1$ control nearest neighbors. Therefore, if $M>1$, then the variance of the treated observations will have a relatively larger impact on the variance of the matching estimator than the variance of the control observations. As a consequence, permutations that place control observations as treated would have a lower variance than the actual estimator if $\mathbb{V}[Y(1)|X=x,W=1] < \mathbb{V}[Y(0)|X=x,W=0]$. Following the same logic, this also implies that such a test may have a low power if the treatment decreases the variance of the outcome (that is, $\mathbb{V}[Y(1)|X=x,W=1] < \mathbb{V}[Y(0)|X=x,W=0]$). } This is an important drawback of this test, because we may reject at a rate higher than $\alpha$ even when the null is true if we have such heteroskedasticity. This is a cost we have to bear in order to have an inference method that is valid when we consider a setting with $N_1$ is fixed. 

This permutation test is similar in spirit to the test proposed by \cite{CT} for differences-in-differences with few treated and many control groups. Note that they also need a homoskedasticity assumption, highlighting the fact that we need to rely on stronger assumptions if we want to construct a test that is valid regardless of the number of treated observations. \cite{FP_2018}  relax in some sense such assumption considered by  \cite{CT}, allowing for specific forms of heteroskedasticity based on the observed covariates. Interestingly, since Assumption \ref{Assumption_iid} only requires  homoskedasticity conditional on $X_i$, what we are doing is similar in spirit to the idea of a non-parametric version of \cite{FP_2018}. Therefore, while Assumption \ref{Assumption_iid} may still be restrictive in some settings, it is a relatively weaker assumption than that considered by  \cite{CT}.

Proposition \ref{test_permutation} remains valid if we consider $\tilde T^{\mbox{\tiny std}}( \tilde S^\pi_{N_0})$ instead of $ \tilde T( \tilde S^\pi_{N_0})$ as test statistic. If we consider this alternative test statistic, then we show that we can relax Assumption \ref{Assumption_iid} when we consider a setting in which $N_1 \rightarrow \infty$ at a lower rate relative to $N_0$.  

\begin{proposition} \label{permutation_appendix}

Suppose Assumptions \ref{sample}, \ref{CIA}, \ref{overlap},   \ref{Assumption_rates}, and \ref{Assumption_appendix2} hold. If we consider the problem of testing $H_0: \tau=c$, then the permutation test using $\tilde T^{\mbox{\tiny std}}( \tilde S_{N_0}) $ as test statistic  is asymptotically valid when  $N_1,N_0 \rightarrow \infty$. 

\end{proposition}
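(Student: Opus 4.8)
The plan is to follow the structure of the proof of Proposition \ref{sign_changes_asympt}: under $H_0$, show that both the observed statistic $\tilde T^{\mbox{\tiny std}}(\tilde S_{N_0})$ and the permutation distribution it is compared against converge (the latter in probability, uniformly) to the CDF $\Psi$ of $|Z|$, $Z\sim N(0,1)$; the critical value $\tilde T^{(\tilde k)}(\tilde S_{N_0})$ then converges in probability to $\Psi^{-1}(1-\alpha)$, and since $\Psi$ is continuous there, the rejection probability tends to $\alpha$. For the observed statistic nothing new is needed: the denominator of $\tilde T^{\mbox{\tiny std}}(\tilde S_{N_0})$ is exactly the sample standard deviation $(\frac{1}{N_1-1}\sum_i(\hat\tau_i^{N_0}-\hat\tau)^2)^{1/2}$ appearing in the proof of Proposition \ref{sign_changes_asympt}, so that proof already yields $\tilde T^{\mbox{\tiny std}}(\tilde S_{N_0})=|\sqrt{N_1}\,\widehat T|\buildrel d \over \rightarrow |Z|$ under the null when $N_1,N_0\to\infty$.

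The first substantive step is an algebraic reduction of the permutation statistic. Write $Z_{i,0}=Y_i-c$, $Z_{i,m}=Y^i_{(m)}$ for $m=1,\dots,M$, and $\bar Z_i=\frac{1}{M+1}\sum_{\ell=0}^{M}Z_{i,\ell}$. For $\pi\in\tilde{\textbf{G}}$ the block contribution $D_i^\pi\equiv\tilde S^{\pi_i(0)}_{N_0,i}-\frac1M\sum_{j=1}^{M}\tilde S^{\pi_i(j)}_{N_0,i}$ satisfies, since $\{\pi_i(1),\dots,\pi_i(M)\}=\{0,\dots,M\}\setminus\{\pi_i(0)\}$ as a set, $D_i^\pi=\frac{M+1}{M}\bigl(Z_{i,\pi_i(0)}-\bar Z_i\bigr)$, which depends on $\pi_i$ only through $\pi_i(0)$. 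Hence drawing $\pi$ uniformly from $\tilde{\textbf{G}}$ is equivalent to drawing $\pi_i(0)$ uniformly on $\{0,\dots,M\}$ independently across $i$; conditionally on the data the $D_i^\pi$ are independent across $i$ with $\mathbb{E}^\ast[D_i^\pi]=0$ and $\mathbb{E}^\ast[(D_i^\pi)^2]=\frac{M+1}{M^2}\sum_{\ell=0}^{M}(Z_{i,\ell}-\bar Z_i)^2$, and $\tilde T^{\mbox{\tiny std}}(\tilde S^\pi_{N_0})=|\bar D^\pi|/(\frac{1}{N_1-1}\sum_i(D_i^\pi-\bar D^\pi)^2)^{1/2}$ with $\bar D^\pi=\frac1{N_1}\sum_i D_i^\pi$. (For $M=1$ this is a sign-change test on $\hat\tau_i^{N_0}$, recovering the studentized sign-changes argument.)

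Next I would apply a conditional (on the data) Lindeberg--Feller CLT under $Pr^\ast$ to $\frac{1}{\sqrt{N_1}}\sum_i D_i^\pi$, normalized by $\bar V_{N_1}^{1/2}$ with $\bar V_{N_1}=\frac1{N_1}\sum_i\mathbb{E}^\ast[(D_i^\pi)^2]$, exactly as in the proof of Proposition \ref{sign_changes_asympt}. The Lyapunov condition $\sum_i\mathbb{E}^\ast\bigl[|D_i^\pi/\sqrt{N_1\bar V_{N_1}}|^{2+\delta}\bigr]\to0$ holds with probability one over the data: $Z_{i,0}$ has uniformly bounded $(4+\gamma)$-th moments by Assumption \ref{Assumption_appendix2}, and each $Z_{i,m}$ with $m\ge1$ is the outcome of a control observation and hence also has uniformly bounded moments, so by the argument of Proposition \ref{sign_changes_asympt} (a strong law for the triangular array conditional on a no-shared-neighbour sequence, the conditioning on $S_{N_1}\to0$ removed via Lemma \ref{Lemma_no_shared2}) $\frac1{N_1}\sum_i\mathbb{E}^\ast[|D_i^\pi|^{2+\delta}]$ is bounded above a.s., while $\bar V_{N_1}$ is bounded away from zero a.s.\ by Assumption \ref{Assumption_appendix2}(iii). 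Because the statistic is studentized, $\bar V_{N_1}$ need not converge — it cancels: $\frac1{N_1}\sum_i(D_i^\pi-\bar D^\pi)^2$ has $Pr^\ast$-mean $\bar V_{N_1}$ and $Pr^\ast$-variance $o(1)$ by the same moment bounds, and $(\bar D^\pi)^2=O_{Pr^\ast}(\bar V_{N_1}/N_1)$, so $\frac1{N_1}\sum_i(D_i^\pi-\bar D^\pi)^2/\bar V_{N_1}\to1$ in $Pr^\ast$-probability; Slutsky then gives $\tilde T^{\mbox{\tiny std}}(\tilde S^\pi_{N_0})\to|Z|$ in distribution under $Pr^\ast$, with probability one over the data. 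This is precisely the step at which studentization lets us dispense with the homoskedasticity/homogeneity Assumption \ref{Assumption_iid}.

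The delicate point — and the one I expect to be the main obstacle — is verifying that the matching discrepancies do not perturb these limits. Writing $Z_{i,m}=\mu_0(X_i)+\epsilon^i_{(m)}+b_{i,m}$ with $b_{i,m}=\mu_0(X^i_{(m)})-\mu_0(X_i)$, the level $\mu_0(X_i)$ cancels in every $Z_{i,\ell}-\bar Z_i$, so $D_i^\pi$ is a function of the within-block errors, of $\bar\mu_i=\mu_1(X_i)-\mu_0(X_i)-c$ (which has $\mathbb{E}[\bar\mu_i\mid W=1]=\tau-c=0$ under $H_0$), and of the $b_{i,m}$ only. The discrepancy contribution to $\frac1{\sqrt{N_1}}\sum_i D_i^\pi$ splits into an average-of-averages term $\frac{1}{\sqrt{N_1}}\sum_i\frac1M\sum_m b_{i,m}$, which up to a constant is the bias term shown to vanish by Corollary~1 of AI under Assumption \ref{Assumption_rates}, and a selected-neighbour term bounded by $\frac{1}{\sqrt{N_1}}\sum_i\max_m|b_{i,m}|$, whose expectation is $O\bigl(\sqrt{N_1}\,N_0^{-1/k}\bigr)=O\bigl(N_1^{1/2-r/k}\bigr)\to0$ because $r>k/2$; Lipschitz continuity of $\mu_0$ (Assumption \ref{Assumption_appendix2}(i)) and Lemma \ref{Lemma_convergenceX} make both bounds routine, and an analogous check handles $\frac1{N_1}\sum_i(D_i^\pi)^2$. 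Combining the two convergences gives $\sup_{t}|\widehat F_{T^\ast}(t)-\Psi(t)|\buildrel p \over \rightarrow 0$, where $\widehat F_{T^\ast}$ is the permutation CDF and $\Psi$ the continuous, strictly increasing CDF of $|Z|$; hence $\tilde T^{(\tilde k)}(\tilde S_{N_0})\buildrel p \over \rightarrow\Psi^{-1}(1-\alpha)$, and since $\tilde T^{\mbox{\tiny std}}(\tilde S_{N_0})\buildrel d \over \rightarrow|Z|$ with $\Psi$ continuous at $\Psi^{-1}(1-\alpha)$, we conclude $\mathbb{E}[\tilde\phi(\tilde S_{N_0})]\to\alpha$, i.e.\ asymptotic validity (indeed asymptotic exactness).
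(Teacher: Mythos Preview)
Your approach is correct and essentially the same as the paper's: both reduce the observed statistic to the one handled in Proposition~\ref{sign_changes_asympt}, then show the permutation distribution converges to the same $|N(0,1)|$ limit via a conditional Lindeberg--Feller argument under $Pr^\ast$, with the Lyapunov condition verified almost surely using the no-shared-neighbour Lemma~\ref{Lemma_no_shared2}. Your explicit reduction $D_i^\pi=\tfrac{M+1}{M}(Z_{i,\pi_i(0)}-\bar Z_i)$, showing that the permutation acts only through $\pi_i(0)$, is a clean simplification the paper leaves implicit in its computation of $\widetilde{\mathbb{E}}[(D_i^\pi)^2]$.

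One small note: your final paragraph on matching discrepancies is more than is needed for the permutation step. Because $\mathbb{E}^\ast[D_i^\pi]=0$ holds identically in the data---the permutation centers at $\bar Z_i$ regardless of the $b_{i,m}$---the discrepancies never enter the conditional mean under $Pr^\ast$; they only appear in the Lyapunov moments, and there boundedness of $\mu_0$ on the compact $\mathbb{X}_0$ already suffices. The separate $\tfrac{1}{\sqrt{N_1}}\sum_i\max_m|b_{i,m}|$ bound and the appeal to Corollary~1 of AI are needed for the observed-statistic limit (which you correctly inherit from Proposition~\ref{sign_changes_asympt}), not for the permutation-distribution limit. This is not an error, just redundant work.
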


\begin{proof}
Note that $\tilde T^{\mbox{\tiny std}}( \tilde S_{N_0}) = T(S_{N_0})$. That is, the test statistic used in the sign-changes test is the same as the standardized test statistic considered in the permutation test when we consider the original permutation. Therefore, we know from the proof of Proposition \ref{sign_changes_asympt} that 
\begin{eqnarray}
\sqrt{N_1} \widetilde T =  \frac{ \frac{1}{\sqrt{N_1}} \sum_{i=1}^{N_1}\left(  \tilde S^{0}_{N_0,i} - \frac{1}{M} \sum_{j=1}^M  \tilde S^{j}_{N_0,i} \right)  }{ \sqrt{ \frac{1}{N_1-1} \sum_{i=1}^{N_1}\left(  \tilde S^{0}_{N_0,i} - \frac{1}{M} \sum_{j=1}^M  \tilde S^{j}_{N_0,i} - \tilde \tau \right)^2}} \buildrel d \over \rightarrow N(0,1)
\end{eqnarray}

We consider now the distribution of the permutations 
\begin{eqnarray}
\sqrt{N_1} \widetilde T^\ast =  \frac{ \frac{1}{\sqrt{N_1}} \sum_{i=1}^{N_1}\left(  \tilde S^{\pi_i(0)}_{N_0,i} - \frac{1}{M} \sum_{j=1}^M  \tilde S^{\pi_i(j)}_{N_0,i} \right)  }{ \sqrt{ \frac{1}{N_1-1} \sum_{i=1}^{N_1}\left(  \tilde S^{\pi_i(0)}_{N_0,i} - \frac{1}{M} \sum_{j=1}^M  \tilde S^{\pi_i(j)}_{N_0,i} - \tilde \tau^\pi \right)^2}}.
\end{eqnarray}

Let $\widetilde{Pr}$ denote the probability measure induced by the permutations, and $\widetilde{\mathbb{E}}$ the corresponding expectation conditional on a realization of the random variables $X_i$ and $\epsilon_i$. Define  $\widetilde F_{T^\ast}(t) \equiv \widetilde{Pr} \left( \sqrt{N_1} \widetilde T^\ast  \leq t \right) $. Note that, given the realizations of $X_i$ and $\epsilon_i$, we can derive $\widetilde F_{T^\ast}(t) $ by considering all permutations $\pi \in \mathbf{ \tilde G}$ with equal probabilities, or approximate that as well as we want by drawing bootstrap samples of $\pi \in \mathbf{ \tilde G}$.  Given that $\sqrt{N_1} \widetilde T \buildrel p \over \rightarrow N(0,1)$, we only need to show that  $\mbox{sup}_{t \in \mathbb{R}} \left| \widetilde F_{T^\ast}(t) -  \Phi(t)  \right| \buildrel p \over \rightarrow 0$.

Similar to the proof of Proposition \ref{sign_changes_asympt}, note that 
\begin{eqnarray}
\sqrt{N_1} \widetilde T^\ast &=&  \frac{ \frac{1}{\sqrt{N_1}} \sum_{i=1}^{N_1}\left(  \tilde S^{\pi_i(0)}_{N_0,i} - \frac{1}{M} \sum_{j=1}^M  \tilde S^{\pi_i(j)}_{N_0,i} \right)}{\left(\frac{1}{N_1} \sum_{i \in \mathcal{I}_1} \left[ \sum_{j=0}^M \frac{1}{M+1}\left(  \tilde S^{j}_{N_0,i} - \frac{1}{M} \sum_{j' \neq j}  \tilde S^{j'}_{N_0,i} \right)^2 \right] \right)^{1/2}   } \times \\
&& \times    \left(\frac{\frac{1}{N_1} \sum_{i \in \mathcal{I}_1} \left[ \sum_{j=0}^M \frac{1}{M+1}\left(  \tilde S^{j}_{N_0,i} - \frac{1}{M} \sum_{j' \neq j}  \tilde S^{j'}_{N_0,i} \right)^2 \right]    }{ { \frac{1}{N_1-1} \sum_{i=1}^{N_1}\left(  \tilde S^{\pi_i(0)}_{N_0,i} - \frac{1}{M} \sum_{j=1}^M  \tilde S^{\pi_i(j)}_{N_0,i} - \tilde \tau^\pi \right)^2}}\right)^{1/2}.
\end{eqnarray}

Note that 
\begin{eqnarray} \nonumber
\widetilde{\mathbb{E}} \left[ \tilde S^{\pi_i(0)}_{N_0,i} - \frac{1}{M} \sum_{j=1}^M  \tilde S^{\pi_i(j)}_{N_0,i} \right] = 0\mbox{, and } \widetilde{\mathbb{E}} \left[ \left(\tilde S^{\pi_i(0)}_{N_0,i} - \frac{1}{M} \sum_{j=1}^M  \tilde S^{\pi_i(j)}_{N_0,i} \right)^2 \right] = \sum_{j=0}^M \frac{1}{M+1}\left(  \tilde S^{j}_{N_0,i} - \frac{1}{M} \sum_{j' \neq j}  \tilde S^{j'}_{N_0,i} \right)^2 .
\end{eqnarray}

Therefore, following similar steps as in the proof of Proposition \ref{sign_changes_asympt}, we have that, with probability one, the realizations of $ \tilde S^{j}_{N_0,i} $ are such that $\widetilde{Pr} \left(\sqrt{N_1} \widetilde T^\ast \leq t \right) \buildrel p \over \rightarrow \Phi(t)$, which completes the proof.
\end{proof}

Therefore, the test is asymptotically valid under stronger assumption for any any fixed $N_1$, and valid under weaker assumptions when $N_1$  diverges.  Note that a permutation tests based on  $ \tilde T( \tilde S^\pi_{N_0})$ remains invalid if Assumption \ref{Assumption_iid} does not hold even when $N_1 \rightarrow \infty$, highlighting the importance of the choice of the test statistic.

\begin{remark}
\normalfont
Remark \ref{ties} also applies to this test.
\end{remark}

\begin{remark}
\normalfont  \label{finite_sample_adjustment}

Similar to Remark \ref{finite_sample_adjustment2}, we propose a finite sample fix in the permutation test. If a control observation is the nearest neighbor for two or more treated observations, then we  restrict to permutations of $ S_{N_0}$ such that this control observation is always placed as either treated or control.  Since the probability that two treated observations share the same nearest neighbor goes to zero when $N_1$ is fixed and $N_0 \rightarrow \infty$, for a fixed $M$, this finite sample adjustment is asymptotically irrelevant. Such adjustment prevents over-rejection with finite $N_0$ in cases different treated observations share the same nearest neighbor.

\end{remark}

\begin{remark}
\normalfont 
This test is also asymptotically valid for bias-corrected matching estimators, as presented in equation (\ref{biasadj}). In this case, we define  $\tilde S^0_{N_0,i}=Y_i-c$ and $\tilde S^m_{N_0,i} = Y^i_{(m)} + \hat \mu_0(X_i) - \hat \mu_0(X^i_{(m)})$. The key idea is that, again, $\tilde S_{N_0,i}^m = Y^i_{(m)} + \hat \mu_0(X_i) - \hat \mu_0(X^i_{(m)}) \buildrel d \over \rightarrow Y_i(0)|X_i$, for all $m=1,...,M$, because $\hat \mu_0(X_i) - \hat \mu_0(X^i_{(m)}) \buildrel p \over \rightarrow 0$.
\end{remark}

\begin{remark}
\normalfont
When $M=1$, the test based on sign changes  and the one based on  permutations are numerically equivalent.

\end{remark}

\subsection{Conditional Average Treatment Effects on the Treated (CATT)} \label{Appendix_CATT}

An alternative estimand we can consider in this setting is the average treatment effect on the treated, conditional on the realization of the covariates of the treated. Without loss of generality, let $i=1,...,N_1$ be the treated observations, and let  $\mathbf{X} = [X_1 ~ X_2 ~ \cdots ~ X_{N_1}]$. Fix $ \bar{\mathbf{x}} =  [\bar x_1 ~ \bar x_2 ~ \cdots ~ \bar x_{N_1}] \in \otimes_{i=1}^{N_1} \mathbb{X}_1 $. Then, we define the conditional average treatment effect on the treated (CATT) by
\begin{eqnarray}
\tau(\bar{\mathbf{x}}) \equiv \frac{1}{N_1} \sum_{i \in \mathcal{I}_1}  \mathbb{E} \left[ Y_i(1) - Y_i(0) | X_i = \bar x_i  \right] =  \frac{1}{N_1}  \sum_{i \in \mathcal{I}_1} \left[  \mu_1(\bar x_i)  - \mu_0 ( \bar x_i)   \right].
\end{eqnarray}

In this case, we can show that the NN matching estimator is asymptotically unbiased for this estimand. We consider a fixed $\bar{\mathbf{x}}$ such that $\bar x_i \neq \bar x_j$ for all $i \neq j$. Since the distribution of $X_i$ is continuous, the realization $\bar{\mathbf{x}}$ satisfies this condition with probability one.

\begin{proposition} 
\label{Prop_conditional}
Let $\bar{\mathbf{x}} \in \otimes_{i=1}^{N_1} \mathbb{X}_1$ such that $\bar x_i \neq \bar x_j$ for all $i \neq j$. 

(1) Under Assumptions \ref{sample}, \ref{CIA},  \ref{overlap}, and \ref{assumption_mu}(a), $\mathbb{E}[\hat \tau | \mathbf{X} = \bar{\mathbf{x}}] \rightarrow \tau( \bar{\mathbf{x}}) $ when $N_0 \rightarrow \infty$ and $N_1$ is fixed.

(2) Under Assumptions \ref{sample}, \ref{CIA},  \ref{overlap}, and \ref{assumption_mu}(b),  conditional on $ \mathbf{X} = \bar{\mathbf{x}}$,
\begin{eqnarray} \nonumber
\hat \tau \buildrel d \over \rightarrow  \tau(\bar{\mathbf{x}}) + \frac{1}{N_1}  \sum_{i \in \mathcal{I}_1} \left( \epsilon_i - \frac{1}{M}  \sum_{m=1}^M \eta_i^m \right)  \mbox{ when $N_0 \rightarrow \infty$ and $N_1$ is fixed}, 
\end{eqnarray}
where $\eta_i^m  \buildrel d \over = (Y(0) - \mu_0(\bar x))|(X = \bar x_i, W=0) $ for $i \in \mathcal{I}_1$, and $\eta_i^m$ is independent across  $m$ and $i$, and independent of all $\epsilon_j$.

\end{proposition}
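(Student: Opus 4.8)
The plan is to follow the proof of Proposition \ref{unbiased} essentially verbatim, except that we condition throughout on $\mathbf{X} = \bar{\mathbf{x}}$ rather than averaging over the covariates of the treated. The one genuinely new ingredient is that, because $\bar x_i \neq \bar x_j$ for $i \neq j$, Lemma \ref{Lemma_no_shared} still applies (indeed $\min_{i \neq j}\|\bar x_i - \bar x_j\|$ is now a fixed positive constant), so the per-observation terms remain asymptotically independent across $i$.

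For Part 1, I would write $\hat\tau$ in its decomposed form and condition on $\mathbf{X} = \bar{\mathbf{x}}$ together with the whole control covariate sample $\{X_j\}_{j \in \mathcal{I}_0}$, which renders the matched sets $\mathcal{J}_M(i)$ deterministic. The treated errors $\epsilon_i$ (mean zero given $X_i$, $W_i = 1$, and independent of the controls) and the matched control errors $\epsilon_j$ (mean zero given $X_j$) then drop out in expectation, leaving $\frac{1}{N_1}\sum_i(\mu_1(\bar x_i) - \frac{1}{M}\sum_{j \in \mathcal{J}_M(i)}\mu_0(X_j))$. Averaging back over $\{X_j\}_{j \in \mathcal{I}_0}$ and using that the conditional law of $X^i_{(m)}$ given $X_i = \bar x_i$ is unaffected by conditioning on $\{X_j\}_{j \neq i}$ (Assumption \ref{sample}), I obtain
\begin{equation*}
\mathbb{E}[\hat\tau \mid \mathbf{X} = \bar{\mathbf{x}}] = \frac{1}{N_1}\sum_{i \in \mathcal{I}_1}\left(\mu_1(\bar x_i) - \mathbb{E}\left[\frac{1}{M}\sum_{m=1}^M \mu_0(X^i_{(m)}) \,\middle|\, X_i = \bar x_i\right]\right),
\end{equation*}
which is exactly the quantity analyzed around equation (\ref{eq_unbiased}): by Lemma \ref{Lemma_convergenceX}, $X^i_{(m)} \buildrel p \over \rightarrow \bar x_i$, and since $\mu_0$ is continuous and bounded on the compact set $\mathbb{X}_0$, bounded convergence gives $\mathbb{E}[\mu_0(X^i_{(m)}) \mid X_i = \bar x_i] \to \mu_0(\bar x_i)$. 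Hence $\mathbb{E}[\hat\tau \mid \mathbf{X} = \bar{\mathbf{x}}] \to \frac{1}{N_1}\sum_i(\mu_1(\bar x_i) - \mu_0(\bar x_i)) = \tau(\bar{\mathbf{x}})$.

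For Part 2, I would decompose, conditional on $\mathbf{X} = \bar{\mathbf{x}}$,
\begin{equation*}
\hat\tau = \tau(\bar{\mathbf{x}}) + \frac{1}{N_1}\sum_{i \in \mathcal{I}_1}\left[\left(\mu_0(\bar x_i) - \frac{1}{M}\sum_{m=1}^M \mu_0(X^i_{(m)})\right) + \left(\epsilon_i - \frac{1}{M}\sum_{m=1}^M \eta^m_{(i)}\right)\right],
\end{equation*}
with $\eta^m_{(i)} \equiv Y^i_{(m)} - \mu_0(X^i_{(m)})$ and $\epsilon_i$ the (fixed, $N_0$-invariant) treated error. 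The first bracket is $o_p(1)$ by Lemma \ref{Lemma_convergenceX} and continuity of $\mu_0$. For the second, the argument in the proof of Proposition \ref{unbiased} applies block by block: conditional on $X_i = \bar x_i$, the components of $(\epsilon_i, Y^i_{(1)}, \dots, Y^i_{(M)})$ are mutually independent (the treated error is independent of the control sample; with continuous covariates the $M$ matches are a.s. distinct control units, whose outcomes are independent given covariates by Assumption \ref{sample}), and by Lemma \ref{Lemma_convergenceY} together with Slutsky, $\eta^m_{(i)} \buildrel d \over \rightarrow \eta^m_i$ with $\eta^m_i \buildrel d \over = (Y(0) - \mu_0(\bar x_i)) \mid (X = \bar x_i, W = 0)$; hence $(\epsilon_i, \eta^1_{(i)}, \dots, \eta^M_{(i)}) \buildrel d \over \rightarrow (\epsilon_i, \eta^1_i, \dots, \eta^M_i)$ with independent components. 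To combine the blocks, let $\Omega$ be the event of no shared nearest neighbor; by Lemma \ref{Lemma_no_shared}, $Pr(\Omega) \to 1$, and on $\Omega$ the matched control units are distinct across $i$, so the blocks are exactly mutually independent. Conditioning on $\Omega$, passing to the limit in each block, and unconditioning (harmless since $Pr(\Omega^c) \to 0$) gives joint convergence of the array to mutually independent blocks; the continuous mapping theorem together with the $o_p(1)$ bias term then yields $\hat\tau \buildrel d \over \rightarrow \tau(\bar{\mathbf{x}}) + \frac{1}{N_1}\sum_i(\epsilon_i - \frac{1}{M}\sum_m \eta^m_i)$. The main obstacle — and the only step needing care — is precisely this stitching: upgrading the within-block convergence from Lemma \ref{Lemma_convergenceY} plus the asymptotic across-block independence from Lemma \ref{Lemma_no_shared} into joint convergence in distribution; conditioning on $\Omega$ is what makes it clean.
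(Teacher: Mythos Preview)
Your proposal is correct and follows essentially the same approach as the paper, which simply points back to equation (\ref{eq_unbiased}) for Part 1 and to Lemma \ref{Lemma_convergenceY} for Part 2. The only minor difference is in how you establish the asymptotic independence across blocks: the paper argues directly that, since the $\bar x_i$ are fixed distinct points with minimum pairwise distance $c>0$, and since $d(\bar x_i,X^i_{(M)}) \to 0$ in probability by Lemma \ref{Lemma_convergenceX}, the probability of a shared neighbor vanishes --- this is simpler than invoking Lemma \ref{Lemma_no_shared}, whose statement concerns the \emph{unconditional} probability over random treated covariates (though you correctly flag the key simplification at the outset).
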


\begin{proof}
The first part of this proposition was already proved as an intermediate step in the Proof of Proposition \ref{unbiased} (equation \ref{eq_unbiased}). The second part of this proposition also follows from Lemma \ref{Lemma_convergenceY}. Now we just have to show that $\theta_i^m$ is independent across $m$ and $i$. Since $\bar{\mathbf{x}}$ is such that $\bar x_i \neq \bar x_j$ for all $i \neq j$,  it must be that there is a $c>0$ such that $d(\bar x_i,\bar x_j)>c$ for all $i,j \in \mathcal{I}_1$ with $i \neq j$. However, we know that $Pr(d(\bar x_i,X^i_{(m)})>e) \rightarrow 0$ for all $e >0$. Therefore, the probability that $k \in \mathcal{I}_0$ belongs to $\mathcal{J}_M(i)$ and $\mathcal{J}_M(j)$ converges to zero.  
\end{proof}

We can also consider the conditions in which the sign-changes test is valid to test hypotheses of the type $\tau(\bar{\mathbf{x}}) = c$. In this case, instead of Assumption \ref{Assumption_symmetry}, we have to consider a strong assumption. 

\begin{assumption} \label{Assumption_cond}
Let $\xi_x \sim (Y(1) - \mu_1(x)) | (X=x,W=1)$ and  $\eta^m_x  \sim (Y(0) - \mu_0(x)) | (X=x,W=0)$, and consider $(\xi_x,\eta^1_x, \cdots,\eta^M_x)$ mutually independent. Then  $\xi_x - \frac{1}{M} \sum_{m=1}^M \eta^m_x$  is symmetric around zero and $\mu_1(x) - \mu_0(x) = \tau$ for all $x \in \mathbb{X}_1$.

\end{assumption}

Note that this assumption implies that  treatment effects are homogeneous (instead of assuming that the distribution of treatment effects is symmetric). It also generally  implies that  the distribution of potential outcomes, conditional on $X=x$, are symmetric around their conditional mean. If $M=1$, then this assumption is also satisfied if $\xi_x $ and $ \eta^m_x$ have the same distribution, even if this distribution is not symmetric.

\begin{proposition} 
Let $\bar{\mathbf{x}} \in \otimes_{i=1}^{N_1} \mathbb{X}_1$ such that $\bar x_i \neq \bar x_j$ for all $i \neq j$, and suppose Assumptions \ref{sample}, \ref{CIA},  \ref{overlap},  \ref{assumption_mu}(b), and  \ref{Assumption_cond} hold. Assume also that the distribution of $Y$ is  continuous. If we consider the problem of testing $H_0:  \tau(\bar{\mathbf{x}}) =c$, then, for any $\alpha \in (0,1)$, $\mbox{limsup}_{N_0 \rightarrow \infty} \mathbb{E} \left[  \phi( S_{N_0}) | \mathbf{X} = \bar{\mathbf{x}} \right]  \leq \alpha$ when $N_1$ is fixed.

\end{proposition}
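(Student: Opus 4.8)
The plan is to run the argument from the proof of Proposition~\ref{test} essentially verbatim, but conditionally on $\mathbf{X}=\bar{\mathbf{x}}$ and with Assumption~\ref{Assumption_symmetry} replaced by Assumption~\ref{Assumption_cond}. First I would fix $i\in\mathcal{I}_1$ and, exactly as in the proof of Proposition~\ref{Prop_conditional}(2), record that, conditional on $X_i=\bar x_i$,
\[
\hat\tau_i^{N_0}=\Bigl(Y_i-\tfrac{1}{M}\sum_{m=1}^M Y_{(m)}^i\Bigr)-c
\;\buildrel d \over \rightarrow\;
\bigl(\mu_1(\bar x_i)-\mu_0(\bar x_i)-c\bigr)+\xi_{\bar x_i}-\tfrac{1}{M}\sum_{m=1}^M\eta^m_{\bar x_i},
\]
with $\xi_{\bar x_i},\eta^1_{\bar x_i},\dots,\eta^M_{\bar x_i}$ mutually independent: $Y_i$ contributes $\mu_1(\bar x_i)+\xi_{\bar x_i}$, each $Y_{(m)}^i$ contributes $\mu_0(\bar x_i)+\eta^m_{\bar x_i}$ by Lemma~\ref{Lemma_convergenceY} together with Lemma~\ref{Lemma_convergenceX}, and asymptotic independence across the $M+1$ coordinates follows from Assumption~\ref{sample}.

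Next I would bring in Assumption~\ref{Assumption_cond} and the null. Because Assumption~\ref{Assumption_cond} imposes $\mu_1(x)-\mu_0(x)=\tau$ for every $x\in\mathbb{X}_1$, we have $\mu_1(\bar x_i)-\mu_0(\bar x_i)=\tau=\tau(\bar{\mathbf{x}})$, and under $H_0:\tau(\bar{\mathbf{x}})=c$ this equals $c$; hence the location term vanishes for every $i$, so $\hat\tau_i^{N_0}\buildrel d \over \rightarrow \xi_{\bar x_i}-\tfrac{1}{M}\sum_{m=1}^M\eta^m_{\bar x_i}$, which is symmetric around zero by the symmetry part of Assumption~\ref{Assumption_cond}. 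Then, as in the proof of Proposition~\ref{Prop_conditional}(2), since $\bar x_i\neq\bar x_j$ for $i\neq j$ there is a strictly positive minimum pairwise distance among $\bar x_1,\dots,\bar x_{N_1}$, while $d(\bar x_i,X^i_{(M)})\buildrel p \over \rightarrow 0$; therefore the probability that two treated units share a nearest neighbour tends to zero, and conditional on $\mathbf{X}=\bar{\mathbf{x}}$ the vector $S_{N_0}=(\hat\tau_1^{N_0},\dots,\hat\tau_{N_1}^{N_0})'$ converges in distribution to a vector of mutually independent, individually symmetric coordinates. Being a product of independent symmetric laws, this conditional limit is invariant to the group $\textbf{G}=\{-1,1\}^{N_1}$ of sign changes.

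It then remains to verify the regularity conditions of Theorem~3.1 in \cite{Canay} and apply it conditionally on $\mathbf{X}=\bar{\mathbf{x}}$: the map $S\mapsto T(S)$ is continuous where its denominator is nonzero; since $Y$ is continuous, each coordinate of the limiting vector has $\xi_{\bar x_i}$ as an independent continuous summand, hence is continuously distributed, so the limit is a continuous random vector and the denominator is almost surely positive; and for distinct $g,g'\in\textbf{G}$ either $g'=\pm g$, in which case $T(gS)=T(g'S)$ identically, or $Pr(T(gS)\neq T(g'S))=1$ by continuity of the limiting $S$. Theorem~3.1 of \cite{Canay} then gives $\limsup_{N_0\to\infty}\mathbb{E}[\phi(S_{N_0})\mid\mathbf{X}=\bar{\mathbf{x}}]\le\alpha$. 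The one point that genuinely matters — and the reason Assumption~\ref{Assumption_cond} is needed in place of Assumption~\ref{Assumption_symmetry} — is that here the per-unit bias $\mu_1(\bar x_i)-\mu_0(\bar x_i)-c$ must vanish \emph{pointwise} under the null, with no averaging over the distribution of $X$ available, which forces treatment-effect homogeneity; symmetry must correspondingly be imposed on the exact error combination $\xi_x-\tfrac{1}{M}\sum_{m=1}^M\eta^m_x$ appearing in the conditional limit rather than on the potential outcomes and the effect distribution separately.
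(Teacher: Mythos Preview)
Your proposal is correct and follows essentially the same approach as the paper's proof: establish the conditional limiting distribution of each $\hat\tau_i^{N_0}$ via Lemmas~\ref{Lemma_convergenceX}--\ref{Lemma_convergenceY}, use Assumption~\ref{Assumption_cond} to get per-coordinate symmetry about zero under the null, invoke the no-shared-neighbour argument from Proposition~\ref{Prop_conditional} for asymptotic independence, and then check the continuity and tie-handling hypotheses of Theorem~3.1 in \cite{Canay}. Your write-up is in fact slightly more explicit than the paper's in justifying why the limiting vector is continuous and in explaining why treatment-effect homogeneity (rather than mere symmetry of the effect distribution) is forced in the conditional setting.
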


\begin{proof}
We apply again Theorem 3.1 from  \cite{Canay}. We first show that, when $N_0 \rightarrow \infty$, the limiting distribution of $S_{N_0}$, conditional on $\mathbf{X} = \bar{\mathbf{x}}$, is invariant to sign changes  under the null. Let $S$ be this limiting distribution.  This is true if, asymptotically, $\hat \tau_i$ and $\hat \tau_j$ are independent for $i \neq j$, and the distribution of  $\hat \tau_i$ is symmetric around zero. It is not necessary for   $\hat \tau_i$ to have the same distribution across $i$.
Under the null, the asymptotic distribution of $\hat \tau_i^{N_0} $  conditional on $X_i = \bar x_i$  is given by  $ \xi_{\bar x_i} - \frac{1}{M} \sum_{m=1}^M \eta^m_{\bar x_i}$, which is symmetric around zero given Assumption \ref{Assumption_cond}. Moreover, Proposition \ref{Prop_conditional} also shows that, asymptotically, $\hat \tau_i^{N_0}$ are independent across $i$. 

We also have that the test statistic function $ T(S)$ is continuous. Finally, we show that, for two distinct elements $g \in  G$ and $g' \in  G$, either $T( gS ) = T( g'S)$ for all possible realizations of $S$, or $Pr(T( gS) \neq T(g' S)) = 1$.
If $g$ and $g'$ are such that $g_i = g_i'$ for all $i$, or $g_i = - g_i'$ for all $i$, then $T(g S ) = T(g' S)$ for all possible realizations of $S$. Otherwise, given that $S$ is a continuous random variable, $Pr(T(g S ) \neq T(g' S)) = 1$. 
 \end{proof}

\subsection{Particular case: $Y(0)|X$ is normally distributed} \label{condition_normal}

Let  $Y \sim N(\theta, \sigma^2)$. We first want to show that $\tilde h(\theta,\sigma) = \mathbb{E} [h(Y) | \theta,\sigma]$ is continuous and bounded for any $h()$ continuous and bounded. In this case, 
\begin{eqnarray}
\tilde h(\theta,\sigma) =  \int  h(y) \frac{1}{\sqrt{2 \pi}} \frac{1}{\sigma} e^{-\frac{1}{2} \left( \frac{y - \theta}{\sigma} \right)^2} dy.
\end{eqnarray}

Let $g(y,\theta,\sigma) =  h(y) \frac{1}{\sqrt{2 \pi}} \frac{1}{\sigma} e^{-\frac{1}{2} \left( \frac{y - \theta}{\sigma} \right)^2} $. Since $h(y)$ is continuous and bounded, $g(y,\theta,\sigma)$ is integrable for all $(\theta,\sigma)$, and,  for all $y \in \mathbb{R}$, $g(y,\theta,\sigma)$ is continuous in $(\theta,\sigma)$. We now show that there is a neighborhood of $(\theta,\sigma)$ and an integrable function $q: \mathbb{R} \rightarrow \mathbb{R_+}$ such that, for all $(\theta,\sigma)$ in this neighborhood, $|g(y,\theta,\sigma)| \leq q(y)$. 

Consider the neighborhood of $(\theta,\sigma)$ given by  $(\theta-\delta,\theta+\delta) \times (\sigma-\delta,\sigma+\delta)$ (where $\delta$ is sufficiently small so that $\sigma-\delta>0$), and define
\begin{eqnarray}
q(y) = \begin{cases}  |h(y)|  \frac{1}{\sqrt{2 \pi}} \frac{1}{\sigma - \delta} e^{-\frac{1}{2} \left( \frac{y - (\theta+\delta)}{\sigma + \delta} \right)^2}  \mbox{, if } y> \theta + \delta \\  
 |h(y)|  \frac{1}{\sqrt{2 \pi}} \frac{1}{\sigma - \delta}   \mbox{, if } y \in [ \theta - \delta, \theta+\delta] \\
  |h(y)|  \frac{1}{\sqrt{2 \pi}} \frac{1}{\sigma - \delta} e^{-\frac{1}{2} \left( \frac{y - (\theta-\delta)}{\sigma + \delta} \right)^2}  \mbox{, if } y< \theta - \delta
 \end{cases}
\end{eqnarray}

For any $(\theta,\sigma) \in (\theta-\delta,\theta+\delta) \times (\sigma-\delta,\sigma+\delta)$, $ | g(y,\theta,\sigma) | \leq q(y)$,  and $q(y)$ is integrable. Therefore,  $h(\theta,\sigma) $ is continuous at any point $(\theta,\sigma)$. Moreover, since $h(y)$ is bounded,  $\tilde h(\theta,\sigma) $ is also bounded.

Now let $Y(0)|(X=x) \sim N(\theta(x) , \sigma(x))$, where $\theta(x)$ and $\sigma(x)$ are continuous functions. Since compositions of continuous functions are continuous, it follows that $\tilde h(x) =  \int  h(y) \frac{1}{\sqrt{2 \pi}} \frac{1}{\sigma(x)} e^{-\frac{1}{2} \left( \frac{y - \theta(x)}{\sigma(x)} \right)^2}dy $ is  bounded and continuous in $x$.

\subsection{Settings in which not all covariates are continuous} \label{Appendix_discrete}

The case in which we have some covariates that are discrete, but at least one covariate is continuous, can be dealt with by estimating  treatment effects within subsamples defined by the values of the discrete covariates, and then aggregating on such covariates, as argued by AI. The NN matching estimators within these covariate cells --- where the nearest neighbor is chosen based on the continuous covariates, conditional on having the same value of the discrete covariates --- satisfy all conditions for Propositions \ref{unbiased} and \ref{test}.  Note that, since we consider a setting with $N_1$ fixed and $N_0 \rightarrow \infty$, the probability that each treated observations has at least $M$ control observations with the same value for the discrete covariates converges to one.

We also consider the case in which we have variables with mixed distributions. Consider a setting with a single covariate that is equal to $0$ with positive probability, and that  has a continuous distribution in the rest of its support. In this case, we may have some treated observations with $X_i=0$ and some with $X_i \neq 0$. For those with $X_i=0$, we can use all control observations with $X_j=0$ as their matches. In this case, there would not be finite sample bias conditional on having $X_i=0$. For the treated observations with $X_i \neq 0$, we can look for the nearest neighbors as usual, so Proposition \ref{unbiased} would apply.\footnote{Note that, for any $X_i = \bar x \neq 0$, the probability that the nearest neighbor is a control observation with $X_j=0$ converges in probability to zero when $N_0 \rightarrow \infty$. } For the sign-changes test, we can consider the following strategy. We can partition the control observations with $X_j=0$ in $n$ mutually exclusive groups, where $n$ is the number of treated observations with $X_i=0$, and assign one of these partitions as the neighbors to each treated observation. Then we can proceed with  the sign-changes test as usual. Since there would not be overlap in the set of neighbors for different treated observations with $X_i=0$, the test would remain valid.

\section{``Jovem de Futuro Program'' \& Empirical MC} \label{JF}

\subsection{Description of the ``Jovem de Futuro Program'' }

We explore the validity of matching estimators and of different inferential methods in the estimation of the effects of an educational program in Brazil called ``Jovem de Futuro''. This application  provides an example of possible a setting with few treated and many control schools. In Section \ref{MC}, we  conduct an empirical Monte Carlo (MC) study based on this application (e.g. \cite{Huber}), while in Section \ref{empirical_application} we estimate the effects of the program using matching estimators. Here we provide more details on how the empirical MC study was designed, and on the empirical application.

Before we proceed,  we start with a brief description of the program, and we present some descriptive statistics (see \cite{Barros} for more details). The ``Jovem de Futuro'' program, an initiative of the ``Instituto Unibanco'' (Unibanco Institute), aims to improve the quality of education in Brazilian public schools. This is a three-year-long intervention based on two efforts: (i)  providing school managers with strategies and instruments to become more efficient and productive, and (ii) providing conditional cash transfers to schools.\footnote{The conditions are to improve students' performance on a standardized examination by the Institute at the end of each school year and to implement a participatory budget process in the school.} In 2007, the Unibanco Institute created and implemented the program in three schools in Sao Paulo. Then they implemented a few randomized control trials in the following years to evaluate the impact of the program. 

We focus on the 2010 implementation of the program, which took place in Rio de Janeiro and Sao Paulo. Schools in these two states were invited to participate in the program, knowing in advance that they would be randomly assigned to receive the program starting in 2010, or that they would be placed first as a control group and would start the program only in 2013. We use information from the 2007 to 2012 ``Exame Nacional do Ensino M\'{e}dio'' (ENEM), a national exam that evaluates high school students in Brazil, as a measure of students' proficiency.\footnote{It is not possible to identify the schools that participated in the ``Jovem de Futuro'' experiment using the public-access ENEM microdata before 2007. For this reason, we do not consider earlier implementations of the program in Minas Gerais and Rio Grande do Sul, because we would only have one year of pre-treatment outcome.}$^,$\footnote{For 2007 and 2008, we focus on the score on a 63-question multiple-choice test on various subjects (Portuguese, History, Geography, Math, Physics, Chemistry and Biology). Since 2009, the exam has been composed of 180 multiple-choice questions, equally divided into four areas of knowledge: languages, codes and related technologies; human sciences and related technologies; natural sciences and related technologies; and mathematics and its technologies.  In this case, we consider the average score for these four areas. For each year and for each state, we standardize the test scores based on the sample of students from the experimental control schools.  }  Focusing on  schools with test score information from 2007 to 2012, we have 15 treated schools in Rio de Janeiro and 39  in Sao Paulo, with the same number of control schools in each state.\footnote{We exclude one control and two treated schools from Sao Paulo because they lack information for at least one of these years.} 

Column 1 of Table \ref{summary_stat} presents the difference in test scores for treated and control experimental schools in Rio de Janeiro, and  column 3 shows the same difference for schools in Sao Paulo. Panel A presents this information for 2007 to 2009, which was before the intervention. For Rio de Janeiro, all differences are small and not statistically different from zero, as one would expect given random assignment. For Sao Paulo, however, there are significant differences in test scores in 2007 and 2008, suggesting that there may have been some problems in the assignment of treatment schools. Panel B presents the results for the three years after the implementation of the program. The comparison between treated and control schools suggest a null effect of the program in Rio de Janeiro, and a positive and significant effect in Sao Paulo. We should be careful in interpreting the results for Sao Paulo, however, due to the imbalances in pre-intervention test scores.\footnote{\cite{Leonardo_rosa} analyzes the   ``Jovem de Futuro'' program using a differences-in-differences approach, exploiting the experimental design of the program. He finds a positive and significant effect of the program for both Rio de Janeiro and Sao Paulo. There are a few differences in our analyses that justify the different results. First, we consider an intention to treat effect, including schools that abandoned the program after its implementation, while \cite{Leonardo_rosa} includes only strata with no attritors (see \cite{Ferman_attrition} for a discussion on potential bias from the exclusion of strata with attrition problems). Second, \cite{Leonardo_rosa}   considers an exam that was administered on the treated and control schools to evaluate this program. We are not able to use this dataset because this information is not available for  non-experimental schools. Finally, we aggregate our data at the school level, while  \cite{Leonardo_rosa} uses individual-level data.     }

Columns 2 and 4 of Table \ref{summary_stat} present differences in test scores for public  schools that did not participate in the experiment and schools in the experimental control group. In Rio de Janeiro, schools that (voluntarily) decided to participate in the experiment had better outcomes prior to the intervention, relative to other schools that did not participate in the experiment. In Sao Paulo, schools in the experimental control group were, on average, worse than the schools that did not participate in the experiment. Interestingly, Rio de Janeiro has 966 and Sao Paulo has 3481 non-experimental public schools, thus providing a setting with few treated  and many (non-experimental) control schools.

\subsection{Empirical Monte Carlo Study}  \label{Appendix_MC}

We consider an empirical MC study based on the ``Jovem de Futuro'' implementation.  We first estimate a probit model using schools' average test scores in the three years prior to the intervention as covariates. We estimate the probit model using the implementation of the program in Sao Paulo, which was a place where the program focused on attending schools with lower test scores, so treatment selection is a more severe problem in this case. We also include private schools to have a larger population for the simulation study.\footnote{Simulation results are similar if we include only public schools. }  Then we exclude the treated schools and draw placebo treatments for all schools in Brazil with a treatment selection process based on the estimated probit model. We have a population of 20,363 schools for this simulation study. Based on these simulations, we find, on average, a difference of $-0.32$ points in a standardized test score when we simply compare treated and control schools under this selection process, revealing that schools that participated in this program had, on average, worse test scores relative to other schools. 

For each realization of the placebo treatment, we control the number of treated and control observations by selecting a random sample of $N_1 \in \{ 5,10,25,50 \}$ treated and $N_0 \in \{ 50,500\}$ control schools. We then estimate the nearest neighbor matching estimator with $M \in \{1,4,10\}$ using three years of pre-intervention outcomes as matching variables. We also calculate rejection rates based on the asymptotic distribution derived by AI, and based on the randomization inference tests presented in Section 4.  For each scenario, we draw 10,000 samples.

\newpage

\section{Supplemental Tables}

 \setcounter{table}{0}
\renewcommand\thetable{A.\arabic{table}}

\setcounter{figure}{0}
\renewcommand\thefigure{A.\arabic{figure}}

\begin{table}[h!]

 \begin{center}
\caption{{\bf Empirical Monte Carlo Simulation: More Covariates}} \label{Table_more_covariates}

\begin{tabular}{lcccccccc}
\hline
\hline

     & \multicolumn{2}{c}{$M=1$}   & & \multicolumn{2}{c}{$M=4$} & & \multicolumn{2}{c}{$M=10$ }     \\ \cline{2-3} \cline{5-6}  \cline{8-9} 
     & $N_0=50$ & $N_0=500$ & & $N_0=50$ & $N_0=500$   &  & $N_0=50$ & $N_0=500$ \\

     & (1) &  (2)  & & (3) &  (4) & & (5) & (6)  \\ 
     
     \hline

\multicolumn{9}{c}{\textit{Panel A: $| \mbox{average   bias} \times 100 | $}} \\

$N_1=5$ & 3.181 & 1.703 &  & 3.999 & 2.280 &  & 4.625 & 2.509 \\
$N_1=10$ & 2.822 & 1.717 &  & 3.776 & 2.201 &  & 4.889 & 2.951 \\
$N_1=25$ & 3.005 & 1.744 &  & 3.656 & 2.196 &  & 4.538 & 2.657 \\
$N_1=50$ & 2.657 & 1.657 &  & 3.476 & 2.138 &  & 4.294 & 2.644 \\

\\
\multicolumn{9}{c}{\textit{Panel B: rejection rates based on AI  }} \\
$N_1=5$ & 0.217 & 0.218 &  & 0.211 & 0.221 &  & 0.223 & 0.222 \\
$N_1=10$ & 0.150 & 0.157 &  & 0.158 & 0.158 &  & 0.190 & 0.171 \\
$N_1=25$ & 0.136 & 0.130 &  & 0.159 & 0.145 &  & 0.184 & 0.156 \\
$N_1=50$ & 0.129 & 0.126 &  & 0.157 & 0.146 &  & 0.204 & 0.177 \\

\\
\multicolumn{9}{c}{\textit{Panel D: test based on RI. sign changes }} \\
$N_1=5$ & 0.059 & 0.067 &  & 0.003 & 0.052 &  & 0.000 & 0.024 \\
$N_1=10$ & 0.101 & 0.102 &  & 0.001 & 0.100 &  & 0.000 & 0.048 \\
$N_1=25$ & 0.123 & 0.105 &  & 0.000 & 0.111 &  & 0.000 & 0.011 \\
$N_1=50$ & 0.129 & 0.116 &  & 0.000 & 0.114 &  & 0.000 & 0.000 \\

\hline

\end{tabular}
\end{center}
{\footnotesize{Note:  This table replicates the simulations presented in Table \ref{EMC_text} with the difference that we add three additional covariates that are uncorrelated with the potential outcomes. 
}
}

\end{table}

\pagebreak

\begin{table}[h!]

\footnotesize
 \begin{center}
\caption{{\bf Empirical Monte Carlo Simulation: Alternative Tests)} } \label{Table_wild}

\begin{tabular}{lcccccccc}
\hline
\hline

     & \multicolumn{2}{c}{$M=1$}   & & \multicolumn{2}{c}{$M=4$} & & \multicolumn{2}{c}{$M=10$ }     \\ \cline{2-3} \cline{5-6}  \cline{8-9} 
     & $N_0=50$ & $N_0=500$ & & $N_0=50$ & $N_0=500$   &  & $N_0=50$ & $N_0=500$ \\

     & (1) &  (2)  & & (3) &  (4) & & (5) & (6)  \\ 
     
     \hline

\multicolumn{9}{c}{\textit{Panel A: test based on RI, permutations }} \\

$N_1=5$ & 0.052 & 0.068 &  & 0.097 & 0.098 &  & 0.074 & 0.106 \\
$N_1=10$ & 0.099 & 0.102 &  & 0.086 & 0.100 &  & 0.049 & 0.096 \\
$N_1=25$ & 0.103 & 0.101 &  & 0.057 & 0.091 &  & 0.021 & 0.076 \\
$N_1=50$ & 0.107 & 0.094 &  & 0.036 & 0.084 &  & 0.014 & 0.047 \\

\\

\multicolumn{9}{c}{\textit{Panel B: wild bootstrap, estimating $\mu_0(x)$ using all observations}} \\

$N_1=5$ & 0.124 & 0.110 &  & 0.161 & 0.154 &  & 0.174 & 0.178 \\
$N_1=10$ & 0.118 & 0.109 &  & 0.139 & 0.133 &  & 0.140 & 0.144 \\
$N_1=25$ & 0.120 & 0.104 &  & 0.128 & 0.122 &  & 0.126 & 0.119 \\
$N_1=50$ & 0.124 & 0.105 &  & 0.123 & 0.110 &  & 0.133 & 0.105 \\

\\
\multicolumn{9}{c}{\textit{Panel C: wild bootstrap, estimating $\mu_0(x)$ using anly nearest neighbors }} \\
$N_1=5$ & 0.219 & 0.203 &  & 0.186 & 0.166 &  & 0.186 & 0.178 \\
$N_1=10$ & 0.173 & 0.149 &  & 0.152 & 0.132 &  & 0.148 & 0.146 \\
$N_1=25$ & 0.162 & 0.120 &  & 0.150 & 0.124 &  & 0.144 & 0.121 \\
$N_1=50$ & 0.171 & 0.116 &  & 0.149 & 0.116 &  & 0.150 & 0.109 \\

\hline

\end{tabular}
\end{center}
{\footnotesize{Note:  This table presents rejection rates for the same simulations presented in Table \ref{EMC_text} using the permutation test proposed in Appendix \ref{Section_permutation} and  the wild bootstrap procedure proposed by \cite{Otsu}. Panel B estimates $\mu_0(x)$ using linear OLS for the full sample of controls, as  done by \cite{Otsu} in their simulations. Panel C estimates  $\mu_0(x)$ using linear OLS only for the sample of nearest neighbors, which is how the method is implemented by default in Stata.
}
}

\end{table}

\newpage

\pagebreak

\begin{table}[h!]

  \begin{center}
\caption{{\bf Empirical Monte Carlo Simulation: Test Power}} \label{EMC_power}

\begin{tabular}{lcccccccc}
\hline
\hline

     & \multicolumn{2}{c}{$M=1$}   & & \multicolumn{2}{c}{$M=4$} & & \multicolumn{2}{c}{$M=10$ }     \\ \cline{2-3} \cline{5-6}  \cline{8-9} 
     & $N_0=50$ & $N_0=500$ & & $N_0=50$ & $N_0=500$   &  & $N_0=50$ & $N_0=500$ \\

     & (1) &  (2)  & & (3) &  (4) & & (5) & (6)  \\ 
     
     \hline

\\
\multicolumn{9}{c}{\textit{Panel A: rejection rates based on AI (2006)  }} \\

$N_1=5$ & 0.477 & 0.500 &  & 0.557 & 0.604 &  & 0.551 & 0.633 \\
$N_1=10$ & 0.572 & 0.613 &  & 0.657 & 0.744 &  & 0.656 & 0.784 \\
$N_1=25$ & 0.775 & 0.859 &  & 0.848 & 0.948 &  & 0.838 & 0.958 \\
$N_1=50$ & 0.884 & 0.976 &  & 0.928 & 0.998 &  & 0.930 & 0.998 \\

\\
\multicolumn{9}{c}{\textit{Panel B: test based on RI, sign changes }} \\
$N_1=5$ & 0.162 & 0.221 &  & 0.014 & 0.237 &  & 0.000 & 0.130 \\
$N_1=10$ & 0.458 & 0.513 &  & 0.023 & 0.629 &  & 0.000 & 0.475 \\
$N_1=25$ & 0.731 & 0.829 &  & 0.003 & 0.915 &  & 0.000 & 0.516 \\
$N_1=50$ & 0.864 & 0.972 &  & 0.000 & 0.985 &  & 0.000 & 0.135 \\

\hline

\end{tabular}
\end{center}
\footnotesize Note: This table presents simulation results from the empirical MC study described in Section  \ref{Empirical_MC}, when we consider a homogeneous treatment effect of $0.20$ standard deviations in the individual-level test scores.  

\end{table}

\pagebreak

\begin{table}[h!]

  \begin{center}
\caption{{\bf Empirical Monte Carlo Simulation: 5\%-Test}} \label{Table_MC5}

\begin{tabular}{lcccccccc}
\hline
\hline

     & \multicolumn{2}{c}{$M=1$}   & & \multicolumn{2}{c}{$M=4$} & & \multicolumn{2}{c}{$M=10$ }     \\ \cline{2-3} \cline{5-6}  \cline{8-9} 
     & $N_0=50$ & $N_0=500$ & & $N_0=50$ & $N_0=500$   &  & $N_0=50$ & $N_0=500$ \\

     & (1) &  (2)  & & (3) &  (4) & & (5) & (6)  \\ 
     
     \hline

\\
\multicolumn{9}{c}{\textit{Panel A: rejection rates based on AI (2006)  }} \\

$N_1=5$ & 0.139 & 0.148 &  & 0.140 & 0.145 &  & 0.133 & 0.144 \\
$N_1=10$ & 0.093 & 0.098 &  & 0.084 & 0.089 &  & 0.090 & 0.090 \\
$N_1=25$ & 0.068 & 0.065 &  & 0.067 & 0.064 &  & 0.077 & 0.071 \\
$N_1=50$ & 0.063 & 0.055 &  & 0.071 & 0.062 &  & 0.082 & 0.064 \\

\\
\multicolumn{9}{c}{\textit{Panel B: test based on RI, sign changes }} \\
$N_1=5$ & 0.009 & 0.015 &  & 0.000 & 0.012 &  & 0.000 & 0.005 \\
$N_1=10$ & 0.049 & 0.053 &  & 0.000 & 0.046 &  & 0.000 & 0.024 \\
$N_1=25$ & 0.052 & 0.052 &  & 0.000 & 0.049 &  & 0.000 & 0.023 \\
$N_1=50$ & 0.053 & 0.050 &  & 0.000 & 0.052 &  & 0.000 & 0.004 \\

\\
\multicolumn{9}{c}{\textit{Panel C: test based on RI, permutations }} \\
$N_1=5$ & 0.009 & 0.016 &  & 0.050 & 0.052 &  & 0.040 & 0.056 \\
$N_1=10$ & 0.049 & 0.053 &  & 0.045 & 0.052 &  & 0.024 & 0.051 \\
$N_1=25$ & 0.053 & 0.049 &  & 0.025 & 0.045 &  & 0.009 & 0.035 \\
$N_1=50$ & 0.054 & 0.046 &  & 0.016 & 0.040 &  & 0.007 & 0.022 \\

\hline

\end{tabular}
\end{center}
\footnotesize Note: This table presents simulation results from the empirical MC study described in Section  \ref{Empirical_MC}. We present rejection rates for 5\%-tests when the null is true for the asymptotic test based on AI, and for the approximate randomization tests based on sign changes and on permutations.  

\end{table}

\pagebreak

\begin{table}[h!]

%\begin{footnotesize}
 \begin{center}
\caption{{\bf ``Jovem de Futuro'': Summary Statistics}} \label{summary_stat}

\begin{tabular}{lccccc}
\hline
\hline

& \multicolumn{2}{c}{Rio de Janeiro} & &  \multicolumn{2}{c}{Sao Paulo} \\    \cline{2-3} \cline{5-6} 
& Exp. Treated  &  Nonexp.  Control  & & Exp. Treated  &  Nonexp.  Control    \\
& - & - & &  - & - \\
& Exp. Control &  Exp. Control & &  Exp. Control &  Exp. Control   \\
     & (1) &  (2)  & & (3)  & (4) \\ \hline

\multicolumn{6}{c}{Panel A: Before treatment} \\ 

2007 & 0.040  & -0.091  &  & 0.116*** & 0.117*** \\
 & (0.111) & (0.082) &  & (0.042) & (0.034) \\

2008 & 0.006  & -0.136** &  & 0.091** & 0.061  \\
 & (0.098) & (0.059) &  & (0.041) & (0.046) \\
 
2009 & 0.026  & -0.122  &  & 0.030  & 0.096** \\
 & (0.111) & (0.079) &  & (0.053) & (0.045) \\
 
 \\

\multicolumn{6}{c}{Panel B: After treatment} \\ 
  
2010 & -0.063  & -0.197*** &  & 0.097* & 0.070* \\
 & (0.124) & (0.073) &  & (0.057) & (0.042) \\
 
2011 & 0.065  & -0.086  &  & 0.142*** & 0.112*** \\
 & (0.101) & (0.059) &  & (0.048) & (0.039) \\
 
2012 & 0.016  & -0.121** &  & 0.129** & 0.093** \\
 & (0.102) & (0.050) &  & (0.054) & (0.041) \\

\underline{\# of Schools} \\
~ ~ Exp. Treated & \multicolumn{2}{c}{15} & & \multicolumn{2}{c}{39} \\
~ ~ Exp. Control & \multicolumn{2}{c}{15} & & \multicolumn{2}{c}{39} \\
~ ~ Nonexp. Control & \multicolumn{2}{c}{966} & & \multicolumn{2}{c}{3481} \\

 \hline

\end{tabular}
   \end{center}
   {\footnotesize{Note: Columns 1 and 3 present differences in test scores between experimental treated and control schools, calculated using a regression with strata fixed effects,  for Rio de Janeiro and  Sao Paulo respectively. Columns 2 and 4 present differences between non-experimental public  schools and experimental control schools,  for Rio de Janeiro and Sao Paulo respectively. Test scores are normalized such that students in the experimental control group have zero mean and variance one for each year. From 2009 to 2012 there are separate test scores for math, Portuguese, natural sciences, and human sciences, so we use the average of these four scores. Robust standard errors in parentheses.  * significant at 10\%; ** significant at 5\%; *** significant at 1\%.

}
}
%\end{footnotesize}
\end{table}
%\end{landscape}

\begin{table}[h!]
 \begin{center}
\caption{{\bf MC simulation: relaxing the symmetry conditions ($\mu_1(x)=0$) }} \label{Table_MC1}
\begin{tabular}{lcccccccc}
\hline
\hline

     & \multicolumn{2}{c}{$M=1$}   & & \multicolumn{2}{c}{$M=4$} & & \multicolumn{2}{c}{$M=10$ }     \\ \cline{2-3} \cline{5-6}  \cline{8-9} 
     & AI & Sign-changes &   & AI & Sign-changes  &   & AI & Sign-changes \\

     & (1) &  (2)  & & (3) &  (4) & & (5) & (6)  \\ 
     \hline
     
\multicolumn{9}{c}{Panel A: $\mu_1(x) = 0$ and $\epsilon | (X=x,W=1) \sim N(0,1)$) (symmetric)} \\
$N_1=5$ & 0.217 & 0.072 &  & 0.210 & 0.057 &  & 0.203 & 0.053 \\
$N_1=10$ & 0.152 & 0.098 &  & 0.154 & 0.101 &  & 0.152 & 0.101 \\
$N_1=25$ & 0.118 & 0.097 &  & 0.112 & 0.091 &  & 0.114 & 0.092 \\
$N_1=50$ & 0.107 & 0.094 &  & 0.108 & 0.098 &  & 0.110 & 0.102 \\
\\
\multicolumn{9}{c}{Panel B:  $\mu_1(x) = 0$ and $\epsilon | (X=x,W=1)  \sim (\chi^2_8-8)/4$ } \\
$N_1=5$ & 0.209 & 0.070 &  & 0.219 & 0.068 &  & 0.221 & 0.064 \\
$N_1=10$ & 0.155 & 0.104 &  & 0.160 & 0.105 &  & 0.162 & 0.107 \\
$N_1=25$ & 0.116 & 0.094 &  & 0.125 & 0.104 &  & 0.130 & 0.112 \\
$N_1=50$ & 0.110 & 0.097 &  & 0.112 & 0.098 &  & 0.113 & 0.103 \\
\\
\multicolumn{9}{c}{Panel C:  $\mu_1(x) = 0$ and $\epsilon | (X=x,W=1)  \sim (\chi^2_1-1)/\sqrt{2}$ } \\
$N_1=5$ & 0.222 & 0.074 &  & 0.250 & 0.091 &  & 0.282 & 0.106 \\
$N_1=10$ & 0.166 & 0.111 &  & 0.185 & 0.133 &  & 0.208 & 0.155 \\
$N_1=25$ & 0.125 & 0.104 &  & 0.145 & 0.125 &  & 0.156 & 0.132 \\
$N_1=50$ & 0.113 & 0.102 &  & 0.115 & 0.104 &  & 0.120 & 0.113 \\

\hline
\end{tabular}
\end{center}
\footnotesize Note: This table presents rejection rates for 10\%-level tests for the MC simulations discussed in Section \ref{MC_simulations}. We present rejection rates based on the asymptotic test derived by AI and based on the sign-changes test. In all simulations, $X |(W=w) \sim N(0,1)$ for $w \in \{0,1\}$, $Y(0)|(X=x,W=0) \sim N(0,1)$ for all $x \in \mathbb{R}$, and  $\mu_1(x)=0$. Each panel presents results for different distributions of $(Y(1)-\mu_1(x)|(X=x,W=1)$. The implied distribution of $\kappa_i$ is symmetric in Panel A, and becomes more asymmetric when we go to Panel C. For each cell, we run 5000 simulations.

\end{table}

\end{document}